\documentclass[11pt]{article}
\usepackage[utf8]{inputenc}
\usepackage{graphicx}
\usepackage{subcaption}
\usepackage{amsfonts}
\usepackage{amsmath}
\usepackage{amssymb}
\usepackage{bm}
\usepackage{url}
\usepackage{fancyhdr}
\usepackage{float}
\usepackage{indentfirst}
\usepackage{enumerate}
\usepackage{dsfont,footmisc}
\usepackage{comment}
\usepackage{caption}
\usepackage{multirow}
\usepackage{tabularx}

\usepackage{setspace}

\usepackage[misc]{ifsym}

\usepackage{amsthm}
\usepackage{color}
\usepackage{natbib}
\usepackage[british]{babel}
\usepackage[colorlinks=true,citecolor=blue]{hyperref}
\usepackage[right,pagewise,mathlines]{lineno}

\newcommand{\E}{\mathbb{E}}
\newcommand{\R}{\mathbb{R}}

\newcommand{\p}{\mathbb{P}}

\renewcommand{\ge}{\geqslant}
\renewcommand{\le}{\leqslant}
\renewcommand{\geq}{\geqslant}
\renewcommand{\leq}{\leqslant}
\renewcommand{\epsilon}{\varepsilon}

\theoremstyle{plain}
\newtheorem{theorem}{Theorem}[section]

\newtheorem{lemma}{Lemma}[section]
\newtheorem{proposition}{Proposition}[section]

\theoremstyle{definition}
\newtheorem{definition}{Definition}[section]

\theoremstyle{remark}
\newtheorem{remark}{Remark}[section]

\theoremstyle{definition}

\numberwithin{equation}{section} \numberwithin{theorem}{section}
\renewcommand{\cite}{\citet}

\DeclareMathOperator*{\argmin}{arg\,min}
\newcommand{\be}{\begin{eqnarray}}
\newcommand{\ee}{\end{eqnarray}}
\newcommand{\by}{\begin{eqnarray*}}
\newcommand{\ey}{\end{eqnarray*}}
\newcommand{\bn}{\begin{enumerate}}
\newcommand{\en}{\end{enumerate}}
\newcommand{\bi}{\begin{itemize}}
\newcommand{\ei}{\end{itemize}}

\title{Conditional value-at-risk   under  reward-penalty mechanism   with applications to robust portfolio management}
\author{Jun Cai\thanks{Department of Statistics and Actuarial Science, University of Waterloo,  Canada. \Letter~{\url{jcai@uwaterloo.ca}}}, \  \  \ \  Tiantian Mao\thanks{Department of Statistics and Finance, School of Management,
University of Science and Technology of China, China. \Letter~{\url{tmao@ustc.edu.cn}}},\ \ \ \  Zhiqiao Song\thanks{Department of Statistics and Actuarial Science, University of Waterloo,  Canada. \Letter~{\url{z22song@uwaterloo.ca}}}
}

\date{\today}

\begin{document}

\maketitle

	\begin{abstract}

In this paper, we present robust portfolio selection models by incorporating a reward and penalty mechanism into portfolio management. We assume that the joint distribution of the losses of the underlying risky assets in a portfolio is uncertain but lies within a multivariate distribution set. Our goal is to identify optimal portfolio allocations by minimizing the worst-case conditional value-at-risk (CVaR) of portfolio loss under the reward and penalty mechanism and distribution uncertainty. Our models can also be used to investigate the problem of how to balance portfolio losses with  their   associated downside risk in portfolio management.  The core of such a robust portfolio selection model is the worst-case CVaR.
  We first derive an explicit closed-form expression for the worst-case CVaR under the reward-penalty mechanism, which generalizes several existing models and results regarding the worst-case CVaR, such as those studied in \cite{jagannathan1977minimax}, \cite{chen2011tight}, and \cite{cai2024worst}. 
  We then apply this expression to obtain optimal portfolio allocations that minimize the worst-case CVaR under both a classical mean-covariance-based multivariate distribution set and a generalized mean-covariance-based multivariate distribution set  introduced in \cite{kang2019data}.
  Additionally, we utilize real market data to illustrate the application of the proposed models and the corresponding optimal portfolio allocations in portfolio management. Our empirical experiments show that portfolios based on the proposed models have the potential to outperform those based on several existing related models. Furthermore, the results demonstrate that incorporating downside risk into portfolio loss helps better manage risk and can achieve higher investment returns than considering either downside risk or portfolio loss alone. {\color{black} Moreover, our experiments reveal the trade-off between improving expected portfolio return and controlling the worst-case portfolio CVaR.}  \\

\textbf{Keywords}: Conditional value-at-risk,   reward, penalty, downside risk, distribution uncertainty,   worst-case risk measure, convex optimization,   robust portfolio allocation.
  	\end{abstract}

\maketitle

\newpage

\section{Introduction}\label{sec-intro}

Optimal portfolio selection 
%, or the optimal allocation of wealth across various risky assets, 
has been an important research topic in quantitative risk management for finance.
 The classical mean-variance portfolio selection model, 
 %introduced by Harry Markowitz in the 1950s, 
 as described in \citet{markowitz1952}, aims to minimize portfolio variance subject to a targeted expected return.
  However, the risk measure of variance does not account for the tail or downside risk of a portfolio. To address this issue, several moment-based risk measures, such as semi-variance, lower partial moment (LPM), and upper partial moment (UPM), have been employed in optimal portfolio selection models (see, for example, \cite{roy1952safety}, \cite{bawa1975optimal}, \cite{MTXY93}, \cite{K11}, \cite{KLMY17}, \cite{LMYZ22}, and references therein). Additionally, tail-based mean-variance portfolio selection models have been proposed in the literature, including those in \cite{LZ10}, \cite{OL13}, \cite{EK21}, \cite{HWW24}, and references therein. With the advancement of modern risk management tools (see \cite{NFE15}), many quantile-based risk measures have been extensively applied to optimal portfolio selection. In particular, the quantile-based risk measures of value-at-risk (VaR) and conditional value-at-risk (CVaR), also referred to as tail value-at-risk (TVaR), have been widely used in optimal portfolio selection due to their ability to capture tail risk (see, for example, \cite{basak2001value}, \cite{rockafellar2002conditional}, and references therein).

Suppose  a portfolio manager at a financial institution invests  a certain amount of wealth   into 
 $n$ different risky assets.  Let $\boldsymbol{X} =(X_1,...,X_n)\in\mathbb{R}^n$ represent 
  the random losses 
  of these  $n$ assets.  The portfolio loss is computed as $\boldsymbol{\omega}^\top\boldsymbol{X}=\sum_{i=1}^{n}\omega_{i}X_{i},$ 
where $\boldsymbol{\omega} =(\omega_1,..., \omega_n)\in\mathbb{R}^n$,  
 $\omega_i$ represents the proportion of the   wealth allocated to  asset $i$  at the beginning of an investment period, and $X_i$ is the loss of asset $i$ at the end of the investment period for $i=1,...,n$.  
 We have  that    $\boldsymbol{\omega}^{\top}\boldsymbol{e}=1$, where $\boldsymbol{e}=(1,...,1)$ is an $n$-dimensional  vector 
with all elements equal to 1. 
 Note that in this paper, positive values of a loss random variable or vector indicate  losses, while  negative values represent gains or returns.

     An optimal  portfolio selection model based on a risk measure  is often formulated as the following optimization model:  
  \be
  \label{model-with-df}
  \min_{\boldsymbol{\omega}  \in \cal W} \, \rho(\boldsymbol{\omega}^{\top}\boldsymbol{X}),
  \ee
where $\rho$ is a risk measure and  ${\cal W}\subseteq \mathbb{R}^n$ is a feasible set of 
$\boldsymbol{\omega}$. To apply model \eqref{model-with-df} and calculate  
the risk measure $\rho(\boldsymbol{\omega}^{\top}\boldsymbol{X})$, the complete information about 
the distribution of  $\boldsymbol{X} $  is usually required.    However, the true distribution of $\boldsymbol{X} $  is often unknown. 
This challenge leads to the following robust portfolio selection model:
\be
  \label{model-no-df}
  \min_{\boldsymbol{\omega}  \in   \mathcal{W}} \, \sup_{G \in \mathcal{S}}  \, \rho(\boldsymbol{\omega}^{\top}\boldsymbol{X}^G),
  \ee
where $\cal S$ is the set of  all  possible  distributions of $\boldsymbol{X}$  consistent with the given information, and is often  referred 
to as an {\it uncertainty set} for $\boldsymbol{X}$. Throughout this paper, for a random variable (or vector)  $X$ (resp. $\boldsymbol{X}$), the notation     
 $X^F$ (resp. $\boldsymbol{X}^G$) indicates  that the distribution of  $X$ (resp. $\boldsymbol{X}$) is specified as $F$ (resp. $G$). 
 Additionally, for a univariate (multivariate) function   $h$,  
   the notation 
 $\rho(h(X^F))$ (resp. $\rho(h(\boldsymbol{X}^G))$)   signifies that  the risk measure $\rho$ of $h(X)$ (resp. $h(\boldsymbol{X})$) is  evaluated under the 
 distribution $F$ (resp. $G)$. 
In the robust portfolio selection  model \eqref{model-no-df},  the manager 
  assumes that complete information about the joint distribution of the risky assets is unavailable and that only partial information is known.
  She/he  assumes  
   that all possible joint distributions of the risky assets belong to  an  uncertainty set $\cal S$. An optimal  portfolio   is one that 
minimizes the supremum    value of the risk measure of portfolio loss. This  supremum   value is often referred to as the {\it  worst-case} risk measure, such as worst-case VaR and  worst-case CVaR.  

The robust optimization problem \eqref{model-no-df} has been extensively studied in the literature. For example, \cite{ZF09} investigated this problem using the CVaR risk measure and an uncertainty set comprising all convex combinations of a finite number of known distributions. \citet{bertsimas2009constructing} explored a polyhedral uncertainty set generated by a class of coherent risk measures.
 \citet{chen2011tight} addressed the issue by employing LPM and CVaR risk measures, with an uncertainty set defined by the given mean and covariance. \citet{Li2018} derived closed-form solutions for worst-case law-invariant risk measures and discussed the applications of the worst-case values to robust portfolio optimization. \cite{kang2019data} introduced a generalized mean-covariance-based uncertainty set and investigated robust mean-CVaR portfolio selection.  \citet{lotfi2018robust} examined the problem using VaR and CVaR risk measures under joint ambiguity in distribution, mean returns, and the covariance matrix. \citet{blanchet2022distributionally} proposed an optimal distributionally robust portfolio model that utilizes Wasserstein distance to measure discrepancies from the empirical distribution. \cite{B23} derived expressions for worst-case distortion risk measures and discussed their applications to robust portfolio optimization.  Additionally,  models and  supremum values of risk measures similar to those in \eqref{model-no-df}  also appear in many studies related to insurance. See, for example, \cite{KG86b}, \cite{H02}, \cite{BBBH15}, \cite{H15}, 
 \cite{A19}, 
 \cite{CRV18}, \cite{Liu20}, \cite{LM22}, and references therein.

The portfolio selection models \eqref{model-with-df} and \eqref{model-no-df} primarily focus on the risk measure of the portfolio loss, $\boldsymbol{\omega}^{\top}\boldsymbol{X}$, and do not account for the downside risk of  the portfolio loss.
%, which may arise in  different scenarios.
For example, in practice, a portfolio manager may receive  a reward or bonus if the portfolio performs well.
 Assume that the   manager  will  receive a reward or bonus  at a reward  rate $\beta \geq 0$ if the  return of the  
portfolio exceeds  a pre-specified target return of  $-d$,  specifically, if  $-\boldsymbol{\omega}^{\top}\boldsymbol{X}\geq-d$, or equivalently, if $\boldsymbol{\omega}^{\top}\boldsymbol{X}\leq d$. This  indicates that the  loss of the portfolio does not exceed the  threshold loss level $d$, where $d$ is a real number.    The reward amount  is determined  as $ \beta(-\boldsymbol{\omega}^{\top}\boldsymbol{X}-(-d))_{+} 
 = \beta(\boldsymbol{\omega}^{\top}\boldsymbol{X}-d)_{-}.$ This means the reward equals   $100\beta\%$ of the  return over the target return of $-d$.  Here,  and throughout this paper, for any  $x \in \R=(-\infty, \infty)$, $(x)_+=\max\{x, 0\}$ and  
 $(x)_-=\max\{-x, 0\}$. 
     Conversely,  if the portfolio's 
 performance  does not meet the target (i.e.,  $-\boldsymbol{\omega}^{\top}\boldsymbol{X}<-d$), the manager 
  incurs     a penalty at  a  rate $\lambda \ge 0$.  
 The penalty is computed as $\lambda(\boldsymbol{\omega}^{\top}\boldsymbol{X}-d)_{+},$
  indicating that  the penalty  amount is 
 equal to the $100\lambda \%$ of the portfolio  loss exceeding  the threshold loss level  of $d$.  Note that 
$(\boldsymbol{\omega}^{\top}\boldsymbol{X}- d)_+ $ is referred to as the downside risk of the portfolio.
Thus, with rewards and penalties, the net portfolio loss, denoted by  $L(\boldsymbol{\omega}, \boldsymbol{X})$, for the financial institution at allocation  $\boldsymbol{\omega}$ is given by 
\be
\label{institutional}
      L(\boldsymbol{\omega}, \boldsymbol{X})  & = & \boldsymbol{\omega}^{\top}\boldsymbol{X}+\beta (\boldsymbol{\omega}^{\top}\boldsymbol{X}-d)_{-}-\lambda(\boldsymbol{\omega}^{\top}\boldsymbol{X}-d)_{+} \notag\\
             &=&(1-\beta)\boldsymbol{\omega}^{\top}\boldsymbol{X}+(\beta-\lambda)(\boldsymbol{\omega}^{\top}\boldsymbol{X}-d)_{+}+\beta d.
    \ee
In this setting, it is interesting  to consider the risk measure of  the net portfolio loss  $L(\boldsymbol{\omega}, \boldsymbol{X})$. 
Therefore, to incorporate both rewards and penalties in portfolio management while addressing distributional uncertainty, we propose 
the following robust portfolio selection model:
  \be
  \label{model-no-df-RP}
\min_{\boldsymbol{\omega}  \in   \mathcal{W}} \sup_{G \in \mathcal{S}} 
   \rho(L(\boldsymbol{\omega}, \boldsymbol{X})) =   \min_{\boldsymbol{\omega}  \in   \mathcal{W}} \sup_{G \in \mathcal{S}} 
   \rho((1-\beta)\boldsymbol{\omega}^{\top}\boldsymbol{X}^G+(\beta-\lambda)(\boldsymbol{\omega}^{\top}\boldsymbol{X}^G-d)_{+}+\beta d).
  \ee 
If the risk measure $\rho$  satisfies translation invariance, then solving  problem \eqref{model-no-df-RP} is equivalent to solving the following problem: 
   \be
  \label{model-no-df-RP-Final}
  \min_{\boldsymbol{\omega}  \in \cal W} \sup_{G \in \cal S}  \rho((1-\beta)\boldsymbol{\omega}^{\top}\boldsymbol{X}^G+(\beta-\lambda)(\boldsymbol{\omega}^{\top}\boldsymbol{X}^G-d)_{+}).
  \ee  
Model \eqref{model-no-df-RP-Final} generalizes model \eqref{model-no-df} by incorporating rewards and penalties  into portfolio management.  %thereby enabling a more comprehensive analysis of portfolio performance. 
  
 Additionally, if the risk measure $\rho$ satisfies positive  homogeneity,
   by setting   $1-\beta >0$ and $\beta >  \lambda$  in  model \eqref{model-no-df-RP-Final}, we see that 
   model \eqref{model-no-df-RP-Final}  is equivalent to the following problem: 
     \be
  \label{model-no-df-RP-Final-kappa}
  \min_{\boldsymbol{\omega}  \in \cal W} \sup_{G \in \cal S}  \rho(\boldsymbol{\omega}^{\top}\boldsymbol{X}^G+\kappa (\boldsymbol{\omega}^{\top}\boldsymbol{X}^G-d)_{+}),
  \ee    
where,    throughout this paper,  for $0 \le \lambda < \beta < 1$,  we define  the parameter $\kappa > 0  $ as 
\be
\label{kappa}
\kappa=\frac{\beta-\lambda}{1-\beta}.
\ee

{\color{black} A reward-penalty mechanism, also known as a performance-based incentive, is a common approach in the finance industry for evaluating the performance of fund managers or investors. Typically, a fund manager receives a bonus or reward if the portfolio generates profits or exceeds a target value. This bonus is usually a percentage of the amount by which the portfolio surpasses the target, or the profits achieved, as represented by $ \beta (\boldsymbol{\omega}^{\top}\boldsymbol{X} - d)_{-} $ in our model \eqref{institutional}. 
    Although penalties are less common, they do exist in certain contexts. For example, in private funds, if the portfolio falls below a target level, the fund manager may be required to return a portion of previously awarded bonuses.
  This practice is known as a clawback provision. In particular,    when $\lambda = 0$, our  model \eqref{institutional}  reflects a scenario without penalties. When $\lambda > 0$, the model provides a potential framework for quantifying penalties. There is extensive literature on the implementation of bonuses and penalties in the finance industry and their impact on organizational behavior. 
 See, for example,  \cite{ADN09}, \cite{CGS23}, \cite{EGB03}, \cite{MY10}, \cite{SLT87}, and references therein.

It is also worth noting that, although our proposed model \eqref{model-no-df-RP-Final}, or its equivalent form  model \eqref{model-no-df-RP-Final-kappa},   originates from a reward-penalty mechanism, model \eqref{model-no-df-RP-Final-kappa} can be also formulated independently of this context. Specifically, it may arise from the broader problem of balancing portfolio losses with their associated downside risk. Indeed, model \eqref{model-no-df-RP-Final-kappa} can be directly interpreted from a risk management perspective by treating $\kappa > 0$ as a weight assigned to downside risk. In this interpretation, $\kappa$ is not necessarily linked to the reward rate $\beta$ or the penalty rate $\lambda$. Indeed,  it can be viewed as an endogenous parameter of the model.

Within the reward-penalty framework, the parameter $-d$ serves as a threshold that determines eligibility for a bonus payment. In contrast, in risk management, the parameter $d$ typically represents a risk metric such as the VaR or CVaR of the portfolio loss, 
which is a   
 quantity   of concern to the investor. The investor focuses  on both the downside risk, represented by $(\boldsymbol{\omega}^{\top}\boldsymbol{X}^G - d)_{+}$, and the overall portfolio loss $\boldsymbol{\omega}^{\top}\boldsymbol{X}^G$.
 }

 In most portfolio selection models, the objective is to minimize either a risk measure of the portfolio loss $\boldsymbol{\omega}^{\top}\boldsymbol{X}$
 or a risk measure of the downside risk $(\boldsymbol{\omega}^{\top}\boldsymbol{X} - d)_+$.
By employing model \eqref{model-no-df-RP-Final-kappa}, both the portfolio loss and its associated downside risk can be simultaneously considered in portfolio selection. 
As illustrated by the numerical experiments using real data in Section \ref{sec-num-exam}, portfolios constructed using model \eqref{model-no-df-RP-Final-kappa} have the potential to yield better performance compared to those based solely on model \eqref{model-no-df}.

% or the downside risk model above.

In this paper, we investigate the robust portfolio selection problem \eqref{model-no-df-RP-Final}, focusing on the CVaR risk measure and mean-covariance-based uncertainty sets. In Section \ref{sec-risk-set}, we introduce the mean-covariance-based uncertainty sets employed in this paper. In Section \ref{sec-WC-TVaR}, we derive the explicit expression for the worst-case CVaR under the reward-penalty mechanism. This expression recovers several known results regarding worst-case CVaR, such as those  in \cite{jagannathan1977minimax}, \cite{chen2011tight}, and \cite{cai2024worst}, and plays a crucial role in solving the robust portfolio selection problem \eqref{model-no-df-RP-Final}. In Section \ref{sec-optimal-selection}, we derive the explicit expression for the optimal allocation $\boldsymbol{\omega}^*$ in problem \eqref{model-no-df-RP-Final} when the uncertainty set for risky assets consists of distributions with a given mean vector and covariance matrix. 
In Section \ref{sec-num-exam}, we use real market data to compare investment strategies based on the proposed models and the optimal allocation $\boldsymbol{\omega}^*$
  derived from these models with those based on other optimal selection models related to ours, such as the classical mean-variance model and the models proposed in \cite{ZF09} and \cite{chen2011tight}. {\color{black} Concluding remarks are provided in Section \ref{sec-con}, and proofs of all results derived  in this paper are included in the appendix.   
    
  Throughout this  paper, we use $\mathds{1}_A$ to denote   the indicator function of a set or event $A$,  and $a\triangleq b$ to mean that $a$ is defined as $b$.  }

\section{Preliminaries} \label{sec-risk-set}

Two key components of the  robust  portfolio selection model \eqref{model-no-df-RP-Final}   are the  risk measure $\rho$ and the  uncertainty set ${\cal S}$. 
   In real life, the  distribution  of a loss random  vector $\boldsymbol{X}$ is often unknown,  and typically only partial information about the vector is available. Consequently, adopting a distributionally robust model that  utilizes   the available information becomes essential for addressing this uncertainty. 
  The following uncertainty set is commonly used in the literature to indicate that only the mean and covariance of $\boldsymbol{X}$ are available: 
\begin{equation}
\label{MV uncertainty set}
    \mathcal{M}(\boldsymbol{\mu},\boldsymbol{\Sigma})=\left\{G\in\mathcal{F}(\mathbb{R}^n): \ \mathbb{E}[\boldsymbol{X}^G]=\boldsymbol{\mu}, \ {\rm Cov}[\boldsymbol{X}^G]=\boldsymbol{\Sigma}\right\},
\end{equation}
where  $\boldsymbol{\mu} \in \mathbb{R}^n$, $\boldsymbol{\Sigma}$ is an $n\times n$ positive definite matrix, and $\mathcal{F}(\mathbb{R}^n)$ is the set of all $n$-dimensional distributions defined on $\mathbb{R}^n$.

In the uncertainty set   $ \mathcal{M}(\boldsymbol{\mu},\boldsymbol{\Sigma})$, the mean vector $\boldsymbol{\mu}$ and the covariance matrix $\boldsymbol{\Sigma} $ of the loss random  vector $\boldsymbol{X}$ are assumed to be  known, typically as their estimates   based on observed data. However, due to estimation errors, these estimates  may deviate from the true mean vector and covariance matrix of  $\boldsymbol{X}$. To address this issue,  generalized versions of $\mathcal{M}(\boldsymbol{\mu},\boldsymbol{\Sigma})$ have  been proposed in the literature. One such generalization was  presented in \citet{kang2019data} and is defined  as follows: 
\begin{equation}
\label{set-D}
    \mathbb{D}_n(\gamma_1,\gamma_2)=\left\{\begin{array}{ll}
    &\left(\mathbb{E}[\boldsymbol{X}^G]-\hat{\boldsymbol{\mu}}\right)^{\top} \hat{\boldsymbol{\Sigma}}^{-1}\left(\mathbb{E}[\boldsymbol{X}^G]-\hat{\boldsymbol{\mu}}\right) \leq \gamma_1, \\
G\in\mathcal{F}(\mathbb{R}^n): \ &\|{\rm Cov}(\boldsymbol{X}^G)-\hat{\boldsymbol{\Sigma}}\|_F \leq \gamma_2, \\
& {\rm Cov}(\boldsymbol{X}^G) \succ 0. 
\end{array}\right\},
\end{equation}
where $\gamma_1 \ge 0$ and $\gamma_2 \ge 0$ are constants. Furthermore,  let $\hat{\boldsymbol{\mu}} \in \R^n$ and 
 $\hat{\boldsymbol{\Sigma}}$ be an $n \times n$ positive definite matrix, which can be  interpreted as 
the inferred or estimated mean vector and covariance matrix of $\boldsymbol{X}$. Additionally, 
  $ \| {\rm Cov}(\boldsymbol{X}^G) -\hat{\boldsymbol{\Sigma}}\|_{F}$ represents 
 the Frobenius norm of the difference between the matrices ${\rm Cov}(\boldsymbol{X}^G) $ and $\hat{\boldsymbol{\Sigma}}$. 
  The Frobenius norm for an $n \times n$  matrix $\boldsymbol{A} = (a_{ij})_{n \times n}$  is defined as
 $ \|\boldsymbol{A}\|_{F}=\sqrt{\sum_{i=1}^n \sum_{j=1}^n  a^2_{ij}}=\sqrt{\text{tr}(\boldsymbol{A}^\top \boldsymbol{A})}.$

 Intuitively, in the uncertainty  set  $\mathbb{D}_n (\gamma_1,\gamma_2)$,  one  assumes  the mean vector  of $ \boldsymbol{X}$ lies in an ellipsoid of size $\gamma_1$ centered   
  at the inferred mean vector  $\hat{\boldsymbol{\mu}}$. Meanwhile,  the discrepancy between the covariance matrix  of 
  $\boldsymbol{X}$ 
  and its  inferred covariance matrix $\hat{\boldsymbol{\Sigma}}$ is 
bounded by a certain value $\gamma_2$ under the  Frobenius norm. These two parameters, 
 $\gamma_1$ and $\gamma_2$,  measure the level of ambiguity regarding the estimated mean vector and covariance matrix.
 For other  forms of uncertainty sets applicable to the loss vector, we refer 
    readers  to \cite{delage2010distributionally}, \cite{CLM25}, and the references therein.

      In this paper, we investigate problem \eqref{model-no-df-RP-Final}, or equivalently problem \eqref{model-no-df-RP-Final-kappa}, by  focusing  on the   commonly used   moment-based uncertainty 
       sets
      $ \mathcal{M}(\boldsymbol{\mu},\boldsymbol{\Sigma}) $ and  $ \mathbb{D}_n(\gamma_1,\gamma_2)$, as defined in \eqref{MV uncertainty set} and \eqref{set-D}.
     
The key to solving problem \eqref{model-no-df-RP-Final} lies in  solving the inner optimization problem of \eqref{model-no-df-RP-Final}, or equivalently, in determining the following  worst-case risk measure:  
\by
%\label{sup-risk-RP}
 \sup_{G \in \cal S}  \rho\big ((1-\beta)\boldsymbol{\omega}^{\top}\boldsymbol{X}^G+(\beta-\lambda)(\boldsymbol{\omega}^{\top}\boldsymbol{X}^G-d)_{+}\big).
 \ey

When ${\cal S}= \mathcal{M}(\boldsymbol{\mu},\boldsymbol{\Sigma})$, by applying Lemma \ref{le:A=B}, we establish the following equivalence, which relates the worst-case risk measure  of the portfolio loss  under the reward-penalty mechanism over  the uncertainty 
set  $\mathcal{M}(\boldsymbol{\mu},\boldsymbol{\Sigma})$, characterized by $n$-dimensional distributions,  to that   over  the uncertainty set 
$\mathcal{S}(\boldsymbol{\omega}^{\top}\boldsymbol{\mu}, \ \boldsymbol{\omega}^{\top}\boldsymbol{\Sigma}\boldsymbol{\omega})$, which consists of   univariate distributions:
\be
\label{Sup-risk-m=s,t}
    &&  \sup_{G \in \mathcal{M} (\boldsymbol{\mu},\boldsymbol{\Sigma})} \rho \big ((1-\beta)\boldsymbol{\omega}^{\top}\boldsymbol{X}^G+(\beta-\lambda)(\boldsymbol{\omega}^{\top}\boldsymbol{X}^G-d)_{+}\big ) \notag \\
     &=& \sup_{F \in \mathcal{S}(\boldsymbol{\omega}^{\top}\boldsymbol{\mu}, \ \boldsymbol{\omega}^{\top}\boldsymbol{\Sigma}\boldsymbol{\omega})} \rho \big (
     (1-\beta)X^F+(\beta-\lambda)(X^F-d)_{+}     
     \big ),
     \ee
where, throughout this paper, for any $\mu \in \R$ and $\sigma >0$, the (univariate) uncertainty set $ \mathcal{S}(\mu, \sigma)$  is defined as the set of all (univariate) distributions with mean $\mu$ and variance $\sigma^2$:
\begin{equation*}
    \begin{aligned}
        \mathcal{S}(\mu, \sigma) 
        &=\left\{F\in \mathcal{F}(\mathbb{R}): \ \mathbb{E}[X^F]=\mu, \ \mathbb{E}[(X^F)^2]=\mu^2+\sigma^2 \right\}.
    \end{aligned}
\end{equation*}

 The key to solving problem \eqref{Sup-risk-m=s,t} is to derive 
 the  expression for the following optimization  problem: 
\be
\label{WC-risk-model}
 \sup_{F \in \mathcal{S}(\mu, \, \sigma)} \rho\left(
     (1-\beta)X^F+(\beta-\lambda)(X^F-d)_{+}     
     \right)=
\sup_{F\in\mathcal{S}(\mu,\sigma)} \rho(\ell_d(X^F)).
  \ee
  Throughout this paper,  the  function $\ell_d(x)$ is defined for any $x \in \R$ as
\be
\label{ell}
    \ell_d(x) &= &(1-\beta) x+(\beta-\lambda)(x-d)_{+} \label{ldx1} 
    \ = \ (1-\beta)\min\left\{x,d\right\}+(1-\lambda)(x-d)_{+}. 
    \ee
This expression for problem  \eqref{WC-risk-model}  is the key to solving the robust portfolio selection problem  \eqref{model-no-df-RP-Final} 
 when 
${\cal S} =\mathcal{M}(\boldsymbol{\mu},\boldsymbol{\Sigma})$ or  ${\cal S}= \mathbb{D}_n(\gamma_1,\gamma_2)$.
Problem \eqref{WC-risk-model} itself is also of interest, as it represents the worst-case risk measure  of a loss random variable $X$,  weighted by its associated downside risk $(X-d)_+$. 

Two popular risk measures used in robust portfolio selection problems are  VaR and CVaR.   See, for instance,  \cite{ghaoui2003worst}, \cite{ZF09}, and references therein. 
For any loss random variable $X$ with distribution function  $F(x)=\Pr\left\{X\leq x\right\}$, the value-at-risk (VaR) of $X$ 
at a confidence level $\alpha \in (0,1)$ is defined as ${\rm VaR}_{\alpha}(X)=\inf\left\{x: F(x)\geq \alpha \right\},$ 
which represents the largest possible loss at this  confidence level.  
The  conditional  value-at-risk (CVaR)  or the tail value-at-risk (TVaR) of $X$ at a confidence level $\alpha \in (0,1)$ is defined as
$
{\rm CVaR}_{\alpha}(X) = \frac{1}{1-\alpha}\int_{\alpha}^{1}{\rm VaR}_{q}(X) dq, 
$
which represents  the average  magnitude of potential losses that exceed the  $\text{VaR}$ at the $\alpha$-confidence level.

When $\rho$ is  VaR, problem \eqref{WC-risk-model}  is reduced to 
 the following  worst-case VaR problem:
    \begin{equation}
\label{WC-VaR-model}
\sup_{F\in\mathcal{S}(\mu,\sigma)}{\rm VaR}_{\alpha}\left((1-\beta) X^F+(\beta-\lambda)(X^F-d)_{+}\right) =
\sup_{F\in\mathcal{S}(\mu,\sigma)}{\rm VaR}_{\alpha}^F(\ell_d(X^F)).
\end{equation}  
Note that for $ 0 \le  \beta \le   1$, and $0 \le \lambda < 1$,
   the function   $\ell_d(x)$ is continuous and increasing in  $x\in\mathbb{R}$.
 Therefore, 
    \by
 \sup_{F\in\mathcal{S}(\mu,\sigma)}{\rm VaR}_{\alpha}((1-\beta) X^F+(\beta-\lambda)(X^F-d)_{+})    = \sup_{F\in\mathcal{S}(\mu,\sigma)}\ell_d({\rm VaR}_{\alpha}(X^F))
       = \ell_d\Big (\sup_{F\in\mathcal{S}(\mu,\sigma)}{\rm VaR}_{\alpha}(X^F) \Big ), 
      \ey 
which  arises from the invariance property of VaR under continuous and increasing transformations.  
 A notable  special case  of problem  \eqref{WC-VaR-model}  is given by           
\cite{ghaoui2003worst}  who derived the explicit expressions for problem  \eqref{WC-VaR-model} in the case where
 $\beta=0$ and $\beta=\lambda$. That is,
  \be
  \label{VaR-model}  \sup_{F\in \mathcal{S}(\mu, \sigma)} {\rm VaR}_{\alpha}(X^F) = \mu + \sigma \,  \sqrt{\frac{\alpha}{1-\alpha}}.
\ee
 Thus, we have
 \by
      \sup_{F\in\mathcal{S}(\mu,\sigma)}{\rm VaR}_{\alpha}((1-\beta) X^F+(\beta-\lambda)(X^F-d)_{+})  
       &=&\ell_d\Big (\mu+\sigma\sqrt{\frac{\alpha}{1-\alpha}} \, \Big ).  \label{WC-VaR-RP}
      \ey

In the rest of this paper,  we focus on  problems \eqref{WC-risk-model} and \eqref{model-no-df-RP-Final} or its equivalent form 
\eqref{model-no-df-RP-Final-kappa} when $\rho$ is taken as  CVaR.

\section{Worst-case CVaR  under reward-penalty mechanism}\label{sec-WC-TVaR}

In this section, we investigate the worst-case CVaR of the portfolio loss  under reward-penalty mechanism and
 consider the following optimization problem: 
\begin{equation}
\label{TVaR-stop-loss}
\sup_{F\in\mathcal{S}(\mu,\sigma)}{\rm CVaR}_{\alpha}\left((1-\beta) X^F+(\beta-\lambda)(X^F-d)_{+}\right)=
\sup_{F\in\mathcal{S}(\mu,\sigma)}{\rm CVaR}_{\alpha}(\ell_d(X^F)).
\end{equation}
 Furthermore, in this section,  we assume that   
\be
\label{0-b-l-1}
0 \le  \beta \le   1, \ \ \ \ \  0 \le \lambda < 1.
\ee
The assumption in  \eqref{0-b-l-1} reflects the fact that, in practice,  the reward and penalty rates are  typically  between 0 and 1. Additionally, the condition $\lambda <1$ in \eqref{0-b-l-1}  avoids the trivial case where  both  $\beta=1$ and $\lambda=1$ hold, which would result in $\ell_d(x)=0$ for all $x \in \R$.     
 A notable  special case  of problem   \eqref{TVaR-stop-loss} with applications  to  optimal robust portfolio selection problems  has been explored in the literature.          
\cite{chen2011tight} derived the explicit expressions for problem  \eqref{TVaR-stop-loss} in the case where
 $\beta=0$ and $\beta=\lambda$. Their results show  that 
  \be
  \label{CVaR-model}
\sup_{F\in \mathcal{S}(\mu, \sigma)} {\rm CVaR}_{\alpha}(X^F) = \mu + \sigma \,  \sqrt{\frac{\alpha}{1-\alpha}}.
\ee

{\color{black} It is a well-known result that the worst-case values of VaR and CVaR over the set of all distributions with a given mean and variance are equal, 
as presented in 
\eqref{VaR-model}  and \eqref{CVaR-model}. More specifically, the worst-case value of VaR in \eqref{VaR-model}    is given in \cite{ghaoui2003worst}, while the worst-case value of CVaR in \eqref{CVaR-model}  is provided by  \cite{chen2011tight}. This equivalence is further discussed in \cite{PWW13}.

}

 Recently,     \citet{cai2024worst}
  studied an optimal  robust reinsurance problem and derived the explicit expression  for the worst-case CVaR of the 
   downside risk  $(X-d)_{+}$, or  stop-loss random variable, over the uncertainty set $\mathcal{S}(\mu,\sigma) $ for $d \ge 0$.
    Theorem 5.1 of \citet{cai2024worst} shows  that  
 \begin{equation}
 \label{WC-TVaR-CLY}
    \begin{aligned}
      &\sup_{F\in\mathcal{S}(\mu,\sigma)} {\rm CVaR}_{\alpha}\big((X^F-d)_{+}\big)
      =\begin{cases}
          \mu+\sigma\sqrt{\frac{\alpha}{1-\alpha}}, \quad &d\leq \mu-\sigma\frac{1-2\alpha}{2\sqrt{\alpha(1-\alpha)}},\\
          \frac{1}{2(1-\alpha)}\big (\mu-d+\sqrt{(\mu-d)^2+\sigma^2} \, \big), \quad &d>\mu-\sigma\frac{1-2\alpha}{2\sqrt{\alpha(1-\alpha)}},
      \end{cases}
    \end{aligned}
\end{equation}
for any $d\geq0$ and  $\alpha\in(0,1)$. 
 
         To solve problem  \eqref{TVaR-stop-loss},   we use the following   notation to denote the probability (mass) function 
         for   a
$k$-point (discrete) random variable.
 \begin{definition} 
 \label{de-k-ds}
 Let $
 [x_1,p_1;\ldots; x_k,p_k]
$
denote the probability function of   a $k$-point   random variable $X$,  where $k \ge 2$, 
   $0 \le \p(X=x_i)=p_i  <1$ for  
  $i=1,\ldots,k$, with    $\sum_{i=1}^k p_i=1$. Furthermore,    there exists  at least a pair $(i, j)$ such that $0 < p_i <1$ and 
$0 < p_j <1$ with  $1 \le i < j \le k$. This  implies that a
$k$-point distribution (with $k \ge 3$) may also  be considered as 
 $l$-point distribution in this paper, where $2 \le l  \le  k$.   Conversely, a $k$-point probability function  $[x_1,p_1;\ldots; x_k,p_k]$ 
  can be viewed as a $(k+1)$-point  probability function $[x_1,p_1;\ldots; x_k,p_k;x_{k+1}, 0]$.  
 \hfill $\Box$
 \end{definition} 

Our approach for solving problem \eqref{TVaR-stop-loss} involves reformulating this infinite-dimensional optimization into a finite-dimensional maximization problem. To achieve this, we must address the following maximization problem \eqref{eq:0401-1}, which is detailed in the following lemma:

\begin{lemma} \label{le-3.1}
For any $ 0 < \alpha < 1$, $\mu \in \R$, $\sigma >0$, $d \in \R$,   $0 \le \lambda  < 1$,   and $0 \le  \beta <  1$.  The maximum value of the objective function  in the  following  four-variable constrained maximization  problem: 
\begin{align}
\max_{(x_1, x_2, x_3, s) \in \R^ 3 \times [\alpha, 1]} ~~~&\, \frac{s-\alpha}{1-\alpha}(1- \beta) x_2 + \frac{1-s}{1-\alpha} \left((1-\beta) d + (1-\lambda)(x_3-d) \right) \label{eq:0401-1}\\
{\rm s.t.}~~~& \alpha x_1+ (s-\alpha) x_2 + (1-s) x_3=\mu,\notag\\
& \alpha x_1^2+ (s-\alpha) x_2 ^2+ (1-s) x_3^2=\mu^2+\sigma^2,\notag \\
& x_1\le x_2\le d\le x_3 
\notag 
\end{align}
is equal to the  maximum value of the objective function in the  following one-variable unconstrained maximization 
  problem: 
\be
\label{max-f}
\max_{ s \in [\alpha,1]} \, f(s),
\ee
where the function $f(s)$ for $\alpha \le s \le 1$  is defined  by 
\by
f(s) =  \frac{1-s}{1-\alpha} (\lambda-\beta) d+ \frac{1}{1-\alpha} \left(   f^*(s)+\sigma\sqrt{  (s-\alpha) (1-\beta)^2 + (1-s)(1-\lambda)^2 -(c^*(s))^2 }\right),
\ey
and the functions $f^*(s)$ and $c^*(s)$  are defined as  $f^*(s)=\mu [(1-\beta)(s-\alpha) + (1-\lambda)(1-s)]$ and 
$c^*(s)=(1-\beta)(s-\alpha) + (1-\lambda)(1-s)$.   \end{lemma}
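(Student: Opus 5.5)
The plan is to fix $s\in[\alpha,1]$, solve the inner maximization over $(x_1,x_2,x_3)$ in closed form by Cauchy--Schwarz, and then handle the ordering constraint $x_1\le x_2\le d\le x_3$ separately. First collect the $d$-terms. The objective in \eqref{eq:0401-1} equals
\[
\frac{1-s}{1-\alpha}(\lambda-\beta)d+\frac{1}{1-\alpha}\sum_{i=1}^3 p_i c_i x_i ,
\]
where $p=(\alpha,\,s-\alpha,\,1-s)$ and $c=(0,\,1-\beta,\,1-\lambda)$. The weighted mean of $c$ is $\bar c=\sum_i p_ic_i=c^*(s)$.

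For the inner problem, the mean constraint gives
\[
\sum_i p_i c_i x_i=\mu\,c^*(s)+\sum_i p_i (c_i-\bar c)(x_i-\mu)=f^*(s)+\sum_i p_i (c_i-\bar c)(x_i-\mu).
\]
The second-moment constraint gives $\sum_i p_i(x_i-\mu)^2=\sigma^2$. By Cauchy--Schwarz in $L^2(p)$, the last sum is at most
\[
\sigma\Big(\sum_i p_i(c_i-\bar c)^2\Big)^{1/2}=\sigma\sqrt{(s-\alpha)(1-\beta)^2+(1-s)(1-\lambda)^2-(c^*(s))^2}.
\]
Equality holds at $x_i=\mu+t(c_i-\bar c)$ with $t=\sigma/\big(\sum_ip_i(c_i-\bar c)^2\big)^{1/2}$. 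The degenerate case where all $c_i$ with $p_i>0$ coincide is checked directly; there the square root vanishes. Hence, for each fixed $s$, the problem without the ordering constraint has value exactly $f(s)$. Consequently the value of \eqref{eq:0401-1} is at most $\max_{s\in[\alpha,1]}f(s)$; the maximum exists because $f$ is continuous on a compact interval.

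For the reverse inequality I need a feasible point attaining $\max_s f(s)$. Since $c_1=0\le c_2$, the maximizer for fixed $s$ always satisfies $x_1\le x_2$. The remaining requirements are $x_2\le d\le x_3$, which in particular needs $c_2\le c_3$ when $p_2,p_3>0$. I expect this to be the main obstacle. I would not try to make the fixed-$s$ maximizer feasible for every $s$. Instead I would show that at a maximizing $s^\ast$ of $f$ the optimizer can be chosen feasible.

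My first attempt is a reinterpretation. For any $(x,s)$ with $x_1\le x_2\le x_3$, the objective equals the CVaR-type functional $\frac{1}{1-\alpha}\sum_{i\ge2}p_i g_i(x_i)$, where $g_2$ and $g_3$ are the two affine pieces $(1-\beta)x$ and $(1-\beta)d+(1-\lambda)(x-d)$ of $\ell_d$. The identity $f(s)=$ value is a statement about splitting the tail mass $1-\alpha$ into the part assigned to the piece $g_2$ and the part assigned to $g_3$.

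I would then run a case analysis on the relaxed optimizer at $s^\ast$:
\begin{itemize}
\item If $x_2>d$, moving mass from the middle atom to the top atom (decreasing $s$) should weakly increase $f$, contradicting optimality unless the middle atom carries no mass.
\item If $x_3<d$, the symmetric move (increasing $s$) applies.
\item If $\lambda>\beta$, so that $c_2>c_3$, the same monotonicity of $f$ in $s$ should force $s^\ast\in\{\alpha,1\}$, where one of $p_2,p_3$ vanishes and the ordering constraint is vacuous.
\end{itemize}
In every case the relaxed value $f(s^\ast)$ is attained by a feasible point, possibly degenerate in the sense of Definition \ref{de-k-ds}. Concretely, I would verify each case by differentiating $f$ and comparing the sign of $f'(s)$ with the sign of $x_2-d$ or $x_3-d$ evaluated at the relaxed optimizer. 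This comparison is the computation I anticipate will need the most care.
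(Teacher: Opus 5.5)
Your fixed-$s$ Cauchy--Schwarz computation is the paper's argument; the paper writes the same inner product as $\frac1{1-\alpha}\int_0^1{\rm VaR}_u(X)h(u)\,du$ with a step function $h$. It is in fact all the paper's proof of this lemma establishes: the relaxed problem \eqref{prob-upper}, without $x_1\le x_2\le d\le x_3$, has value $\max_{s\in[\alpha,1]}f(s)$, which gives ``$\le$''.

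The reverse inequality you then set out to prove is where your plan has a genuine gap. It cannot be closed in the stated generality, because the equality is false there. For $\lambda>\beta$, take $\alpha=\tfrac12$, $\mu=d=0$, $\sigma=1$, $\beta=0$, $\lambda=\tfrac12$. Then $f(1)=1$. But at any feasible point the objective is $2(s-\tfrac12)x_2+(1-s)x_3\le(1-s)x_3$, using $x_2\le 0$. By Cauchy--Schwarz, with $\sum_ip_ix_i=0$ and $\sum_ip_ix_i^2=1$, we get $(1-s)x_3=\sum_ip_ix_i\big(\mathds{1}_{\{i=3\}}-(1-s)\big)\le\sqrt{s(1-s)}\le\tfrac12$.

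The step that fails is your claim that the ordering becomes vacuous at $s^*\in\{\alpha,1\}$. At $s=1$ the atom $x_2$ carries mass $1-\alpha$ and must still satisfy $x_2\le d$. The relaxed optimizer $x_2=\mu+\sigma\sqrt{\alpha/(1-\alpha)}$ violates this as soon as it exceeds $d$. At $s=\alpha$ one still needs $x_1\le d\le x_3$.

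A second failure is independent of $\lambda,\beta$. The chain $x_1\le x_2\le d$ forces $x_1\le d$ even when $s=\alpha$, hence $\p(X\le d)\ge\alpha$. By Cantelli's inequality this is impossible when $d<\mu-\sigma\sqrt{(1-\alpha)/\alpha}$. There the feasible set of \eqref{eq:0401-1} is empty, although $\max_sf$ is finite. Your first bullet pushes $s^*$ to $\alpha$, where the relaxed optimizer has $x_1=\mu-\sigma\sqrt{(1-\alpha)/\alpha}$, and you never check $x_1\le d$.

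What survives is the case $\lambda<\beta$ with $d\ge\mu-\sigma\sqrt{(1-\alpha)/\alpha}$. The paper checks feasibility only for $\lambda<\beta$, and not in the lemma. It does so in the proof of Theorem \ref{Prop-wcTVaR-institutional}(b), by computing $s^*,x_2^*,x_3^*$ explicitly in cases (b1)--(b3). In (b1) it skips the check $x_1^*\le d$. That omission is harmless for the theorem's value, because the relaxed optimizer at $s=\alpha$ is the two-point extremal law, which lies in $\mathcal S(\mu,\sigma)$ regardless of ordering.

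Your sign comparison can be made rigorous in one line. Write the relaxed optimizer as $x_i(s)=\ell(s)+k(s)c_i$ with $\ell=\mu-k\bar c$. Differentiating $f$ then gives
\[
f'(s)=\frac{\beta-\lambda}{1-\alpha}\Big(d-\frac{x_2(s)+x_3(s)}{2}\Big),\qquad x_3(s)-x_2(s)=(\beta-\lambda)\,k(s).
\]
So for $\lambda<\beta$ an interior maximizer has $x_2<d<x_3$. A maximizer at $s=\alpha$ has $d\le\frac{x_2+x_3}{2}<x_3$; place the massless middle atom at $x_1$, which needs exactly $x_1\le d$. A maximizer at $s=1$ has $x_2<d$; place the massless top atom at $d$. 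For $\lambda>\beta$ the same identity gives $x_3<d<x_2$ at an interior critical point, which is why that case is lost.

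Finally, your degenerate case never occurs: $p_1=\alpha>0$ and $c_1=0<c_2,c_3$ give $\sum_ip_i(c_i-\bar c)^2>0$ for every $s$.
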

 
To present solutions for problem  \eqref{TVaR-stop-loss},  in this paper,  for $0 <\alpha<1$ and $0 \le  \lambda  < \beta \le  1$,  we define the  parameters $d_1$
and $d_2$
 as \begin{equation}
    \label{d1d2}
        d_1=\frac{1-\beta+(1-\lambda)(2\alpha-1)}{2(1-\lambda)\sqrt{\alpha(1-\alpha)}}, \quad \  \  d_2=\frac{1-\lambda+(1-\beta)(2\alpha-1)}{2(1-\beta)\sqrt{\alpha(1-\alpha)}}. 
    \end{equation} 
Note that  $d_1 \in (-\infty, \infty)$, $d_1 < d_2$, and  $d_2 \in (0, \infty]$. Additionally,   $d_2=\infty$ when $\beta=1$.  Furthermore,      denote the function $f_{\lambda, \beta}(s, t)$ for $s \in \R$ and $t > 0$ as
    \be
    \label{TVaR-f}
f_{\lambda, \beta}(s, t)=\sup_{F\in\mathcal{S}(s, t)}{\rm CVaR}_{\alpha}((1-\beta) X^F+(\beta-\lambda)(X^F-d)_{+}).
\ee

\begin{theorem} 
\label{Prop-wcTVaR-institutional}  For  $0 \le  \beta \le 1$ and $0 \le \lambda <1$,    the worst-case {\rm CVaR} of $(1-\beta) X+(\beta-\lambda)(X-d)_{+}$ at confidence 
level $\alpha \in (0, 1)$ and over the uncertainty set 
$\mathcal{S}(\mu, \sigma)$  or the function 
\be
\label{f-wcTVaR}
 f_{\lambda, \beta}(\mu, \sigma)=\sup_{F\in\mathcal{S}(\mu, \sigma)}{\rm CVaR}_{\alpha}((1-\beta) X^F+(\beta-\lambda)(X^F-d)_{+})
  \ee
   has the following expressions:  
        \begin{itemize}
\item [(a)]   Assume  $ \lambda \ge  \beta $. Then, 
\be
\label{ga}
  f_{\lambda, \beta}(\mu, \sigma) 
  =    \ell_d\Big (\mu+\sigma\sqrt{\frac \alpha{1-\alpha} }  \, \Big ),
\ee
where $\ell_d$ is defined in \eqref{ell}.

\item [(b)] Assume   $ \lambda< \beta$. 
\begin{itemize}
\item[(b1)] If $d-\mu \le \sigma d_1$,  then 
\by
%\label{gb1}
  f_{\lambda, \beta}(\mu, \sigma)
  =(\lambda-\beta)d+	(1-\lambda)\Big  (\mu+\sigma \, \sqrt{\frac{\alpha}{1-\alpha}} \,  \Big ).
\ey

 \item[(b2)]  If $d-\mu \in   (\sigma  d_1, \, \sigma d_2 )$, then 
 \be
\label{gb2} 
f_{\lambda, \beta}(\mu, \sigma)  
=(1-\beta) d+\frac{\Delta(\mu, \sigma)-(2-\lambda-\beta)(d-\mu)}{2(1-\alpha)},
\ee
where  $\Delta(s,t)$ is a function defined for any $s \in \R$ and $t>0$ as
\be
\label{Delta}
\Delta(s, t)=\sqrt{(t^2+(d-s)^2)((\beta-\lambda)^2+4\alpha(1-\beta)(1-\lambda))}.
 \ee
 
\item[(b3)]   If $d-\mu \ge   \sigma  d_2$,  then
\by
  %   \label{gb3} 
f_{\lambda, \beta}(\mu, \sigma) 
=(1-\beta)\Big  (\mu+\sigma \, \sqrt{\frac{\alpha}{1-\alpha}} \, \Big ).
\ey
\end{itemize}
\end{itemize}
\end{theorem}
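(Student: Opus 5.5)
Case (a) is direct. When $\lambda \ge \beta$, the function $\ell_d(x)=(1-\beta)x+(\beta-\lambda)(x-d)_+$ is continuous, nondecreasing and concave, with slope $1-\beta$ below $d$ and slope $1-\lambda\ge 0$ above $d$.

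For the upper bound, comonotonic additivity and subadditivity are not needed. The concave piecewise-linear structure gives $\ell_d(x)\le \ell_d(y)+\ell_d'(y)(x-y)$ at any point $y$. CVaR is monotone and translation invariant, so
\[
{\rm CVaR}_\alpha(\ell_d(X))\le \ell_d(y)+\ell_d'(y)\bigl({\rm CVaR}_\alpha(X)-y\bigr).
\]
Choosing $y=\mu+\sigma\sqrt{\alpha/(1-\alpha)}$ and applying \eqref{CVaR-model} bounds the right-hand side by $\ell_d(\mu+\sigma\sqrt{\alpha/(1-\alpha)})$.

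For the lower bound, take the two-point distribution $[\mu-\sigma\sqrt{(1-\alpha)/\alpha},\alpha;\ \mu+\sigma\sqrt{\alpha/(1-\alpha)},1-\alpha]$. Its CVaR equals its upper atom, and since $\ell_d$ is increasing, ${\rm CVaR}_\alpha(\ell_d(X))=\ell_d(\mu+\sigma\sqrt{\alpha/(1-\alpha)})$. This gives \eqref{ga}.

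For case (b), $\lambda<\beta$, the function $\ell_d$ is convex and increasing. First assume $\beta<1$, so that Lemma \ref{le-3.1} applies; the case $\beta=1$ follows by continuity or is handled separately.

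\emph{Step 1: reduction to the lemma.} For any $F$, write ${\rm CVaR}_\alpha(\ell_d(X))=\frac{1}{1-\alpha}\int_\alpha^1\ell_d(F^{-1}(q))\,dq$, and let $s$ be the level at which $F^{-1}$ crosses $d$, clipped to $[\alpha,1]$. Replace $F^{-1}$ by its conditional averages on $[0,\alpha]$, $[\alpha,s]$ and $[s,1]$. This gives a three-point distribution with atoms $x_1\le x_2\le d\le x_3$ and the same mean. By Jensen's inequality its variance does not increase, and because $\ell_d$ is affine on each side of $d$, the CVaR is unchanged. We restore the variance to $\sigma^2$ by spreading mass out to the tail: let a small atom go to $+\infty$ (or push the top point outward), which increases CVaR because $\ell_d$ is increasing. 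The supremum therefore equals the maximum in \eqref{eq:0401-1}, where boundary cases $s=\alpha$ or $s=1$ correspond to degenerate atoms. Lemma \ref{le-3.1} then reduces the problem to $\max_{s\in[\alpha,1]}f(s)$.

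\emph{Step 2: one-variable maximization.} Let $u=1-s\in[0,1-\alpha]$, $a=1-\beta$ and $b=1-\lambda>a$. Then $c^*=a(1-\alpha)+(b-a)u$, and $Q(u)\triangleq a^2(1-\alpha-u)+b^2u-(c^*)^2$ is a concave quadratic in $u$. We have
\[
(1-\alpha)f=-(b-a)du+\mu c^*+\sigma\sqrt{Q(u)}=(b-a)(\mu-d)u+a\mu(1-\alpha)+\sigma\sqrt{Q(u)},
\]
which is concave in $u$. We set the derivative to zero,
\[
(b-a)(\mu-d)+\sigma Q'(u)/(2\sqrt{Q(u)})=0,
\]
and solve the resulting quadratic for the interior optimum. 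Substituting it back, and using the identity $(b-a)^2+4\alpha ab=(\beta-\lambda)^2+4\alpha(1-\beta)(1-\lambda)$, should produce \eqref{gb2}.

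\emph{Step 3: boundary cases.} The interior solution lies in $(0,1-\alpha)$ exactly when $(d-\mu)/\sigma\in(d_1,d_2)$. These thresholds come from evaluating the sign of $\frac{d}{du}(1-\alpha)f$ at the two endpoints:
\[
Q(0)=a^2\alpha(1-\alpha),\quad Q'(0)=b^2-a^2-2a(1-\alpha)(b-a),
\]
and similarly at $u=1-\alpha$. Comparing these with $d_1,d_2$ in \eqref{d1d2} is a check. At $u=0$ ($s=1$) the value is $a(\mu+\sigma\sqrt{\alpha/(1-\alpha)})$, which is (b3). At $u=1-\alpha$ ($s=\alpha$) the value is $(\lambda-\beta)d+b(\mu+\sigma\sqrt{\alpha/(1-\alpha)})$, which is (b1).

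I expect Step 1 to be the main obstacle: making the three-point reduction rigorous, in particular showing that the variance inequality can be turned into an equality without lowering CVaR, and handling the cases where the conditional averages straddle $d$. The algebra in Step 2 that simplifies the interior value to $\Delta$ is tedious but mechanical.
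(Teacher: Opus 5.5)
Your proposal is correct. Part (b) follows the paper's route. Step 1 is the paper's comonotone conditional-averaging reduction. The paper splits into $d\le{\rm VaR}_\alpha(X)$, which gives a two-point law, and $d>{\rm VaR}_\alpha(X)$, which gives the three-point $X_\epsilon$. Clipping the crossing level $F(d)$ to $[\alpha,1]$ handles both at once and forces $x_2\le d\le x_3$, so the ``straddling'' worry does not arise. Lemma \ref{le-3.1} and the concave one-variable problem then follow as in the paper.

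Your endpoint checks are right. The $u$-derivative of $(1-\alpha)f$ equals $(\beta-\lambda)\bigl(\sigma d_2-(d-\mu)\bigr)$ at $u=0$ and $(\beta-\lambda)\bigl(\sigma d_1-(d-\mu)\bigr)$ at $u=1-\alpha$, which is exactly where \eqref{d1d2} comes from. For the interior value, write $Q(u)=\tfrac14\bigl[(b-a)^2+4\alpha ab\bigr]-(b-a)^2(u-u_0)^2$ with $u_0=\frac{b-a+2a\alpha}{2(b-a)}$. One Cauchy--Schwarz step then gives the maximum $a\mu(1-\alpha)+\tfrac12\bigl[(\mu-d)(b-a+2a\alpha)+\Delta(\mu,\sigma)\bigr]$, which rearranges to \eqref{gb2}.

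Part (a) is where you genuinely differ. The paper reduces to two-point laws: conditional averaging, Jensen for the concave $\ell_d$ when $d>{\rm VaR}_\alpha(X)$, an $\epsilon$-spread to restore the variance, and uniqueness of the $(\alpha,1-\alpha)$ two-point law in $\mathcal S(\mu,\sigma)$. You instead take a supergradient of $\ell_d$ at $y=\mu+\sigma\sqrt{\alpha/(1-\alpha)}$ (any element of $[1-\lambda,1-\beta]$ if $y=d$). You combine it with monotonicity, translation invariance and also positive homogeneity of CVaR, which you need because the nonnegative slope multiplies $X$, and then apply \eqref{CVaR-model}. This is shorter and avoids the case split, but it imports \eqref{CVaR-model} from \cite{chen2011tight}. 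Since $\lambda=\beta=0$ lies in case (a), the theorem then no longer re-derives \eqref{CVaR-model}, which is what Remark (i) claims for the paper's proof. The dependence is mild: the upper bound in \eqref{CVaR-model} is the Cauchy--Schwarz estimate of Lemma \ref{le-3.1} with $h=\mathbf{1}_{[\alpha,1]}$.

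A few small repairs are needed.
\begin{itemize}
\item In Step 1, sending a small atom to $+\infty$ creates a fourth atom. Use the mean-preserving push $x_1\mapsto x_1-\epsilon(1-s)$, $x_3\mapsto x_3+\epsilon\alpha$ (the paper's $X_\epsilon$). When $s=1$, push $x_2$ instead and relabel as $s=\alpha$ once it passes $d$. The paper's construction has the same corner case when $\p(X>d)=0$.
\item Bound by the relaxed problem \eqref{prob-upper} rather than \eqref{eq:0401-1}. It has no ordering constraints, so it also covers the corner case where the $[0,\alpha]$-average exceeds $d$. It is also all that the paper's proof of Lemma \ref{le-3.1} actually treats. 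The feasibility $x_2^*\le d\le x_3^*$ needed for \eqref{eq:0401-1} is verified only inside the theorem's proof, a check your write-up skips by citing the lemma.
\item You can avoid that check altogether. For $\lambda<\beta$, $\ell_d(x)=\max\{(1-\beta)x,\,(1-\beta)d+(1-\lambda)(x-d)\}$, and the relaxed maximizer is ordered because $k\ge0$. Its law therefore lies in $\mathcal S(\mu,\sigma)$ and has CVaR at least $\max_s f(s)$.
\item The case $\beta=1$ needs no continuity argument. The Cauchy--Schwarz step is unchanged. Its only degeneracy, $Q=0$ at $s=1$, is never optimal, because the $u$-derivative there is $+\infty$; this is the statement $d_2=\infty$.
\end{itemize}
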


\begin{remark}
\bn[(i)]

\item  When  $\lambda=0$ and $\beta = 0$,   Theorem  \ref{Prop-wcTVaR-institutional}   recovers the 
 well-known worst-case CVaR expression \eqref{CVaR-model} derived in  \cite{chen2011tight}.
    Furthermore, our proof differs from that used in \cite{chen2011tight} and allows us to derive explicit expressions for problem \eqref{TVaR-stop-loss}.    

\item We point out that the result given in  \eqref{WC-TVaR-CLY} obtained by  \citet{cai2024worst} is a special case of Theorem  \ref{Prop-wcTVaR-institutional}  when 
 $\lambda=0$ and $\beta = 1$. More importantly,   Theorem  \ref{Prop-wcTVaR-institutional} allows $d$ to be any real number. Thus, the result  in  \eqref{WC-TVaR-CLY} holds for any $d \in \R$. 

\item  Theorem  \ref{Prop-wcTVaR-institutional} also  demonstrates   that   under the  reward-penalty mechanism, if $0 \le \beta  \le   \lambda < 1$, 
then, by \eqref{WC-VaR-RP} and \eqref{ga}, 
we have 
\by
  && \sup_{F\in\mathcal{S}(\mu,\sigma)}{\rm VaR}_{\alpha}((1-\beta) X^F+(\beta-\lambda)(X^F-d)_{+}) \\
         &=&  \sup_{F\in\mathcal{S}(\mu,\sigma)}{\rm CVaR}_{\alpha}((1-\beta) X^F+(\beta-\lambda)(X^F-d)_{+}) 
           =\ell_d\Big (\mu+\sigma\sqrt{\frac{\alpha}{1-\alpha}} \, \Big ). 
\ey
However, if $0 \le \lambda <  \beta \le   1$, the worst-case VaR of the  loss $X$ weighted by the downside risk $(X-d)_+$  differs from the corresponding  worst-case CVaR. Furthermore, in the absence of the  reward-penalty mechanism, the worst-case VaR and  the worst-case CVaR of  the loss $X$ over the uncertainty set 
$\mathcal{S}(\mu,\sigma)$  are equal,  as shown  in \eqref{CVaR-model}.    
\hfill $\Box$
 \en
\end{remark}

 Note that  $\text{CVaR}_\alpha \to  \E$ as $\alpha \to 0$.  Thus, 
by letting $\alpha \to 0$ in Theorem  \ref{Prop-wcTVaR-institutional} and noting that $d_1 \to -\infty$, 
$d_2 \to \infty$, and $\Delta(\mu,\sigma) \to |\beta-\lambda| \sqrt{\sigma^2+(d-\mu)^2}  $  as $\alpha \to 0$,
we obtain the following  proposition:

\begin{proposition} 
\label{pro-WC-TVaR-RP-a=0}
For $0 \le \lambda <1$ and $0 \le \beta \le 1$, the following results hold:  
\begin{itemize}
\item [(a)] If  $ \lambda\geq \beta $, then, 
\by
 \sup_{F\in \mathcal S (\mu, \sigma)} \E [(1-\beta) X^F+(\beta-\lambda)(X^F-d)_{+}]  =    (1-\beta) \mu+(\beta-\lambda)(\mu-d)_{+}.
\ey

\item [(b)] If  $ \lambda< \beta $, then
\by
&& \sup_{F\in \mathcal S (\mu, \sigma)} \E [(1-\beta) X^F +(\beta-\lambda)(X^F-d)_{+}]  
 \nonumber\\
&=& (1-\beta) d+\frac{(\beta-\lambda)\sqrt{\sigma^2+(d-\mu)^2}  -(2-\lambda-\beta)(d-\mu)}{2}.
\ey
\end{itemize}
%\hfill $\Box$ 
\end{proposition}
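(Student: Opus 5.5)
The cleanest route is to prove the proposition directly rather than through the limit $\alpha\to 0$ in Theorem~\ref{Prop-wcTVaR-institutional}. Passing the supremum through a limit in $\alpha$ would itself need justification. Write $\ell_d(x)=(1-\beta)x+(\beta-\lambda)(x-d)_+$, so the objective is $\E[\ell_d(X^F)]=(1-\beta)\mu+(\beta-\lambda)\E[(X^F-d)_+]$. The problem therefore reduces to computing $\sup_{F\in\mathcal S(\mu,\sigma)}\E[(X^F-d)_+]$ in case (b), and the corresponding infimum in case (a), each multiplied by the sign of $\beta-\lambda$.

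For case (a), $\beta-\lambda\le 0$, so we want to minimize $\E[(X-d)_+]$. Jensen's inequality gives $\E[(X-d)_+]\ge(\mu-d)_+$. The infimum equals $(\mu-d)_+$ but need not be attained. Take a two-point law with one atom at a point $a$ on the same side of $d$ as $\mu$, carrying probability $1-p$, and the other atom far away with small probability $p$, chosen to match mean $\mu$ and variance $\sigma^2$. The excess contribution to $\E[(X-d)_+]$ is of order $\sigma\sqrt{p}\to0$. If $\mu=d$, one can place the big atom slightly below $d$ and the small atom far below, giving an excess of order $\sqrt p$ again. So the supremum in (a) is $(1-\beta)\mu+(\beta-\lambda)(\mu-d)_+$, approached but possibly not attained. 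This matches statement (a), read as a supremum.

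For case (b), we need the classical bound (Jagannathan, or the Lo bound)
\[
\sup_{F\in\mathcal S(\mu,\sigma)}\E[(X^F-d)_+]=\tfrac12\Big(\mu-d+\sqrt{\sigma^2+(d-\mu)^2}\Big).
\]
For the upper bound, use $(x-d)_+=\tfrac12\big((x-d)+|x-d|\big)$ together with $\E|X-d|\le\sqrt{\E(X-d)^2}=\sqrt{\sigma^2+(\mu-d)^2}$ (Cauchy--Schwarz). For attainment, use the two-point law symmetric about $d$ in the sense $|x_1-d|=|x_2-d|=\sqrt{\sigma^2+(\mu-d)^2}$, with probabilities chosen to give mean $\mu$. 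Then $|X-d|$ is constant, so Cauchy--Schwarz holds with equality, and one checks that the variance is $\sigma^2$. Substituting into the objective gives
\[
(1-\beta)\mu+\tfrac{\beta-\lambda}{2}\big(\mu-d+\sqrt{\sigma^2+(d-\mu)^2}\big).
\]
Rewriting $(1-\beta)\mu+\frac{\beta-\lambda}{2}(\mu-d)$ as $(1-\beta)d-\frac{2-\lambda-\beta}{2}(d-\mu)$ yields the formula in (b).

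The only delicate point is the non-attainment in (a). Part (b) is routine once the Cauchy--Schwarz step and its equality case are in place.
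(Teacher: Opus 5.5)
Your argument is correct, but it follows a different route from the paper's. The paper gives no direct proof. It obtains the proposition by letting $\alpha\to0$ in Theorem~\ref{Prop-wcTVaR-institutional}: in case (a) it uses $\ell_d(\mu+\sigma\sqrt{\alpha/(1-\alpha)})\to\ell_d(\mu)$, and in case (b) it uses $d_1\to-\infty$ and $d_2\to\infty$, so that case (b2) eventually applies for every $\mu$, together with $\Delta(\mu,\sigma)\to(\beta-\lambda)\sqrt{\sigma^2+(d-\mu)^2}$.

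That route makes the proposition a corollary of the full worst-case CVaR formula. It therefore rests on the three-point reduction and on the quantile-level Cauchy--Schwarz argument of Lemma~\ref{le-3.1}. As you note, it also tacitly swaps $\lim_{\alpha\to0}$ with $\sup_F$, which the paper does not justify. The swap can be justified: $\E[Y]\le\mathrm{CVaR}_\alpha(Y)\le(\E[Y]+\sqrt{\alpha}\,(\E[Y^2])^{1/2})/(1-\alpha)$, and $\E[\ell_d(X^F)^2]$ is bounded uniformly over $\mathcal S(\mu,\sigma)$.

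Your proof avoids all of this and is self-contained. Linearity of $\E$ reduces the problem to the best- and worst-case stop-loss premium, and Jensen settles (a). For (b), $(x)_+=\frac12(x+|x|)$ plus Cauchy--Schwarz gives the bound, and the two-point law at $d\pm\sqrt{\sigma^2+(\mu-d)^2}$ is the equality case, so the supremum is attained. This is the same Cauchy--Schwarz mechanism as in Lemma~\ref{le-3.1}, in its simplest form.

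Your route also shows something the limit argument hides. In (a) with $\lambda>\beta$, the supremum is attained when $\mu\ne d$: put the far atom on the same side of $d$ as $\mu$, so that $X$ stays on one side of $d$. It is not attained when $\mu=d$.

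One slip in that $\mu=d$ construction: both atoms cannot lie below $d$ if the mean is $d$. Take the large atom slightly below $d$ and the small one far above (or the reverse). This gives $\E[(X-d)_+]=\sigma\sqrt{p(1-p)}\to0$, as you intended.
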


\begin{remark}
\bn[(i)]
\item When $\lambda=0$ and $\beta = 1$, Proposition \ref{pro-WC-TVaR-RP-a=0}(b) recovers the well-known worst-case stop-loss premium:
\by
 \sup_{F\in \mathcal S (\mu, \sigma)} \E [(X^F-d)_{+}]   = \frac{\mu-d + \sqrt{\sigma^2+(d-\mu)^2}}{2},
\ey  
which was first proved in \citet{jagannathan1977minimax}.

\item 
For $0 \le \lambda  <    \beta < 1$,  
define  the  function $f_\kappa(s, t)$ for any $s \in \R$  and $t>0$  as the worst-case CVaR of  
$X+\kappa(X-d)_{+}$ at confidence level $\alpha$ and  over the uncertainty set $\mathcal{S}(s, t)$, namely, 
\by
%\label{fk}
f_\kappa(s, t) =\sup_{F \in \mathcal{S}(s, t)}{\rm CVaR}_\alpha (X^F+\kappa(X^F-d)_{+} ).
\ey
Then, the worst-case CVaR 
\by
%\label{f_kappa}
f_\kappa(\mu, \sigma) = \sup_{F \in \mathcal{S}(\mu, \sigma)}{\rm CVaR}_\alpha (X^F+\kappa(X^F-d)_{+}) = \frac{1}{1-\beta} \, f_{\lambda, \beta}(\mu,\, \sigma), 
\ey
where $ f_{\lambda, \beta}(\mu,\, \sigma) =
 \sup_{F\in \mathcal S (\mu, \sigma)} {\rm CVaR}_\alpha ((1-\beta) X^F+(\beta-\lambda)(X^F-d)_{+})
     $ is given in Theorem  \ref{Prop-wcTVaR-institutional}(b).
  \en
\end{remark} 
To end this section,  we give a proposition about the convexity of the function $f_{\lambda, \beta}(\mu,\sigma)$ which 
 will be used to determine the convexity of objective functions in optimal robust portfolio problems, as discussed in the next two sections. 
\begin{proposition}  
\label{f-convex}
For any $0 \le \lambda < 1$ and $0 \le \beta \le 1$, the function $f_{\lambda, \beta}(\mu,\sigma)$ defined by \eqref{TVaR-f} is convex in $(\mu,\sigma)\in\R\times\R^+$. 
 \end{proposition}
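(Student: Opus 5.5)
The plan is to avoid the piecewise formulas of Theorem \ref{Prop-wcTVaR-institutional} wherever possible and to use a structural argument instead. Every $F\in\mathcal{S}(\mu,\sigma)$ is the law of $\mu+\sigma Z$ for some $Z$ in $\mathcal{Z}\triangleq\mathcal{S}(0,1)$, and conversely every such $Z$ gives an element of $\mathcal{S}(\mu,\sigma)$. Hence
\[
f_{\lambda,\beta}(\mu,\sigma)=\sup_{Z\in\mathcal{Z}}\ \phi_Z(\mu,\sigma),\qquad \phi_Z(\mu,\sigma)\triangleq{\rm CVaR}_\alpha\big(\ell_d(\mu+\sigma Z)\big),
\]
and the index set $\mathcal{Z}$ no longer depends on $(\mu,\sigma)$. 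Since a pointwise supremum of convex functions is convex, it suffices to show that each $\phi_Z$ is convex on $\R\times\R^+$.

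For fixed $Z$, put $Y_\theta=\ell_d(\mu+\sigma Z)$ with $\theta=(\mu,\sigma)$, and take $\theta=t\theta'+(1-t)\theta''$. The map $(\mu,\sigma)\mapsto \mu+\sigma Z(\omega)$ is linear pathwise. If $\ell_d$ is convex, then pathwise
\[
Y_\theta\le tY_{\theta'}+(1-t)Y_{\theta''}.
\]
${\rm CVaR}_\alpha$ is monotone, subadditive and positively homogeneous, hence convex on random variables defined on a common probability space. It follows that $\phi_Z(\theta)\le t\phi_Z(\theta')+(1-t)\phi_Z(\theta'')$. Here $\ell_d(x)=(1-\beta)x+(\beta-\lambda)(x-d)_+$ is convex exactly when $\beta\ge\lambda$. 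This covers all of part (b) of Theorem \ref{Prop-wcTVaR-institutional}, and the boundary case $\lambda=\beta$ of part (a). As a cross-check for (b), one can verify the explicit formulas: (b1) and (b3) are affine in $(\mu,\sigma)$. In (b2), $\Delta(\mu,\sigma)$ is a positive multiple of the Euclidean norm $\|(\sigma,d-\mu)\|$, hence convex, and the remaining terms are affine. Convexity across the switching lines $d-\mu=\sigma d_1$ and $d-\mu=\sigma d_2$ then follows because $f_{\lambda,\beta}$ is a supremum of convex pieces, or directly because the pieces match together with their gradients there.

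The remaining case $\lambda>\beta$ is the main obstacle, and I would examine it first before writing anything up. By (a), $f_{\lambda,\beta}(\mu,\sigma)=\ell_d(\mu+\sigma\sqrt{\alpha/(1-\alpha)})$. When $\lambda>\beta$, the function $\ell_d$ has slope $1-\beta$ below $d$ and the smaller slope $1-\lambda$ above $d$, so it is concave. Composed with a linear map, this gives a function that is concave, not convex, across the line $\mu+\sigma\sqrt{\alpha/(1-\alpha)}=d$. The plan is therefore:
\begin{itemize}
\item[(1)] Re-derive (a) via the same reduction. The pathwise inequality now reverses, so the sup-of-convex argument gives nothing.
\item[(2)] Test the formula numerically at points on both sides of the kink.
\end{itemize}
If the formula in (a) is correct, the proposition as worded can only hold in the region where the affine argument stays on one side of $d$, or under the additional restriction $\lambda\le\beta$. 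The latter is in any case the regime used in model \eqref{model-no-df-RP-Final-kappa}, where $\kappa>0$. I would then state the convexity proof for $\lambda\le\beta$ as above and record explicitly that for $\lambda>\beta$ the function is piecewise affine in $(\mu,\sigma)$ but concave across the kink, rather than claim convexity there.
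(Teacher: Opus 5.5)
Your proof for $\lambda\le\beta$ is correct, and your suspicion about $\lambda>\beta$ is justified. There the proposition as worded is false, so it should be stated for $\lambda\le\beta$ only.

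Part (a) of Theorem \ref{Prop-wcTVaR-institutional} needs no re-derivation. For increasing concave $\ell_d$, Jensen gives ${\rm CVaR}_\alpha(\ell_d(X))\le\ell_d({\rm CVaR}_\alpha(X))\le\ell_d(\mu+c\sigma)$ with $c=\sqrt{\alpha/(1-\alpha)}$, and the two-point law attains this bound.

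A concrete counterexample: take $\beta=0$, $\lambda=\alpha=\tfrac12$ (so $c=1$) and $d=0$. Then $f_{\lambda,\beta}(\mu,\sigma)=\ell_0(\mu+\sigma)$ with $\ell_0(x)=x-\tfrac12x_+$. So $f_{\lambda,\beta}(-2,1)=-1$ and $f_{\lambda,\beta}(0,1)=\tfrac12$, while $f_{\lambda,\beta}(-1,1)=0>-\tfrac14$.

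The paper's proof has exactly this hidden restriction. It sets $h(\delta)=\sup_{F\in\mathcal{S}(0,1)}{\rm CVaR}_\alpha\big((1-\beta)X^F+(\beta-\lambda)(X^F-\delta)_+\big)$ and writes $f_{\lambda,\beta}(\mu,\sigma)=\sigma h\big((d-\mu)/\sigma\big)+(1-\beta)\mu$. It then concludes from convexity of the perspective of $h$. But the claim that $h$ is convex ``because $(x-\delta)_+$ is convex in $\delta$'' needs $\beta-\lambda\ge0$. For $\lambda>\beta$ one in fact has $h(\delta)=(1-\beta)c+(\beta-\lambda)(c-\delta)_+$, which is concave. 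The slip does not propagate, since the proposition is only used in Lemma \ref{le-for-g(s)} and Theorem \ref{th:ROPS-TVaR}, both under $0\le\lambda<\beta<1$.

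On the route, both proofs standardize to $\mathcal{S}(0,1)$ and use the same facts: monotonicity and convexity of ${\rm CVaR}_\alpha$, convexity of $\ell_d$ for $\lambda\le\beta$, and stability of convexity under suprema. The paper takes the supremum first, obtaining a one-variable function of the standardized threshold. It then lifts to $(\mu,\sigma)$ via the perspective lemma, which also yields the scaling identity above. You fix $Z$ and read off joint convexity of $\phi_Z$ from the pathwise linearity of $(\mu,\sigma)\mapsto\mu+\sigma Z$. Since each $\phi_Z$ is itself $(1-\beta)\mu$ plus a perspective, the two arguments differ mainly in the order of the supremum and the lifting. Yours avoids the perspective lemma and shows plainly where the sign of $\beta-\lambda$ enters, which is exactly what the paper's version hides.

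One correction to your cross-check: convexity across the switching lines does not follow from $f_{\lambda,\beta}$ being a supremum of the three closed-form pieces, because it is not one. The (b1) and (b3) planes are tangent to the convex (b2) expression along the rays $d-\mu=\sigma d_1$ and $d-\mu=\sigma d_2$. Hence the (b2) expression dominates both everywhere, and the maximum of the three pieces is just the (b2) expression. Your gradient-matching argument, or your supremum over $Z$, is the valid justification.
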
 

{\color{black} At the end of this section, we point out that there are multiple approaches to model distributional
uncertainty. In our paper, we adopt moment-based uncertainty sets, 
  primarily due to  their tractability and broad applicability. As illustrated by the references cited therein,  
   moment-based sets (and related variants) are widely used in distributionally robust optimization (DRO), particularly in insurance,  finance, and operations research. Their practical relevance is further supported by \cite{RVB24}, which provides an extensive discussion of moment-based DRO formulations.   We also point out that  divergence-based uncertainty sets,  such as those defined via Wasserstein balls,    are also important 
for  modelling distributional uncertainty.  For recent developments, we refer to \cite{B23} and \cite{TY23}, who introduce 
 new uncertainty sets based on Wasserstein balls. 
 
 It is possible to extend our framework to  divergence-based uncertainty sets. For instance,   consider 
     the  worst--case  risk-measure of a loss function   $\ell (X)$ over  a Wasserstein ball:  
                      \begin{equation} \label{WC-g-l-d-M}
  \sup_{F \in  {\cal S }(\hat{F}; k, \varepsilon)} \rho(\ell(X^F)),
  \end{equation}    
  where  $ {\cal S} ( \hat{F};  k, \varepsilon ) =   \{F:  \, W_k ( F, \hat{F} ) \leq \varepsilon \}$
       is  the $k$-th order Wasserstein ball centered at a reference distribution $\hat{F}$. Here,   $ k \geq 1$, $ \varepsilon \geq 0$, 
  and $W_k$ is the Wasserstein distance with order $k$. 
                 
  Problem \eqref{WC-g-l-d-M}  has been addressed in \cite{CLY25}  when   $\rho$ is a distortion risk measure and        $\ell(x)$ takes  one of 
   the following  three forms: i)  
  $ \ell(x) = (x-d)_+$;  ii)  
              $\ell (x) = \min\{x, \, m\}$;  iii) 
            $  \ell(x) = \min\{(x-d)_+, \,  m\}$.  
   In our model setting,   similar to problem \eqref{Sup-risk-m=s,t}, 
     replacing the moment-based uncertainty sets with the Wasserstein ball  would require solving the following  problem:           
                  \by
&&
 \sup_{F \in  {\cal S }(\hat{F}; k, \varepsilon)} \rho \left(
     (1-\beta)X^F+ (\beta-\lambda) (X^F-d)_{+}     
     \right) \nonumber \\
     &=&  \sup_{F \in  {\cal S }(\hat{F}; k, \varepsilon)} \rho \left(
     (1-\beta)\min\{X^F, \, d\} + (1-\lambda) (X^F-d)_{+} \right). 
   \ey
This formulation differs from those considered in \cite{CLY25}, and their techniques cannot be directly applied. Specifically, their approach relies on transforming the risk measure 
 of $\ell(X)$ into an equivalent form involving a new random variable $X^*$, while our framework employs dimension reduction techniques that convert infinite-dimensional DRO problems into finite-dimensional optimization programs. Therefore, extending our model to accommodate  divergence-based uncertainty sets 
 would require the development of entirely new analytical tools, which is beyond the scope of the current paper. Nevertheless,  this is a compelling direction for future research. 
}

\section{Portfolio selection under   a 
  mean-covariance-based uncertainty set}\label{sec-optimal-selection} 

In this section, we consider the robust portfolio selection problem that minimizes the worst-case CVaR of the portfolio loss under the reward-penalty mechanism. The problem is formulated as follows:
\begin{equation}
\label{portfolio CVaR}
    \begin{aligned}
        &\min_{\boldsymbol{\omega}\in\mathbb{R}^n}\sup_{G \in \mathcal{M}(\boldsymbol{\mu},\boldsymbol{\Sigma})} {\rm CVaR}_{\alpha}\big((1-\beta)\boldsymbol{\omega}^{\top}\boldsymbol{X}^G+(\beta-\lambda)(\boldsymbol{\omega}^{\top}\boldsymbol{X}^G-d)_{+}\big)\\
        & \ {\rm s.t.} \quad \boldsymbol{\omega}^{\top}\boldsymbol{e}=1.
    \end{aligned}
\end{equation}

In practice, the penalty rate $\lambda$ is typically less than the reward rate $\beta$, which, together with the conditions in \eqref{0-b-l-1},  leads to  
 the condition 
 \be
 \label{l<b}
 0 \le \lambda <  \beta < 1.
 \ee
 Under this condition, problem  \eqref{portfolio CVaR} is equivalent to the following problem:
  \begin{equation}
\label{portfolio CVaR-kappa}
    \begin{aligned}
        &\min_{\boldsymbol{\omega}\in\mathbb{R}^n}\sup_{G \in \mathcal{M}(\boldsymbol{\mu},\boldsymbol{\Sigma})} {\rm CVaR}_{\alpha}\big(\boldsymbol{\omega}^{\top}\boldsymbol{X}^G+\kappa (\boldsymbol{\omega}^{\top}\boldsymbol{X}^G-d)_{+}\big)\\
        & \ \text{s.t.} \quad \boldsymbol{\omega}^{\top}\boldsymbol{e}=1,
    \end{aligned}
\end{equation}  
 where $\kappa >0$ is defined in \eqref{kappa}.

 We assume that the condition \eqref{l<b} holds in this section. In addition,  following
 \citet{constantinides1995portfolio}, we assume that the covariance matrix $\boldsymbol{\Sigma}$ of $ \boldsymbol{X}$ is positive definite and  
  that $\boldsymbol{\mu}$ and $\boldsymbol{e}$ are not linearly   dependent, namely, for any  $c \in \R$,  
  $\boldsymbol{\mu} \not= c \boldsymbol{e}$. Further, for notational brevity, in this section, as in \cite{chen2011tight},  we define the parameters $c_0$, $c_1$, $c_2$, $b_0$, $b_1$, and $b_2$ as follows:
  \by
 % \label{def-b123}
\textit{c}_{0} = \boldsymbol{e}^\mathrm{\top}\boldsymbol{\Sigma}^{-1}\boldsymbol{e},\,\,  \textit{c}_{1} = \boldsymbol{e}^\mathrm{\top}\boldsymbol{\Sigma}^{-1}\boldsymbol{\mu}, \,\,  \textit{c}_{2} = \boldsymbol{\mu}^\mathrm{\top}\boldsymbol{\Sigma}^{-1}\boldsymbol{\mu}, \,\,
\textit{b}_{0} = \frac{\textit{c}_{0}}{\textit{c}_{0}\textit{c}_{2}-\textit{c}_{1}^2},  \,\,  \textit{b}_{1} = \frac{\textit{c}_{1}}{\textit{c}_{0}\textit{c}_{2}-\textit{c}_{1}^2}, \,\,    \textit{b}_{2} = \frac{\textit{c}_{2}}{\textit{c}_{0}\textit{c}_{2}-\textit{c}_{1}^2}.
\ey
It holds that $c_0c_2-c_1^2>0$, $b_0 >0$, $b_2>0$, and $b_0b_2-b_1^2 >0$   since $\boldsymbol{\Sigma} \succ 0$ and the assumption that $\boldsymbol{\mu}$ and $\boldsymbol{e}$ are not linearly   dependent.

\begin{lemma} 
 \label{sol-MV-model}
The optimal allocation $\boldsymbol{\omega}^*$ to the  mean-variance  
 portfolio selection problem of 
     \begin{equation}
    \label{model-MV}
        \begin{aligned}
            &\min_{\boldsymbol{\omega}\in\mathbb{R}^n} \boldsymbol{\omega}^{\top}\boldsymbol{\Sigma}\boldsymbol{\omega},\\
            & \ {\rm s.t.}    \quad \ \boldsymbol{\omega}^{\top}\boldsymbol{e}=1, \ \ \   \quad  -\boldsymbol{\omega}^{\top}\boldsymbol{\mu}= r_0, 
                    \end{aligned}
    \end{equation}
 is unique and given by 
 \by
 %  \label{w-M-V-model}
    \boldsymbol{\omega}^* =         \begin{pmatrix}
\boldsymbol{\Sigma}^{-1}\boldsymbol{\mu} & \boldsymbol{\Sigma}^{-1}\boldsymbol{e}
        \end{pmatrix}\begin{pmatrix}
            b_0 & -b_1\\-b_1 & b_2
        \end{pmatrix}\begin{pmatrix}
           -r_0 \\ 1
        \end{pmatrix}. \label{w-solution-MV-model}
   \ey
\end{lemma}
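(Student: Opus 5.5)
This is a standard equality-constrained quadratic program, and I would solve it by Lagrange multipliers. Since $\boldsymbol{\Sigma}\succ 0$, the objective $\boldsymbol{\omega}^\top\boldsymbol{\Sigma}\boldsymbol{\omega}$ is strictly convex on $\mathbb{R}^n$. The feasible set $\{\boldsymbol{\omega}:\boldsymbol{\omega}^\top\boldsymbol{e}=1,\ \boldsymbol{\omega}^\top\boldsymbol{\mu}=-r_0\}$ is an affine subspace. It is nonempty and of codimension two because $\boldsymbol{\mu}$ and $\boldsymbol{e}$ are linearly independent. A strictly convex quadratic that is coercive on a nonempty affine set attains a unique minimizer, which settles existence and uniqueness. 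That minimizer is characterized by the KKT (first-order) conditions, which are both necessary and sufficient here because the constraints are linear and the objective is convex.

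To compute it, I would form the Lagrangian
\[
\mathcal{L}(\boldsymbol{\omega},\eta_1,\eta_2)=\boldsymbol{\omega}^\top\boldsymbol{\Sigma}\boldsymbol{\omega}-2\eta_1(\boldsymbol{\omega}^\top\boldsymbol{\mu}+r_0)-2\eta_2(\boldsymbol{\omega}^\top\boldsymbol{e}-1).
\]
Stationarity in $\boldsymbol{\omega}$ gives $\boldsymbol{\omega}=\eta_1\boldsymbol{\Sigma}^{-1}\boldsymbol{\mu}+\eta_2\boldsymbol{\Sigma}^{-1}\boldsymbol{e}$. Substituting this into the two constraints yields the $2\times 2$ linear system
\[
\begin{pmatrix} c_2 & c_1\\ c_1 & c_0\end{pmatrix}\begin{pmatrix}\eta_1\\ \eta_2\end{pmatrix}=\begin{pmatrix}-r_0\\ 1\end{pmatrix}.
\]
Its determinant is $c_0c_2-c_1^2$, which is positive by Cauchy--Schwarz in the $\boldsymbol{\Sigma}^{-1}$ inner product together with the linear independence of $\boldsymbol{\mu}$ and $\boldsymbol{e}$, as already noted before the lemma. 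The system therefore has a unique solution.

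Inverting the matrix gives
\[
\frac{1}{c_0c_2-c_1^2}\begin{pmatrix} c_0 & -c_1\\ -c_1 & c_2\end{pmatrix}=\begin{pmatrix} b_0 & -b_1\\ -b_1 & b_2\end{pmatrix}.
\]
Hence $(\eta_1,\eta_2)^\top=\begin{pmatrix} b_0 & -b_1\\ -b_1 & b_2\end{pmatrix}(-r_0,1)^\top$. Plugging this back into $\boldsymbol{\omega}=(\boldsymbol{\Sigma}^{-1}\boldsymbol{\mu}\ \ \boldsymbol{\Sigma}^{-1}\boldsymbol{e})(\eta_1,\eta_2)^\top$ gives exactly \eqref{w-solution-MV-model}.

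There is no real obstacle in this argument. The only points needing care are the invertibility of the $2\times 2$ system, which rests on the non-collinearity assumption, and the justification that the stationary point is the global and unique minimizer, which follows from strict convexity.
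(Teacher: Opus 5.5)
Your proposal is correct and follows the same route as the paper, which simply invokes the Lagrangian method and cites \citet{constantinides1995portfolio}. Your write-up fills in the details the paper omits: the $2\times 2$ system, its invertibility via $c_0c_2-c_1^2>0$, and existence and uniqueness via strict convexity and coercivity of $\boldsymbol{\omega}^{\top}\boldsymbol{\Sigma}\boldsymbol{\omega}$ on the nonempty affine feasible set.
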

\begin{proof} This is a well-known result for the classical mean-variance  model \eqref{model-MV}, and its proof follows from the application of the Lagrangian method. See, for example, \citet{constantinides1995portfolio}  for details.
\end{proof}

\begin{lemma} 
\label{le-for-g(s)}
For $\alpha \in (0,1)$ and $0 \le \lambda < \beta < 1$,  define 
\be
\label{gs-TVaR-v2}
   g_{\lambda, \beta}(s)  \triangleq   f_{\lambda, \beta}(s,\sqrt{b_0s^2-2b_1s+b_2}),\quad  s \in \R,
\ee
where the function $f_{\lambda, \beta}(s, t)$ is defined   in \eqref{TVaR-f}  and solved in  Theorem 
\ref{Prop-wcTVaR-institutional}(b). 
Then $g_{\lambda, \beta}(s)$ is  strictly convex in $s \in \R$. Furthermore, if 
 the following   condition  holds: 
\be
\label{cons-for-g}
b_0+1 > b^* \triangleq  \max \Big \{\frac{1}{\alpha}, \, \ \frac{(2-\lambda-\beta)^2}{(\beta-\lambda)^2+4\alpha(1-\beta)(1-\lambda)} \Big \},
\ee
 then $g_{\lambda, \beta}(s)$ is also coercive in $s \in \R$ and thus $g_{\lambda, \beta}(s)$ has a unique minimizer on $\R$.      
    \end{lemma}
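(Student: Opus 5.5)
The plan is to prove strict convexity by a composition argument, and coercivity by reading off the linear growth rates of $g_{\lambda,\beta}$ at $\pm\infty$ from the explicit formulas in Theorem~\ref{Prop-wcTVaR-institutional}(b).

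\textbf{Step 1: the variance curve.} Write $\sigma(s)=\sqrt{b_0s^2-2b_1s+b_2}$. Since $b_0>0$ and $b_0b_2-b_1^2>0$, completing the square gives
\[
\sigma(s)=\sqrt{b_0\,(s-b_1/b_0)^2+(b_0b_2-b_1^2)/b_0}.
\]
This is the Euclidean norm of an affine map of $s$ whose constant part is nonzero. Hence $\sigma(s)>0$ and $\sigma$ is \emph{strictly} convex on $\R$ (it is a branch of a hyperbola).

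\textbf{Step 2: monotonicity of $f_{\lambda,\beta}$ in $t$.} I will show that $t\mapsto f_{\lambda,\beta}(s,t)$ is strictly increasing for each fixed $s$.
\begin{itemize}
\item On region (b1) the $t$-derivative is $(1-\lambda)\sqrt{\alpha/(1-\alpha)}>0$.
\item On region (b3) it is $(1-\beta)\sqrt{\alpha/(1-\alpha)}>0$.
\item On region (b2), $\Delta(s,t)$ is strictly increasing in $t$ because the factor $(\beta-\lambda)^2+4\alpha(1-\beta)(1-\lambda)$ is positive.
\end{itemize}
The sup-definition makes $f_{\lambda,\beta}$ continuous across the region boundaries (or one checks the formulas agree there), so strict monotonicity holds globally.

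\textbf{Step 3: strict convexity.} Take $s\neq s'$ and $\theta\in(0,1)$, and set $s_\theta=\theta s+(1-\theta)s'$. Step 1 gives $\sigma(s_\theta)<\theta\sigma(s)+(1-\theta)\sigma(s')$. Strict monotonicity from Step 2 followed by joint convexity from Proposition~\ref{f-convex} then yields
\[
g_{\lambda,\beta}(s_\theta)<f_{\lambda,\beta}\bigl(s_\theta,\theta\sigma(s)+(1-\theta)\sigma(s')\bigr)\le \theta g_{\lambda,\beta}(s)+(1-\theta)g_{\lambda,\beta}(s').
\]

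\textbf{Step 4: coercivity.} Since $\sigma(s)=\sqrt{b_0}\,|s|+O(1)$, the ratio $(d-s)/\sigma(s)$ tends to $\mp1/\sqrt{b_0}$ as $s\to\pm\infty$. Thus for $|s|$ large, $s$ eventually stays in one region of Theorem~\ref{Prop-wcTVaR-institutional}(b); if the limit ratio sits on a boundary $d_1$ or $d_2$, I will bound $g_{\lambda,\beta}$ below by the minimum of the two adjacent formulas, which agree on the boundary. Set $K=(\beta-\lambda)^2+4\alpha(1-\beta)(1-\lambda)$. The leading coefficient of $|s|$ in each branch is then:
\begin{itemize}
\item in (b1): $(1-\lambda)\bigl(\pm1+\sqrt{b_0\alpha/(1-\alpha)}\bigr)$;
\item in (b3): $(1-\beta)\bigl(\pm1+\sqrt{b_0\alpha/(1-\alpha)}\bigr)$;
\item in (b2): $\bigl(\sqrt{(b_0+1)K}\pm(2-\lambda-\beta)\bigr)/\bigl(2(1-\alpha)\bigr)$, using $\sigma^2+(d-s)^2\sim(b_0+1)s^2$.
\end{itemize}
Here $+$ corresponds to $s\to+\infty$ and $-$ to $s\to-\infty$. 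The $+$ cases are automatically positive. The $s\to-\infty$ cases are positive exactly when $b_0\alpha>1-\alpha$, i.e.\ $b_0+1>1/\alpha$, and when $(b_0+1)K>(2-\lambda-\beta)^2$. Together these are precisely condition~\eqref{cons-for-g}.

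Hence $g_{\lambda,\beta}(s)\to\infty$ as $|s|\to\infty$. A continuous coercive function on $\R$ attains its minimum, and strict convexity makes the minimizer unique.

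The main obstacle I expect is Step 4. It requires tracking which branch is active asymptotically, including the boundary cases, and checking that the two parts of $b^*$ cover every possible branch rather than only the one actually active. For the latter I plan simply to show positivity of all three slopes under~\eqref{cons-for-g}, which avoids deciding the branch at all.
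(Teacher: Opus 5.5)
Your proof is correct and follows the paper's structure: convexity from the composition rule (joint convexity of $f_{\lambda,\beta}$ via Proposition~\ref{f-convex}, monotonicity in $t$, convexity of $\sqrt{b_0s^2-2b_1s+b_2}$), and coercivity by showing that each branch formula $g_1,g_2,g_3$ grows with positive slope at $\pm\infty$, where the two halves of \eqref{cons-for-g} control the $s\to-\infty$ slopes of $g_1,g_3$ and of $g_2$ respectively. Your departures tighten the argument in two places. First, you get strictness from the strict convexity of $\sigma(s)$ plus strict monotonicity of $f_{\lambda,\beta}$ in $t$, instead of the paper's piecewise claim $g''_{\lambda,\beta}>0$, which is delicate at branch boundaries where $g_{\lambda,\beta}$ need not be twice differentiable; the continuity across boundaries that you need there is better justified by convexity of $f_{\lambda,\beta}$ or by checking the formulas agree, not by the sup-definition alone. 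Second, your bound $g_{\lambda,\beta}\ge\min\{g_1,g_2,g_3\}$ removes the need for the paper's omitted analysis of which branch is active near $\pm\infty$.
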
 

{\color{black} Based on the proof of Lemma \ref {le-for-g(s)},      the condition in \eqref{cons-for-g}   is a technical requirement that ensures the function $g_{\lambda, \beta} (s)$  has a unique minimizer.  We also point out that  inequality  \eqref{cons-for-g}   is typically satisfied 
when the confidence level   $\alpha$ is high and the penalty rate  $\lambda$ is small.
 In fact, as $\alpha \to 1$ and $\lambda \to 0$,
 we have that $b^* \to \max \big \{1, \frac{(2-\beta)^2}{\beta^2+4(1-\beta)} \big \} =1$. Note that $b_0 > 0$. In practice, the value of $\alpha$ used in CVaR is typically large,  with common choices of 0.90. 0.95, or 0.99. The penalty rate $ \lambda $ is typically small and may even be 
  zero, corresponding to a situation without penalties.  Therefore,  the  condition in \eqref{cons-for-g}  is mild 
   and  is  generally satisfied in  typical  practical situation.
   }

\begin{theorem}\label{CVaR-MV-PR}
 For   $\alpha\in(0,1)$  and  $0 \le \lambda <   \beta < 1$, if  the condition in \eqref {cons-for-g} holds, then   the optimal allocation 
      $\boldsymbol{\omega}^*$ to the robust portfolio selection problem    \eqref{portfolio CVaR}  or problem 
      \eqref{portfolio CVaR-kappa} with $\kappa=\frac{\beta-\lambda}{1-\beta}$  is unique and  given by   
      \by
        \boldsymbol{\omega}^*
          &= & \begin{pmatrix}
            \boldsymbol{\Sigma}^{-1}\boldsymbol{\mu} & \boldsymbol{\Sigma}^{-1}\boldsymbol{e}
        \end{pmatrix}\begin{pmatrix}
            b_0 & -b_1\\-b_1 & b_2
        \end{pmatrix}\begin{pmatrix}
            s^* \\ 1
        \end{pmatrix}, \label{w*-solution-MV-model-TVaR}
    \ey
  where     
    $
        s^*=    \argmin_{s\in\mathbb{R}} g_{\lambda, \beta}(s) $ is the unique minimizer of the function $g_{\lambda, \beta}(s)$ on $\R$ and   
  the function $g_{\lambda, \beta}(s)$
   is given in \eqref{gs-TVaR-v2}.
\end{theorem}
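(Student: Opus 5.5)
The plan is a two-stage reduction: collapse the multivariate worst case to a univariate one, then split the minimization over $\boldsymbol{\omega}$ into an inner mean-variance problem and an outer one-dimensional problem in $s$. By the equivalence \eqref{Sup-risk-m=s,t}, which rests on Lemma \ref{le:A=B}, the inner supremum in \eqref{portfolio CVaR} equals
\[
f_{\lambda,\beta}\big(\boldsymbol{\omega}^{\top}\boldsymbol{\mu},\sqrt{\boldsymbol{\omega}^{\top}\boldsymbol{\Sigma}\boldsymbol{\omega}}\big),
\]
with $f_{\lambda,\beta}$ as in \eqref{TVaR-f} and given explicitly by Theorem \ref{Prop-wcTVaR-institutional}(b). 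The problem \eqref{portfolio CVaR} therefore becomes
\[
\min_{\boldsymbol{\omega}^{\top}\boldsymbol{e}=1} f_{\lambda,\beta}\big(\boldsymbol{\omega}^{\top}\boldsymbol{\mu},\sqrt{\boldsymbol{\omega}^{\top}\boldsymbol{\Sigma}\boldsymbol{\omega}}\big).
\]
Equivalence with \eqref{portfolio CVaR-kappa} follows from translation invariance and positive homogeneity of CVaR, since $1-\beta>0$.

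Next, fix the portfolio mean loss $s=\boldsymbol{\omega}^{\top}\boldsymbol{\mu}$ and split the problem as
\[
\min_{s\in\R}\ \min\big\{f_{\lambda,\beta}(s,\sqrt{\boldsymbol{\omega}^{\top}\boldsymbol{\Sigma}\boldsymbol{\omega}}):\ \boldsymbol{\omega}^{\top}\boldsymbol{e}=1,\ \boldsymbol{\omega}^{\top}\boldsymbol{\mu}=s\big\}.
\]
The key observation is that $f_{\lambda,\beta}(s,t)$ is strictly increasing in $t>0$ for fixed $s$, so the inner minimization reduces to minimizing $\boldsymbol{\omega}^{\top}\boldsymbol{\Sigma}\boldsymbol{\omega}$ subject to these two constraints. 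This is exactly problem \eqref{model-MV} with $r_0=-s$. Lemma \ref{sol-MV-model} then gives the unique minimizer $\boldsymbol{\omega}(s)$ of the displayed form with $(s,1)^{\top}$. A direct computation using $b_0c_2-2b_1c_1+b_2c_0$ identities gives the minimal variance $b_0s^2-2b_1s+b_2$, so the reduced objective is exactly $g_{\lambda,\beta}(s)$ in \eqref{gs-TVaR-v2}.

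To verify the monotonicity in $t$, I would check each regime of Theorem \ref{Prop-wcTVaR-institutional}(b).
\begin{itemize}
\item In cases (b1) and (b3) the formula is linear in $t$ with positive coefficient $(1-\lambda)\sqrt{\alpha/(1-\alpha)}$ or $(1-\beta)\sqrt{\alpha/(1-\alpha)}$.
\item In case (b2), $\Delta$ is strictly increasing in $t$.
\end{itemize}
Continuity across the boundaries $d-s=t d_1$ and $d-s=t d_2$ then gives global strict monotonicity. If one prefers to avoid this case check, it also follows from the definition: enlarging $\sigma$ while keeping the mean fixed never decreases the worst case, and strictness comes from the explicit formulas.

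Finally, Lemma \ref{le-for-g(s)} states that $g_{\lambda,\beta}$ is strictly convex and, under \eqref{cons-for-g}, coercive, so it has a unique minimizer $s^*$. This yields the stated $\boldsymbol{\omega}^*=\boldsymbol{\omega}(s^*)$.

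Uniqueness of $\boldsymbol{\omega}^*$ is argued as follows. Suppose $\tilde{\boldsymbol{\omega}}$ is any optimizer, and set $\tilde s=\tilde{\boldsymbol{\omega}}^{\top}\boldsymbol{\mu}$. Then
\[
g_{\lambda,\beta}(\tilde s)\le f_{\lambda,\beta}\big(\tilde s,\sqrt{\tilde{\boldsymbol{\omega}}^{\top}\boldsymbol{\Sigma}\tilde{\boldsymbol{\omega}}}\big)=\min g_{\lambda,\beta},
\]
which forces $\tilde s=s^*$. Strict monotonicity in $t$ then forces $\tilde{\boldsymbol{\omega}}^{\top}\boldsymbol{\Sigma}\tilde{\boldsymbol{\omega}}$ to equal the minimal variance, and uniqueness in Lemma \ref{sol-MV-model} gives $\tilde{\boldsymbol{\omega}}=\boldsymbol{\omega}(s^*)$.

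The main obstacle is the strict monotonicity of $f_{\lambda,\beta}(s,\cdot)$ across the three regimes, together with the bookkeeping that the minimal-variance value is exactly $b_0s^2-2b_1s+b_2$. Both are routine once Theorem \ref{Prop-wcTVaR-institutional} and Lemma \ref{le-for-g(s)} are taken as given.
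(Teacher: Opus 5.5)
Your proposal is correct and follows essentially the same route as the paper: you reduce the inner supremum to $f_{\lambda,\beta}(\boldsymbol{\omega}^{\top}\boldsymbol{\mu},\sqrt{\boldsymbol{\omega}^{\top}\boldsymbol{\Sigma}\boldsymbol{\omega}})$ via Lemma \ref{le:A=B}, fix $s=\boldsymbol{\omega}^{\top}\boldsymbol{\mu}$, use monotonicity in $t$ to pass to the mean-variance problem of Lemma \ref{sol-MV-model} with minimal variance $b_0s^2-2b_1s+b_2$, and then invoke Lemma \ref{le-for-g(s)} for the unique $s^*$. Your regime-by-regime check of strict monotonicity in $t$, and your argument tracing uniqueness of $\boldsymbol{\omega}^*$ back through the inner problem, make rigorous a step the paper only asserts from the uniqueness of $s^*$.
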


By letting $\alpha \to 0$ in model \eqref{portfolio CVaR},
in the following theorem,  we obtain the explicit solution to the following  robust 
optimization problem:  
    \begin{equation}
\label{model m=1 MV uncertainty 1d}
\begin{aligned}
    &
    \min_{\boldsymbol{\omega} \in \mathbb{R}^n}  \, \sup_{G \in \mathcal{M}(\bm{\mu}, {\bf \Sigma})} \, \mathbb{E}[\boldsymbol{\omega}^{\top}\boldsymbol{X}^G+ \kappa (\boldsymbol{\omega}^\top \bm{X}^G -d)_+],\\
    & {\rm s.t.} \quad \boldsymbol{\omega}^{\top}\boldsymbol{e}=1,
\end{aligned}
\end{equation}
where $\kappa >0$ is defined in \eqref{kappa}.   

\begin{theorem}
\label{TVaR-MV-PR-a=0}
       If  $\kappa >  \kappa^*   \triangleq  \frac{2+2\sqrt{b_0+1}}{b_0}$, then the  optimal allocation   $\boldsymbol{\omega}^*$ to  the robust portfolio selection problem    \eqref{model m=1 MV uncertainty 1d}  is unique and 
has the following analytical expression:
    \begin{equation*}
  %  \label{optimalwupm1}
        \boldsymbol{\omega}^*=\begin{pmatrix}
            \boldsymbol{\Sigma}^{-1}\boldsymbol{\mu} & \boldsymbol{\Sigma}^{-1}\boldsymbol{e}
        \end{pmatrix}\begin{pmatrix}
            b_0 & -b_1\\-b_1 & b_2
        \end{pmatrix}\begin{pmatrix}
            s^* \\ 1 
        \end{pmatrix},
    \end{equation*}
where 
\be
\label{s*}
s^*=\frac{b_1+d}{b_0+1}+\frac{(\kappa+2)\sqrt{(4(\kappa+1)-b_0\kappa^2)(b_1^2+2b_1d-b_0b_2-b_0d^2-b_2)}}{(b_0+1)(4(\kappa+1)-b_0\kappa^2)}.
 \ee
 \end{theorem}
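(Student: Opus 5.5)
The plan is to reduce the robust problem \eqref{model m=1 MV uncertainty 1d} to a one-dimensional convex minimization over the portfolio mean $s=\boldsymbol{\omega}^\top\boldsymbol{\mu}$, and then solve that minimization in closed form.

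\emph{Step 1: the worst-case expectation in closed form.} By Lemma \ref{le:A=B} (the same reduction as \eqref{Sup-risk-m=s,t}, with $\rho=\E$), the inner supremum over $\mathcal{M}(\boldsymbol{\mu},\boldsymbol{\Sigma})$ equals the supremum over $\mathcal{S}(s,t)$, where $s=\boldsymbol{\omega}^\top\boldsymbol{\mu}$ and $t=\sqrt{\boldsymbol{\omega}^\top\boldsymbol{\Sigma}\boldsymbol{\omega}}$. I then apply Proposition \ref{pro-WC-TVaR-RP-a=0}(b) and divide by $1-\beta$, as in part (ii) of the remark after that proposition. Since $(2-\lambda-\beta)/(1-\beta)=\kappa+2$ and $(\beta-\lambda)/(1-\beta)=\kappa$, this gives
\[
\sup_{F\in\mathcal{S}(s,t)}\E[X^F+\kappa(X^F-d)_+]=d+\tfrac12\Big(\kappa\sqrt{t^2+(d-s)^2}-(\kappa+2)(d-s)\Big).
\]
Because $\kappa>0$, this expression is strictly increasing in $t>0$ for fixed $s$.

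\emph{Step 2: optimize the allocation for a fixed mean.} Fix $s$. By the monotonicity in Step 1, it is optimal to minimize $\boldsymbol{\omega}^\top\boldsymbol{\Sigma}\boldsymbol{\omega}$ subject to $\boldsymbol{\omega}^\top\boldsymbol{e}=1$ and $\boldsymbol{\omega}^\top\boldsymbol{\mu}=s$. This is Lemma \ref{sol-MV-model} with $r_0=-s$. It yields the unique allocation displayed in the theorem (with $s$ in place of $s^*$) and the minimal variance $b_0s^2-2b_1s+b_2$.

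Substituting this variance, the problem reduces to minimizing over $s\in\R$
\[
h(s)=d+\tfrac12\Big(\kappa\sqrt{As^2-2Bs+C}-(\kappa+2)(d-s)\Big),
\]
where $A=b_0+1$, $B=b_1+d$ and $C=b_2+d^2$.

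\emph{Step 3: convexity and coercivity of $h$.} First I check that the quadratic under the root is positive definite:
\[
AC-B^2=b_0b_2-b_1^2+b_0d^2-2b_1d+b_2>0.
\]
This holds because $b_0b_2-b_1^2>0$ and $b_0>0$, which make $b_0d^2-2b_1d+b_2>0$. Hence $s\mapsto\sqrt{As^2-2Bs+C}$ is the Euclidean norm of an affine map with nonsingular quadratic part, so it is strictly convex. The remaining term of $h$ is linear, so $h$ is strictly convex.

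As $s\to\pm\infty$, the slope of $h$ tends to $\tfrac12\big(\pm\kappa\sqrt{A}+(\kappa+2)\big)$. Thus $h$ is coercive if and only if $\kappa\sqrt{b_0+1}>\kappa+2$. This is equivalent to
\[
\kappa>\frac{2}{\sqrt{b_0+1}-1}=\frac{2(\sqrt{b_0+1}+1)}{b_0}=\kappa^*.
\]
This step is where the hypothesis on $\kappa$ is used. It is the main conceptual point: without it, the infimum is not attained. Strict convexity together with coercivity gives a unique minimizer $s^*$, and the uniqueness in Lemma \ref{sol-MV-model} then gives a unique $\boldsymbol{\omega}^*$.

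\emph{Step 4: the first-order condition.} The condition $h'(s)=0$ reads
\[
\kappa\,\frac{As-B}{\sqrt{As^2-2Bs+C}}=-(\kappa+2).
\]
Put $u=s-B/A$ and $D=C-B^2/A>0$. The condition becomes $\kappa Au=-(\kappa+2)\sqrt{Au^2+D}$. Squaring gives
\[
u^2A\big(\kappa^2A-(\kappa+2)^2\big)=(\kappa+2)^2D,
\]
and the sign condition picks $u<0$. Here $\kappa^2A-(\kappa+2)^2=-(4(\kappa+1)-b_0\kappa^2)>0$, and $AD=AC-B^2=-(b_1^2+2b_1d-b_0b_2-b_0d^2-b_2)>0$. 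The product under the root in \eqref{s*} is therefore a product of two negative numbers, hence positive.

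Rewriting $u$ with these quantities should give \eqref{s*}. The main obstacle here is the bookkeeping: verifying that the chosen root matches the sign convention in \eqref{s*}, since the denominator $4(\kappa+1)-b_0\kappa^2$ is negative. Concretely, I will check that the second term of \eqref{s*} is negative, so that $As^*-B<0$, as the first-order condition requires.
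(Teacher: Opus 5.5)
Your proposal is correct and follows essentially the paper's route. You reduce via Lemma \ref{le:A=B} and the minimum-variance frontier (Lemma \ref{sol-MV-model}) to the one-dimensional function $g_2$, get strict convexity and coercivity exactly when $\kappa\sqrt{b_0+1}>\kappa+2$ (i.e.\ $\kappa>\kappa^*$), and solve $g_2'(s)=0$. Your remaining differences are cosmetic: you start from Proposition \ref{pro-WC-TVaR-RP-a=0}(b) divided by $1-\beta$ rather than letting $\alpha\to0$ in $g_{\lambda,\beta}$. Your root choice $u<0$ does reproduce \eqref{s*}, because $4(\kappa+1)-b_0\kappa^2=(\kappa+2)^2-\kappa^2(b_0+1)<0$ makes its second term negative.
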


\begin{remark}
 {\color{black} 
 As pointed out in Section \ref{sec-intro},   the parameter $\kappa$ in  problems  \eqref{model-no-df-RP-Final-kappa} and   \eqref{model m=1 MV uncertainty 1d}  
 can be  directly interpreted   
  as a  weight assigned to the  downside risk,  and it  is
  not necessarily linked to the reward rate $\beta$ or the penalty rate $\lambda$. In fact,  
   $\kappa$  can be viewed as an  endogenous parameter of the model.
        Similar to  condition  \eqref{cons-for-g}  in Lemma  \ref{le-for-g(s)} and Theorem \ref{CVaR-MV-PR}, the condition  $\kappa > \kappa^*$ in Theorem \ref{TVaR-MV-PR-a=0}  is 
 a technical requirement that ensures  problem  \eqref{model m=1 MV uncertainty 1d}   admits  a   unique optimal allocation.
   Indeed,   if $\kappa \le \kappa^*$,    it is easy to verify  that the function $g_2(s)$, defined in the proof of Theorem  \ref{TVaR-MV-PR-a=0},  is 
   strictly increasing  on $s \in \R$, since      $g^\prime_2(s) >0$ holds  for all $s \in \R$. This implies 
   that  problem  \eqref{model m=1 MV uncertainty 1d}   has no solution when  $\kappa \le \kappa^*$.  In particular, problem 
     \eqref{model m=1 MV uncertainty 1d}     has no solution when  $\kappa=0$.

 We further point out that   if $\kappa$ is interpreted within the framework of a reward-penalty mechanism,  where
$\kappa = \frac{\beta - \lambda}{1 - \beta}$ as defined in \eqref{kappa}, then it may fall below the required threshold $\kappa^*$, particularly  for  small $\beta$.  This indicates  
a limitation in the applicability of the theorem when $\kappa$ is interpreted strictly within the reward-penalty framework. 
     On the other hand, our numerical results (Figures 1-2) reveal that when a portfolio is based 
on      \eqref{model m=1 MV uncertainty 1d}, there exists  a clear trade-off between the cumulative portfolio  wealth   and the worst-case portfolio CVaR: as $\kappa$ increases, portfolio wealth 
 declines, while portfolio CVaR improves.
Thus, if $\kappa$ is interpreted as the weight assigned to downside risk, then from a risk management perspective, it should be 
chosen to be larger. In this sense, the condition $\kappa > \kappa^*$ is meaningful, as it identifies a threshold required for the 
weight $\kappa$ to effectively control downside risk. 

 }
 \end{remark} 

\begin{remark} 
{\color{black} 
It is well known that the classical mean-variance (M-V) portfolio selection model \eqref{model-MV} has the following 
alternative formulation:
\begin{equation*}
\begin{aligned}
&\min_{\boldsymbol{\omega} \in \mathbb{R}^n} \ \boldsymbol{\omega}^{\top} \boldsymbol{\mu} + \frac{\gamma}{2} \boldsymbol{\omega}^{\top} \boldsymbol{\Sigma} \boldsymbol{\omega}, \
&\text{s.t.} \quad \boldsymbol{\omega}^{\top} \boldsymbol{e} = 1,
\end{aligned}
\end{equation*}
where $\gamma > 0$. This model balances expected portfolio losses with variance (volatility); see, for example, \cite{constantinides1995portfolio}. In contrast, model \eqref{model m=1 MV uncertainty 1d} essentially balances expected portfolio loss with the worst-case  expected downside risk. This motivates a comparison between model \eqref{model m=1 MV uncertainty 1d} and the following robust mean--worst-case risk-measure  portfolio selection model:
  \begin{equation}
\label{model-risk measure}
\begin{aligned}
&\min_{\boldsymbol{\omega} \in \mathbb{R}^n} \ \boldsymbol{\omega}^{\top} \boldsymbol{\mu} + \kappa  \sup_{G \in \mathcal{M}(\boldsymbol{\mu}, \boldsymbol{\Sigma})} \rho(\boldsymbol{\omega}^{\top} \boldsymbol{X}^G), \
&\text{s.t.} \quad \boldsymbol{\omega}^{\top} \boldsymbol{e} = 1,
\end{aligned}
\end{equation}
where $\kappa > 0$, $\boldsymbol{\mu}$ is the mean vector of the loss vector $\boldsymbol{X}$, and $\rho$ is a risk measure such as VaR, CVaR, or another measure commonly used to capture right-tail risks.

In Section 6, we compare our proposed models with model \eqref{model-risk measure}, focusing on the case where $\rho$ is CVaR, along with  several related to ours.
Solving problem \eqref{model-risk measure} is technically straightforward if the worst-case value
$\sup_{G \in \mathcal{M}(\boldsymbol{\mu}, \boldsymbol{\Sigma})} \rho(\boldsymbol{\omega}^{\top} \boldsymbol{X}^G)$
is available in closed form.
We note that for any distortion risk measure $\rho$, a closed-form expression for this worst-case value has been derived in \cite{CLM25}.
}
 \end{remark}

\section{Portfolio selection under   a generalized 
  mean-covariance-based uncertainty set}\label{sec-optimal-selection-g}

In this section, we study the optimal robust portfolio selection  problem  \eqref{model-no-df-RP-Final} when the uncertainty set $\cal S$  is the generalized mean-covariance-based uncertainty set $\mathbb{D}_n(\gamma_1,\gamma_2)$ defined in \eqref{set-D} and the risk measure $\rho$ is CVaR.  The problem is formulated as:
\begin{equation}
\label{CVaR-D}
    \begin{aligned}
        \min_{\boldsymbol{\omega}\in\mathbb{R}^n}&\sup_{G \in \mathbb{D}_n(\gamma_1,\gamma_2)} 
        {\rm CVaR}_{\alpha}((1-\beta)\boldsymbol{\omega}^{\top}\boldsymbol{X}^G+(\beta-\lambda)(\boldsymbol{\omega}^{\top}\boldsymbol{X}^G-d)_{+}),\\
        {\rm s.t.} & \ \min_{G \in \mathbb{D}_n(\gamma_1,\gamma_2)}\mathbb{E}[-\boldsymbol{\omega}^{\top}\boldsymbol{X}^G]\geq r,\ \ \ \boldsymbol{\omega}^{\top}\boldsymbol{e}=1.
    \end{aligned}
\end{equation}

In this formulation, we introduce the constraint $ \min_{G \in \mathbb{D}_n(\gamma_1,\gamma_2)}\mathbb{E}[-\boldsymbol{\omega}^{\top}\boldsymbol{X}^G]\geq r.$
 This constraint 
 ensures that the worst-case expected portfolio return over the uncertainty set is at least $r$. Such a constraint is commonly employed in robust portfolio selection problems (see, for example,   \cite{blanchet2022distributionally},  \citet{kang2019data}, and references therein). Due to the complexity of the uncertainty set  $\mathbb{D}_n(\gamma_1,\gamma_2)$, explicit solutions to problem \eqref{CVaR-D} are generally unavailable, except in several special cases. However, we demonstrate that the optimal robust portfolio selection problem \eqref{CVaR-D} can be reformulated as a convex optimization problem that can be solved numerically. To this end, 
 we define the following  vector and matrix  spaces:  
\by
\label{uncertainty set in short}
      \mathcal{U}_{\hat{\boldsymbol{\mu}}} &=&  \left\{\boldsymbol{\mu}\in \mathbb{R}^n:\ (\boldsymbol{\mu}-\hat{\boldsymbol{\mu}})^{\top}\hat{\boldsymbol{\Sigma}}^{-1}(\boldsymbol{\mu}-\hat{\boldsymbol{\mu}})\leq \gamma_{1}\right\},\\
        \mathcal{U}_{\hat{\boldsymbol{\Sigma}}}  &=&  \left\{\boldsymbol{\Sigma}\in S_{+}^n: \  \ \|\boldsymbol{\Sigma}-\hat{\boldsymbol{\Sigma}}\|_{F}\leq \gamma_2\right\}, \\
  \mathcal{U}_{(\hat{\boldsymbol{\mu}},\hat{\boldsymbol{\Sigma}})}  &=&  \left\{(\boldsymbol{\mu},\boldsymbol{\Sigma})\in \mathbb{R}^n \times S_{+}^n: \,  (\boldsymbol{\mu}-\hat{\boldsymbol{\mu}})^{\top}\hat{\boldsymbol{\Sigma}}^{-1}(\boldsymbol{\mu}-\hat{\boldsymbol{\mu}})\leq \gamma_{1}, \  \|\boldsymbol{\Sigma}-\hat{\boldsymbol{\Sigma}}\|_{F}\leq \gamma_2\right\},  
 \ey
where  $S_{+}^n$ denotes the set of all $n \times n$ positive definite matrices. Note that it holds that 
$
  \mathcal{U}_{(\hat{\boldsymbol{\mu}},\hat{\boldsymbol{\Sigma}})}  
  =\mathcal{U}_{\hat{\boldsymbol{\mu}}} \times  \mathcal{U}_{\hat{\boldsymbol{\Sigma}}}.   
$

\begin{theorem}
\label{th:ROPS-TVaR}
  For    $0 \le \lambda <  \beta <  1$. The optimal allocation $\boldsymbol{\omega}^*$  to the robust portfolio selection  
   problem \eqref{CVaR-D} is the solution to  the following convex optimization problem:
      \begin{equation}
\label{TVaR-convex-expression}
    \begin{aligned}
        \min_{\boldsymbol{\omega} \in \R^n, \, s \in \R^+, \,  t \in \R^+} \ & f_{\lambda, \beta}\big (\boldsymbol{\omega}^{\top} \hat{\boldsymbol{\mu}} + \sqrt{\gamma_1} \,  t, \  s \big ), \\        
      {\rm s.t.} \ &\sqrt{\gamma_1}\sqrt{\boldsymbol{\omega}^{\top}\hat{\boldsymbol{\Sigma}}\boldsymbol{\omega}}\leq -\hat{\boldsymbol{\mu}}^{\top}\boldsymbol{\omega}- r,\\
        &{\color{black} \sqrt{\boldsymbol{\omega}^{\top}(\hat{\boldsymbol{\Sigma}}+\gamma_2 \boldsymbol{I}_n)\boldsymbol{\omega}}} \leq s,\\
        & \sqrt{\boldsymbol{\omega}^{\top}\hat{\boldsymbol{\Sigma}}\boldsymbol{\omega}}\leq t, \ \ \ \  \boldsymbol{\omega}^{\top}\boldsymbol{e}=1,
    \end{aligned}
\end{equation}    
    where the function $f_{\lambda, \beta}(s, t)$ is defined in \eqref{TVaR-f} and its expressions are given in Theorem \ref{Prop-wcTVaR-institutional}(b).  
      \end{theorem}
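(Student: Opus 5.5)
The plan is to reduce the inner supremum over $\mathbb{D}_n(\gamma_1,\gamma_2)$ to a supremum over moments, and then to use monotonicity properties of $f_{\lambda,\beta}$.

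\emph{Step 1: reduce to moments.} First, $\mathbb{D}_n(\gamma_1,\gamma_2)$ is the union of $\mathcal{M}(\boldsymbol{\mu},\boldsymbol{\Sigma})$ over $(\boldsymbol{\mu},\boldsymbol{\Sigma})\in\mathcal{U}_{(\hat{\boldsymbol{\mu}},\hat{\boldsymbol{\Sigma}})}$. Combining this with \eqref{Sup-risk-m=s,t} and the definition \eqref{TVaR-f}, the objective of \eqref{CVaR-D} becomes
\[
\sup_{(\boldsymbol{\mu},\boldsymbol{\Sigma})\in\mathcal{U}_{\hat{\boldsymbol{\mu}}}\times\mathcal{U}_{\hat{\boldsymbol{\Sigma}}}} f_{\lambda,\beta}\big(\boldsymbol{\omega}^\top\boldsymbol{\mu},\sqrt{\boldsymbol{\omega}^\top\boldsymbol{\Sigma}\boldsymbol{\omega}}\big).
\]
Similarly, the return constraint reads $\max_{\boldsymbol{\mu}\in\mathcal{U}_{\hat{\boldsymbol{\mu}}}}\boldsymbol{\omega}^\top\boldsymbol{\mu}\le -r$. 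By Cauchy--Schwarz in the $\hat{\boldsymbol{\Sigma}}^{-1}$-inner product, $\max_{\boldsymbol{\mu}\in\mathcal{U}_{\hat{\boldsymbol{\mu}}}}\boldsymbol{\omega}^\top\boldsymbol{\mu}=\boldsymbol{\omega}^\top\hat{\boldsymbol{\mu}}+\sqrt{\gamma_1}\sqrt{\boldsymbol{\omega}^\top\hat{\boldsymbol{\Sigma}}\boldsymbol{\omega}}$, which gives the first constraint. For the covariance term, $\boldsymbol{\omega}^\top\boldsymbol{\Sigma}\boldsymbol{\omega}=\boldsymbol{\omega}^\top\hat{\boldsymbol{\Sigma}}\boldsymbol{\omega}+\mathrm{tr}((\boldsymbol{\Sigma}-\hat{\boldsymbol{\Sigma}})\boldsymbol{\omega}\boldsymbol{\omega}^\top)$. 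Cauchy--Schwarz in the Frobenius inner product bounds the trace by $\gamma_2\|\boldsymbol{\omega}\|^2$, with equality at $\boldsymbol{\Sigma}=\hat{\boldsymbol{\Sigma}}+\gamma_2\boldsymbol{\omega}\boldsymbol{\omega}^\top/\|\boldsymbol{\omega}\|^2$, which is still positive definite. Hence the largest attainable standard deviation is $\sqrt{\boldsymbol{\omega}^\top(\hat{\boldsymbol{\Sigma}}+\gamma_2\boldsymbol{I}_n)\boldsymbol{\omega}}$. The mean and covariance range independently because the set is a product.

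\emph{Step 2: monotonicity of $f_{\lambda,\beta}$.} Next I show that $f_{\lambda,\beta}(m,s)$ is nondecreasing in each argument. For $m$, this follows from translation: $\ell_d$ is increasing and CVaR is monotone, so shifting $X$ up raises the value; equivalently, differentiate the formulas of Theorem~\ref{Prop-wcTVaR-institutional}(b). For $s$, the derivative of \eqref{gb2} in $\sigma$ is $\sigma\Delta/((\sigma^2+(d-\mu)^2)2(1-\alpha))\ge0$, and (b1), (b3) are increasing in $\sigma$. Alternatively, Proposition~\ref{f-convex} together with the limit as $\sigma\to0$ gives monotonicity. Granting this, the inner supremum equals $f_{\lambda,\beta}\big(\boldsymbol{\omega}^\top\hat{\boldsymbol{\mu}}+\sqrt{\gamma_1}\sqrt{\boldsymbol{\omega}^\top\hat{\boldsymbol{\Sigma}}\boldsymbol{\omega}},\sqrt{\boldsymbol{\omega}^\top(\hat{\boldsymbol{\Sigma}}+\gamma_2\boldsymbol{I}_n)\boldsymbol{\omega}}\big)$.

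\emph{Step 3: epigraph reformulation and convexity.} Introduce $t\ge\sqrt{\boldsymbol{\omega}^\top\hat{\boldsymbol{\Sigma}}\boldsymbol{\omega}}$ and $s\ge\sqrt{\boldsymbol{\omega}^\top(\hat{\boldsymbol{\Sigma}}+\gamma_2\boldsymbol{I}_n)\boldsymbol{\omega}}$. By monotonicity, minimizing $f_{\lambda,\beta}(\boldsymbol{\omega}^\top\hat{\boldsymbol{\mu}}+\sqrt{\gamma_1}t,s)$ over these relaxations attains the optimum at equality, so the problems are equivalent and share the optimal $\boldsymbol{\omega}^*$. Convexity then holds on several counts. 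The objective is a convex function (Proposition~\ref{f-convex}) of an affine map of $(\boldsymbol{\omega},s,t)$. The norm constraints are second-order cone constraints, the first being $\sqrt{\gamma_1}\|\hat{\boldsymbol{\Sigma}}^{1/2}\boldsymbol{\omega}\|\le$ affine. The budget constraint is linear.

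\emph{Main obstacle.} The delicate step is Step 2's monotonicity in $\sigma$ across the case boundaries $d-\mu=\sigma d_1,\sigma d_2$. This requires checking continuity of the piecewise formula, which one can obtain from the variational origin via Lemma~\ref{le-3.1}. The cleanest route is to argue monotonicity directly: $\mathcal{S}(\mu,\sigma)$ can be enlarged in a mean-preserving-spread way that only increases CVaR of the increasing convex-type map. Since $\ell_d$ is convex when $\lambda<\beta$, the composition is an increasing convex function. One would then show that every $F\in\mathcal{S}(\mu,\sigma)$ is dominated in convex order by some $F'\in\mathcal{S}(\mu,\sigma')$ for $\sigma'>\sigma$, for instance by adding an independent symmetric two-point noise with variance $\sigma'^2-\sigma^2$.
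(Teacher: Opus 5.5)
Your proposal is correct and follows the paper's proof essentially step for step. You write $\mathbb{D}_n(\gamma_1,\gamma_2)$ as the union of the sets $\mathcal{M}(\boldsymbol{\mu},\boldsymbol{\Sigma})$ and reduce the objective to $f_{\lambda,\beta}(\boldsymbol{\omega}^\top\boldsymbol{\mu},\sqrt{\boldsymbol{\omega}^\top\boldsymbol{\Sigma}\boldsymbol{\omega}})$. You then use the monotonicity of $f_{\lambda,\beta}$ in each argument to substitute the extremal mean and covariance, introduce the auxiliary variables $s,t$, and invoke Proposition~\ref{f-convex} for convexity. The differences are minor. You obtain $\boldsymbol{\mu}^*$ and $\boldsymbol{\Sigma}^*$ by Cauchy--Schwarz instead of by KKT and the citation to \citet{kang2019data}. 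You also justify the monotonicity of $f_{\lambda,\beta}$ explicitly: by translation in the mean, and by convex order or convexity plus the $\sigma\to 0$ limit in $\sigma$. The paper simply reads this monotonicity off Theorem~\ref{Prop-wcTVaR-institutional}(b).
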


Let $0 \le \lambda < \beta < 1$ and  $\alpha \to 0$ in problem \eqref{CVaR-D}. In this case, problem \eqref{CVaR-D} simplifies to 
 the   following equivalent robust portfolio selection  model:
\begin{equation}
\label{E-D}
\begin{aligned}
    \min_{\bm{w} \in \mathbb{R}^n}  \, &\sup_{G \in \mathcal{\mathbb{D}}_{n}(\gamma_1,\gamma_2)} \, 
    \E[ \boldsymbol{\omega}^{\top}\boldsymbol{X}^G+ \kappa (\boldsymbol{\omega}^{\top}\boldsymbol{X}^G-d)_{+} ] \\
   \textrm{s.t.}
  &\min_{G \in \mathcal{\mathbb{D}}_{n}(\gamma_1,\gamma_2)}\mathbb{E}[-\boldsymbol{\omega}^{\top}\boldsymbol{X}^G]\geq r,\ \ \ \ \  \boldsymbol{\omega}^{\top}\boldsymbol{e}=1,
\end{aligned}
\end{equation}
where $\kappa$ is defined in \eqref{kappa}.  
Then, by letting $\alpha \to 0$ in Theorem \ref{th:ROPS-TVaR}    we obtain the following proposition.
\begin{proposition}
\label{Pro-5.1}
%For   $0 \le \lambda <   \beta < 1$. 
The optimal allocation  $\boldsymbol{\omega}^*$ to the robust portfolio selection problem  \eqref{E-D}   is the solution to  the following  convex optimization problem: 
    \begin{equation}
    \label{prop 1}
    \begin{aligned}
        \min_{\boldsymbol{\omega} \in \R^n, \, s \in \R^+, \, t \in \R^+} \ &(1+\frac{\kappa}{2})(\hat{\boldsymbol{\mu}}^{\top}\boldsymbol{\omega}+\sqrt{\gamma_1}t)-\frac{\kappa}{2}d+\frac{\kappa}{2}\sqrt{s^2+(\hat{\boldsymbol{\mu}}^{\top}\boldsymbol{\omega}+\sqrt{\gamma_1}t-d)^2},\\
                 {\rm s.t.} \ &\sqrt{\gamma_1}\sqrt{\boldsymbol{\omega}^{\top}\hat{\boldsymbol{\Sigma}}\boldsymbol{\omega}} \le 
                 -\hat{\boldsymbol{\mu}}^{\top}\boldsymbol{\omega}  - r,\\
        &{\color{black}\sqrt{\boldsymbol{\omega}^{\top}(\hat{\boldsymbol{\Sigma}}+\gamma_2 \boldsymbol{I}_n)\boldsymbol{\omega}}} \leq s,\\
        & \sqrt{\boldsymbol{\omega}^{\top}\hat{\boldsymbol{\Sigma}}\boldsymbol{\omega}}\leq t, \ \ \ \ \boldsymbol{\omega}^{\top}\boldsymbol{e}=1.
    \end{aligned}
\end{equation} 
%where $\kappa =\frac{\beta-\lambda}{1-\beta} $ as  defined in \eqref{kappa} . 
\end{proposition}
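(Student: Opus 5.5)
The plan is to derive Proposition \ref{Pro-5.1} directly, without passing to the limit $\alpha\to 0$ inside Theorem \ref{th:ROPS-TVaR}. I would repeat the reduction behind Theorem \ref{th:ROPS-TVaR} with ${\rm CVaR}_\alpha$ replaced by $\E$, and then insert the closed form from Proposition \ref{pro-WC-TVaR-RP-a=0}(b).

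\textbf{Step 1 (reduction).} Because $1-\beta>0$ and $\E$ is linear, the objective in \eqref{E-D} equals $\frac{1}{1-\beta}$ times the corresponding worst-case expectation of $(1-\beta)\boldsymbol{\omega}^\top\boldsymbol{X}^G+(\beta-\lambda)(\boldsymbol{\omega}^\top\boldsymbol{X}^G-d)_+$, up to the additive constant $\beta d$. Split $\mathbb{D}_n(\gamma_1,\gamma_2)$ according to its mean and covariance, $(\boldsymbol{\mu},\boldsymbol{\Sigma})\in\mathcal{U}_{(\hat{\boldsymbol{\mu}},\hat{\boldsymbol{\Sigma}})}$. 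For each fixed pair, apply the equivalence \eqref{Sup-risk-m=s,t} (Lemma \ref{le:A=B}) with $\rho=\E$. Then use Proposition \ref{pro-WC-TVaR-RP-a=0}(b) and divide by $1-\beta$. With $\kappa$ as in \eqref{kappa}, the inner supremum becomes
\[
h(m,\sigma)\triangleq \Big(1+\frac{\kappa}{2}\Big)m-\frac{\kappa}{2}d+\frac{\kappa}{2}\sqrt{\sigma^2+(m-d)^2},
\]
where $m=\boldsymbol{\omega}^\top\boldsymbol{\mu}$ and $\sigma^2=\boldsymbol{\omega}^\top\boldsymbol{\Sigma}\boldsymbol{\omega}$. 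This formula uses $\frac{2-\lambda-\beta}{2(1-\beta)}=1+\frac{\kappa}{2}$ and $\frac{1-\beta-(2-\lambda-\beta)}{1-\beta}\cdot d=-\kappa d$. The $-\kappa d$ combines with the $\frac{\kappa}{2}d$ coming from the formula to give the $-\frac{\kappa}{2}d$ in \eqref{prop 1}.

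\textbf{Step 2 (monotonicity).} The function $h$ is increasing in $\sigma\ge0$. It is also increasing in $m$, since $\partial_m h\ge 1+\frac{\kappa}{2}-\frac{\kappa}{2}>0$. Therefore the supremum over $\mathcal{U}_{\hat{\boldsymbol{\mu}}}\times\mathcal{U}_{\hat{\boldsymbol{\Sigma}}}$ is attained by maximising $m$ and $\sigma$ separately. The standard ellipsoid bound gives $\max_{\mathcal{U}_{\hat{\boldsymbol{\mu}}}}\boldsymbol{\omega}^\top\boldsymbol{\mu}=\boldsymbol{\omega}^\top\hat{\boldsymbol{\mu}}+\sqrt{\gamma_1}\sqrt{\boldsymbol{\omega}^\top\hat{\boldsymbol{\Sigma}}\boldsymbol{\omega}}$ (Cauchy--Schwarz in the $\hat{\boldsymbol{\Sigma}}^{-1}$ inner product). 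The Frobenius ball gives $\sup_{\mathcal{U}_{\hat{\boldsymbol{\Sigma}}}}\boldsymbol{\omega}^\top\boldsymbol{\Sigma}\boldsymbol{\omega}=\boldsymbol{\omega}^\top(\hat{\boldsymbol{\Sigma}}+\gamma_2\boldsymbol{I}_n)\boldsymbol{\omega}$ for $\boldsymbol{\omega}^\top\boldsymbol{e}=1$, attained at $\boldsymbol{\Sigma}=\hat{\boldsymbol{\Sigma}}+\gamma_2\boldsymbol{\omega}\boldsymbol{\omega}^\top/\|\boldsymbol{\omega}\|^2$. Both facts are exactly as used in Theorem \ref{th:ROPS-TVaR}.

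The same ellipsoid bound, with $-\boldsymbol{\mu}$ in place of $\boldsymbol{\mu}$, turns the return constraint into $\sqrt{\gamma_1}\sqrt{\boldsymbol{\omega}^\top\hat{\boldsymbol{\Sigma}}\boldsymbol{\omega}}\le-\hat{\boldsymbol{\mu}}^\top\boldsymbol{\omega}-r$.

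\textbf{Step 3 (epigraph and convexity).} Introduce auxiliary variables $t\ge\sqrt{\boldsymbol{\omega}^\top\hat{\boldsymbol{\Sigma}}\boldsymbol{\omega}}$ and $s\ge\sqrt{\boldsymbol{\omega}^\top(\hat{\boldsymbol{\Sigma}}+\gamma_2\boldsymbol{I}_n)\boldsymbol{\omega}}$. Since $h$ is increasing in both arguments, the minimum is unchanged and the optimal $s,t$ sit on their bounds, which yields \eqref{prop 1}.

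Convexity holds for three reasons:
\begin{itemize}
\item $h$ is jointly convex in $(m,\sigma)$, being a linear term plus a Euclidean norm of an affine map.
\item $h(\hat{\boldsymbol{\mu}}^\top\boldsymbol{\omega}+\sqrt{\gamma_1}t,\,s)$ is therefore convex in $(\boldsymbol{\omega},s,t)$.
\item The constraints are second-order cone constraints together with one linear equality.
\end{itemize}
Alternatively, convexity follows from Proposition \ref{f-convex}.

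\textbf{Main obstacle.} The delicate point is Step 2, for two reasons. First, attainment of the covariance supremum must be checked within positive definite matrices, because the set $\mathcal{U}_{\hat{\boldsymbol{\Sigma}}}$ is open at its boundary. Second, the mean and covariance choices are decoupled only because the constraint set is a product. The sup in the objective need not be attained; it suffices that it equals the stated value as a supremum. I would handle both points by citing the corresponding steps in the proof of Theorem \ref{th:ROPS-TVaR}.
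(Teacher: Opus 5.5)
Your proof is correct, and it differs from the paper's in how the $\E$-objective is reached. The paper formally lets $\alpha\to0$ in the closed form of Theorem~\ref{Prop-wcTVaR-institutional}(b), using $d_1\to-\infty$, $d_2\to\infty$ and $\Delta\to(\beta-\lambda)\sqrt{\sigma^2+(d-\mu)^2}$. It observes that the objective of Theorem~\ref{th:ROPS-TVaR} becomes $(1-\beta)$ times the objective of \eqref{prop 1}, and then invokes Theorem~\ref{th:ROPS-TVaR} wholesale. This implicitly requires the limit to commute with the supremum over $\mathbb{D}_n(\gamma_1,\gamma_2)$ and with the minimisation over $\boldsymbol{\omega}$, which the paper does not address.

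You instead rerun the reduction with $\rho=\E$ itself: Lemma~\ref{le:A=B}, monotonicity in $(m,\sigma)$, the maximisers $\boldsymbol{\mu}^*$ and $\boldsymbol{\Sigma}^*$, and the epigraph variables. So no interchange of limits is needed. The paper's route is shorter; yours is self-contained.

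You can make yours cleaner still. By linearity of $\E$ and the stop-loss bound of \cite{jagannathan1977minimax}, $\sup_{F\in\mathcal{S}(m,\sigma)}\E[X^F+\kappa(X^F-d)_+]=m+\frac{\kappa}{2}\big(m-d+\sqrt{\sigma^2+(m-d)^2}\big)=h(m,\sigma)$ directly. This removes the detour through $1-\beta$. It also removes any reliance on Proposition~\ref{pro-WC-TVaR-RP-a=0}, which the paper itself derives by the same limit $\alpha\to0$.

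A few cosmetic slips:
\begin{itemize}
\item There is no additive constant $\beta d$ here, because $(1-\beta)x+(\beta-\lambda)(x-d)_+=(1-\beta)\big(x+\kappa(x-d)_+\big)$ exactly.
\item $\frac{1-\beta-(2-\lambda-\beta)}{1-\beta}=-(1+\kappa)$, not $-\kappa$. The correct bookkeeping is $\big((1-\beta)-\frac{2-\lambda-\beta}{2}\big)\frac{d}{1-\beta}=-\frac{\kappa}{2}d$, which is what your $h$ in fact contains.
\item Your ``main obstacle'' about the covariance supremum is not a real one. The matrix $\boldsymbol{\Sigma}^*=\hat{\boldsymbol{\Sigma}}+\gamma_2\boldsymbol{\omega}\boldsymbol{\omega}^\top/\|\boldsymbol{\omega}\|_2^2$ is positive definite and satisfies $\|\boldsymbol{\Sigma}^*-\hat{\boldsymbol{\Sigma}}\|_F=\gamma_2$. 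Hence the supremum is attained inside $\mathcal{U}_{\hat{\boldsymbol{\Sigma}}}$.
\end{itemize}
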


\section{Numerical illustrations of   the portfolio selection models  using   real data}\label{sec-num-exam}

In this section, we present empirical experiments using real market data to demonstrate the application of the portfolio selection models proposed in Sections \ref{sec-optimal-selection} and \ref{sec-optimal-selection-g} of this paper. Specifically, we conduct two sets of experiments. First, we compare the performance of portfolios constructed using the models introduced in this paper, which are based on the mean-covariance uncertainty set $\mathcal{M}(\boldsymbol{\mu},\boldsymbol{\Sigma})$ defined in \eqref{MV uncertainty set}, with those obtained from several related portfolio selection models in the existing literature. Second, we compare the performance of portfolios under the uncertainty set $\mathcal{M}(\boldsymbol{\mu},\boldsymbol{\Sigma})$ with those under the generalized mean-covariance uncertainty set $\mathbb{D}_{n}(\gamma_1, \gamma_2)$ defined in \eqref{set-D}, and discuss the estimation of the ambiguity size parameters $\gamma_1$ and $\gamma_2$.  
{\color{black} Performance comparisons are conducted using cumulative portfolio  wealth  as well as
 the worst-case portfolio  CVaR.}

In the   empirical experiments conducted in this section, we selected the three largest companies (denoted by their stock trading symbols) by market capitalization from each of five distinct sectors: Technology (AAPL, NVDA, MSFT), Healthcare (JNJ, ABBV, LLY), Financial Services (BRK-B, JPM, V), Communication Services (GOOG, META, NFLX), and Consumer Defensive (WMT, COST, KO). Historical price data for these stocks were collected from Yahoo! Finance, covering eight years from March 2017 to February 2025. 
The first three years (March 2017 to February  2020), representing the pre-COVID period, were used as the training set. The subsequent three years (March 2020 to 
 February  2023), corresponding to the COVID period, together with the final two years (March 2023
to February 2025), representing the post-COVID period, constitute the test set. 
   The daily loss of a stock on (trading) day $t$, denoted by $\mu_t$, is computed as $\mu_t = -(P_{t+1} - P_{t}) / P_{t}$, where $P_{t}$ is the adjusted closing price of the stock on trading day $t$.

Throughout the experiments, we adopt a rebalancing strategy to evaluate and compare the performance of various portfolios using a rolling horizon with a 756-day window, which corresponds to approximately three years of trading data, assuming 252 trading days per year. For each portfolio selection model, the initial optimal portfolio
 $\boldsymbol{\omega}_{0}^*$ is obtained using   the first window in the training period. The first out-of-sample expected 
 portfolio return and the corresponding worst-case  portfolio CVaR  are then calculated as  
   $-\boldsymbol{\omega}_{0}^{*\top}\boldsymbol{\mu}_{1}$ and 
     $\sup_{G \in \mathcal{M}(\boldsymbol{\mu},\boldsymbol{\Sigma})}  \text{CVaR}_{\alpha} \big( \boldsymbol{\omega_0}^{*\top}\boldsymbol{X}^G\big)$, 
    where $\boldsymbol{\mu}_1$ denotes the expected daily loss vector of the selected  stocks on the first 
    trading day  in the test period. All mean vectors and covariance matrices used in the empirical experiments are computed as sample estimates based on the data within the corresponding window. 
    We then roll the window  forward one trading day at a time and 
     obtain  the out-of-sample expected portfolio returns and worst-case portfolio CVaRs for all subsequent trading days from March 2020 to February 2025.    
    
    The numerical results for these performance metrics are presented in Figures 1--3, corresponding to the two sets of empirical experiments described in Sections 6.1 and 6.2, respectively.

\subsection{First set of empirical experiments} \label{sec-6.1}
In the first set of experiments, we evaluate the performance of the portfolios derived from the models proposed in this paper, which are formulated under the mean-covariance uncertainty set $\mathcal{M}(\boldsymbol{\mu},\boldsymbol{\Sigma})$, in comparison with those constructed using several existing models that are related to ours. The models included in the comparison are as follows:

\begin{enumerate}[(a)]
          \item The mean--worst-case expected downside risk  ({\color{black}  M-WC-EDR}) model: This is  model  
     \eqref{model m=1 MV uncertainty 1d} that    
       minimizes  
     the mean loss weighted by   worst-case expected  downside risk.     
     The optimal  allocation $\boldsymbol{\omega}^*$ under this model  is given  in 
    Theorem \ref{TVaR-MV-PR-a=0}.

    \item The worst-case expected downside risk ({\color{black} WC-EDR}) model:    This is 
     \cite{chen2011tight}'s model that   
        minimizes     
      the worst-case  expected downside risk and is formulated as: 
          \by
   % \label{UPM-1}
       \begin{aligned}
   &  \min_{\boldsymbol{\omega\in\mathbb{R}^n}}\sup_{G\in\mathcal{M}(\boldsymbol{\mu},\boldsymbol{\Sigma})}\mathbb{E}[(\boldsymbol{\omega}^{\top}\boldsymbol{X}^G-d)_{+}], \  \  \  \  \text{s.t.}  \ \ \ \  \boldsymbol{\omega}^{\top}\boldsymbol{e}=1, 
                    \end{aligned}    
    \ey
where $d \in \R$.  The  optimal  allocation  $\boldsymbol{\omega}^*$  under this model     is given  in  \cite{chen2011tight}.

    \item The worst-case CVaR of the balanced portfolio loss ({\color{black} WC-CVaR-BPL}) model: This is 
       model \eqref{portfolio CVaR-kappa}, which       minimizes 
     the worst-case CVaR of the balanced portfolio loss. 
     The  optimal allocation   $\boldsymbol{\omega}^*$ under this model   is given in  
    Theorem \ref{CVaR-MV-PR}.

   \item The worst-case CVaR of the  portfolio loss ({\color{black} WC-CVaR}) model: This is     \cite{ZF09}'s model, which  minimizes   the worst-case CVaR  of the portfolio loss and is formulated as:  
    \by
     %\label{model-CVaR}
     \begin{aligned} 
     & \min_{\boldsymbol{\omega\in\mathbb{R}^n}}\sup_{G\in\mathcal{M}(\boldsymbol{\mu},\boldsymbol{\Sigma})}{\rm CVaR}_{\alpha}(\boldsymbol{\omega}^{\top}\boldsymbol{X}^G),  \  \ \ \   \text{s.t.}  \ \ \ \ \boldsymbol{\omega}^{\top}\boldsymbol{e}=1, 
                    \end{aligned}    
     \ey
where $0 < \alpha < 1$. The  optimal $\boldsymbol{\omega}^*$  under this model    is given  in  \cite{ZF09}.

\item 
 {\color{black}  The mean--worst-case CVaR (M-WC-CVaR)  model:
 \begin{equation}
\label{model-mean-CVaR}
\begin{aligned}
&\min_{\boldsymbol{\omega} \in \mathbb{R}^n} \ \boldsymbol{\omega}^{\top} \boldsymbol{\mu} + \kappa  \sup_{G \in \mathcal{M}(\boldsymbol{\mu}, \boldsymbol{\Sigma})} \text{CVaR}_\alpha(\boldsymbol{\omega}^{\top} \boldsymbol{X}^G), \
&\text{s.t.}  \ \ \ \ \boldsymbol{\omega}^{\top} \boldsymbol{e} = 1,
\end{aligned}
\end{equation}
where $\kappa > 0$ and  $0< \alpha < 1$.
 This model is a special case of model \eqref{model-risk measure}, with the risk measure specifically given by CVaR.
  Using techniques similar to those employed in the proof of Theorem 4.2,  and under the condition $\frac{\alpha}{1-\alpha} \, b_0 >1$, one can show that  if $\kappa > \kappa_\alpha^* =  \Big ( \sqrt{\frac{\alpha}{1-\alpha} \, b_0}-1 \Big )^{-1}$, problem \eqref{model-mean-CVaR} admits a unique solution; and 
if $\kappa \le \kappa_\alpha^*$, then the problem has no solution. We omit the proof of this result in this paper. 
Since $b_0 >0$ and   $\frac{\alpha}{1-\alpha} \, b_0 \to \infty$ as $\alpha \to 1$,  the condition $\frac{\alpha}{1-\alpha} \, b_0 >1$ typically holds in practical applications. This is because the confidence level $\alpha$ 
 used in CVaR is usually close to one -- for example, 0.90, 0.95, or 0.99.  
 }

    \item  The  mean-variance (M-V) model: This is a classical mean-variance model   \eqref{model-MV}, which     minimizes  the variance of the 
     portfolio loss  subject to a given expected portfolio  return $r_0$.  The  optimal allocation   $\boldsymbol{\omega}^*$  under this model     
     is given  in Lemma \ref{sol-MV-model}.
  \end{enumerate}

{\color{black} 
In the comparison, we used the M-V model as the benchmark, while the objectives of 
all other models in this section under 
comparison  are based 
on the expected portfolio loss,  the expected downside risk, or  the CVaR of portfolio loss and/or balanced  portfolio loss. These 
 portfolio selection models  involve several key parameters, 
including: the expected portfolio return $r_0$ in the M-V model;  
the target portfolio return $-d$  in the M-WC-EDR, WC-EDR, and WC-CVaR-BPL models; and 
 the weighting parameter $\kappa$  in  the M-WC-EDR,  WC-CVaR-BPL, and  M-WC-CVaR models.
 
The parameter $r_0$  
represents the expected portfolio return desired by the investor. Various 
methods exist for selecting $r_0$; see, for example, \cite{EGBG09} and references therein. 
A practical approach to determine
$r_0$ is to use historical market returns.  
In our numerical analysis, we set 
$r_0$ to the average daily return of the S\&P 500 index over the training period, since the stocks in our portfolio are selected  from this index.

In the context of the reward-penalty framework, the parameter $-d$ 
  can be interpreted as the threshold that  triggers  
 a  bonus payment. A natural choice in this setting is to set $-d \ge r_0$, 
 ensuring that the target return is at least as high as the investor's expected return. 
  On the other hand, as pointed out in Section \ref{sec-intro}, in the context of risk management, the parameter $d$ 
 can be interpreted as  a risk metric, such as the VaR or CVaR of the portfolio loss, which is critical for   the investor. 
      In our empirical results,  the parameter  $-d$ is selected  as the bonus threshold in Figures \ref{Figure1} and \ref{Figure3}, 
       whereas  in Figure \ref{Figure2}, 
  the parameter $d$   is set equal to the average of the CVaRs of the selected stocks.
      
  The parameter $\kappa$ 
   serves as a weighting factor on the downside risk, defined as $(\boldsymbol{\omega}^\top \bm{X}^G - d)_+$, of the portfolio. 
   For the M-WC-EDR and M-WC-CVaR models,  there exists a  threshold value
     $\kappa^*$ such that  the models admit  solutions only when $\kappa > \kappa^*$.    
    Our numerical results (Figure~\ref{Figure2}(a)-\ref{Figure2}(d)) indicate that, for the 
    M-WC-EDR and M-WC-CVaR models,  increasing $\kappa$  reduces 
  cumulative portfolio wealth, while improving the worst-case portfolio CVaR. 
 From the perspective of maximizing  cumulative portfolio wealth, it is therefore preferable to choose $\kappa$
  just slightly above the threshold $\kappa^*$. 
  However, from a risk management standpoint,
     specifically with regard to controlling CVaR, 
  a much larger $\kappa$  may be desirable.
     In contrast, for the WC-CVaR-BPL model, increasing $\kappa$ 
 results in improved cumulative portfolio wealth, but at the cost of a worse worst-case portfolio CVaR.

  Based on our empirical experiments, we recommend that when using the M-WC-EDR and M-WC-CVaR models, risk-seeking investors choose smaller $\kappa$-values close to the threshold, 
  while risk-averse investors choose  larger  $\kappa$-values. 
 In contrast,  when using the WC-CVaR-BPL model, risk-seeking investors  
  should select   larger  $\kappa$-values, whereas risk-averse investors are advised to  
   choose smaller $\kappa$-values near the threshold.  In general, it is unlikely that a single value of $\kappa$ 
   can  simultaneously satisfy both objectives. The problem of identifying the optimal $\kappa$ according to a well-defined decision criterion
remains an important question for future research.   
           
In addition, the parameter $\alpha$  represents the confidence level used in calculating the CVaR of the portfolio loss or the balanced portfolio loss. In risk management, one typically considers high confidence levels, with 
 common choices being  
  0.90, 0.95, and 0.99 (see, for example, \cite{NFE15}. In our empirical experiments, we set 
$\alpha =0.90$. 
   }

{\color{black} 

From Figures~\ref{Figure1}(a), \ref{Figure1}(c), and \ref{Figure1}(e), we observe that across all three cases where  $-d>r_0$, $-d=r_0$, and $-d < r_0$, 
the cumulative portfolio wealth of the M-WC-EDR model  is significantly  higher 
than that of the WC-EDR model and all other models   for most of the test period.
 Furthermore, the cumulative portfolio wealth of the WC-CVaR-BPL model is slightly but consistently better than that of 
the WC-CVaR model across all three cases. These results indicate that incorporating downside risk as a weighted component of portfolio loss,  rather than focusing solely on downside risk, can lead to improved expected return performance.
 Moreover, from the same figures, we find that the cumulative portfolio wealth of the M-V model is lower than all other models, while its worst-case portfolio CVaR is smaller than that of the M-WC-EDR model but slightly larger than those of the other models.

  However,    as shown in Figures~\ref{Figure1}(b), \ref{Figure1}(d), and \ref{Figure1}(f), models that exhibit higher  expected returns  tend to incur   larger  worst-case portfolio CVaRs. In particular, 
the M-WC-EDR  model, which achieves the highest expected return, also exhibits the highest worst-case CVaR. The worst-case CVaRs of the other models are relatively close to each other. These findings highlight  the inherent trade-off between expected return and worst-case CVaR in robust portfolio optimization.

 Furthermore, to examine the impact of the weighting factor $\kappa$
 on portfolio performance, we compute the cumulative  portfolio wealth curves and worst-case portfolio  CVaRs for models that include the 
$\kappa$-weighted downside risk term, namely   the M-WC-EDR, M-WC-CVaR, and WC-CVaR-BPL 
 models,    as shown in Figure~\ref{Figure2}. These models are evaluated  under  various $\kappa$ 
 values of   1, 5, 10, and 15. In particular, for the WC-CVaR-BPL model, we also include the case $\kappa=0$, 
which corresponds to the WC-CVaR model. In this numerical analysis, we treat $d$ 
as a risk metric and set it equal to the average of the CVaRs of the selected stocks over the in-sample period. Accordingly, we have 
$d=0.03.$
         
 Figure~\ref{Figure2} demonstrates  that for the  M-WC-EDR and  M-WC-CVaR  models, smaller 
 values of $\kappa$ 
  lead to higher cumulative wealth, but also result in higher worst-case CVaRs for   most of the test period.
   In contrast, for the WC-CVaR-BPL model, larger values of 
$\kappa$  produce higher cumulative wealth and correspondingly higher worst-case CVaRs over the same periods.

It is also worth noting that the cumulative wealth curves and   worst-case CVaR values in Figures~\ref{Figure2}(c)-\ref{Figure2}(f) are relatively close to each other, though small differences, as observed in our numerical results,   do exist. 
%For example, at the terminal time (the last trading  day 
%of February 2025) of the test period, the worst-case CVaRs for the 
%M-WC-CVaR and  WC-CVaR-BPL models 
% are 0.03042 and 0.03038, respectively, while the corresponding cumulative wealth values are 2.14335 and 2.14063.
These results further illustrate the trade-off between return and risk: portfolios with higher returns tend to be associated with 
larger  worst-case CVaRs.        

The results  in Figures~\ref{Figure2}(d) and \ref{Figure2}(f) also  imply that  M-WC-CVaR and  WC-CVaR-BPL are  
robust with respect to  the weighting factor  $\kappa$, whereas  
 the M-WC-EDR model  is sensitive  to the weighting factor  $\kappa$, in particular, for smaller  values of $\kappa$. 
 However, as $\kappa$ increases, the  M-WC-EDR model appears to become robust to changes in   $\kappa$.         
       
           Figures \ref{Figure1} and \ref{Figure2} also show that the wealth  and worst-case CVaR values for all models exhibit distinct patterns across different market phases. Specifically, during the initial  outbreak 
             phase of the  COVID period (March 2020), cumulative wealth decreases sharply while CVaR increases significantly. These metrics then stabilize throughout the remainder of the COVID period.  However, in the post-COVID period, the 
             pattern  reverses: cumulative wealth increases, while CVaR values gradually decline.
    }

\subsection{Second set of  empirical experiments} \label{sec-6.2}
In the second set of experiments, we compare the performance of portfolios based on the mean-covariance uncertainty set $\mathcal{M}(\boldsymbol{\mu},\boldsymbol{\Sigma})$ with those based on the generalized mean-covariance uncertainty set $\mathbb{D}_{n}(\gamma_1, \gamma_2)$, and discuss how to estimate the ambiguity size parameters $\gamma_1$ and $\gamma_2$.
 
As pointed out in \citet{kang2019data}, one of the key challenges in applying the uncertainty set $\mathbb{D}_{n}(\gamma_1, \gamma_2)$ is the choice of the ambiguity size parameters $\gamma_1$ and $\gamma_2$. \citet{kang2019data} introduced a bootstrap method to estimate these parameters using observed data. 
However, this method relies on a resampling procedure that randomly draws observations with replacement, which may overlook the fact that stock data is collected on a day-by-day basis.  To address this issue, we employ an alternative approach to estimate $\gamma_1$ and $\gamma_2$ as follows:

First, we estimate $\hat{\boldsymbol{\mu}}$ and $\hat{\boldsymbol{\Sigma}}$ in $\mathbb{D}_{n}(\gamma_1, \gamma_2)$ using the sample mean vector and sample covariance matrix of the daily losses of the selected  stocks. Next, to estimate $\gamma_1$ and $\gamma_2$, we apply a rolling window approach over all 
the data. More specifically,    for the $i$th 756-day rolling window,
we compute the sample mean vector $\hat{\boldsymbol{\mu}}_i$ and sample covariance matrix $\hat{\boldsymbol{\Sigma}}_i$ of the daily losses within that window. We then calculate the deviations of  $\hat{\boldsymbol{\mu}}_i$ and $\hat{\boldsymbol{\Sigma}}_i$ relative to $\hat{\boldsymbol{\mu}}$ and $\hat{\boldsymbol{\Sigma}}$, respectively, using the following metrics specified in the uncertainty set  $\mathbb{D}_{n}(\gamma_1, \gamma_2)$:
$\gamma_{1i} = (\hat{\boldsymbol{\mu}}_i - \hat{\boldsymbol{\mu}})^\top \hat{\boldsymbol{\Sigma}}^{-1} (\hat{\boldsymbol{\mu}}_i - \hat{\boldsymbol{\mu}})$ and $\gamma_{2i} = ||\hat{\boldsymbol{\Sigma}}_i - \hat{\boldsymbol{\Sigma}}||_F$.
Finally, we estimate $\gamma_1$ and $\gamma_2$ by averaging all $\gamma_{1i}$ and $\gamma_{2i}$ values, respectively.  This method accounts for the fact that the stock data are observed sequentially over time. Based on this approach, we obtain {\color{black}  $\gamma_1 = 0.009$ and $\gamma_2 = 0.001$} as the estimates of the ambiguity sizes for the  uncertainty set $\mathbb{D}_n(\gamma_1, \gamma_2)$.

To compare the portfolio performance  based on the uncertainty set $\mathcal{M}(\boldsymbol{\mu},\boldsymbol{\Sigma})$ with that  based on the  uncertainty set $\mathbb{D}_{n}(\gamma_1, \gamma_2)$, 
and to consider the impact of the parameters $\gamma_1$ and $\gamma_2$ on portfolio performance,  
  we compute the cumulative portfolio wealth curves and  worst-case portfolio CVaRs  based on model \eqref{E-D} {\color{black}  for four  different choices of $(\gamma_1, \gamma_2)$: 
 $(\gamma_1, \gamma_2) = (0, 0)$, 
 $(\gamma_1, \gamma_2) = (0, 0.001)$,  $(\gamma_1, \gamma_2) = (0.009, 0.001)$, and 
  $(\gamma_1, \gamma_2) = (0.009, 0.001)$}, as shown in Figure \ref{Figure3}.
   Note that $\mathbb{D}_{n}(0, 0) =\mathcal{M}(\hat{\boldsymbol{\mu}}, \hat{\boldsymbol{\Sigma}})$. 
  Additionally,  in model \eqref{E-D}, the constraint $\min_{G \in \mathcal{\mathbb{D}}_{n}(\gamma_1,\gamma_2)}\mathbb{E}[-\boldsymbol{\omega}^{\top}\boldsymbol{X}^G]\geq r$ ensures  that the  worst-case expected return over the uncertainty set is no less than $r$. Compared to 
 the target return of $-d$  in  model  \eqref{model m=1 MV uncertainty 1d}, the  expected  portfolio return $r_0$ in model  \eqref{portfolio CVaR}, and based on the discussions about 
   choosing   
 $d$  and $r_0$,  it is natural to set $r<r_0<-d$ since $r_0$ is the lowed expected return.  {\color{black} In our numerical analysis, 
we specify  the parameter $r$  as  the average of the lowest  expected returns among the selected
stocks, which yields $r=-0.098$  in 
 Figure~\ref{Figure3}.
  }

{\color{black} 
From Figure \ref{Figure3}, we observe that the  portfolio based on the uncertainty set $\mathbb{D}_{n}(0, 0.001)$ achieves the  highest   cumulative portfolio  wealth over the test period. Furthermore,
 the  portfolios based on the uncertainty sets $\mathbb{D}_{n}(0, 0.001)$ and  
 $\mathbb{D}_{n}(0.009, 0.001)$ achieve higher  cumulative portfolio  wealth over the test period 
 than those based on $\mathbb{D}_{n}(0, 0)$ and  
 $\mathbb{D}_{n}(0.009, 0)$.  Conversely, portfolios that yield higher wealth also exhibit higher CVaR. These results further highlight the trade-off between expected return and worst-case CVaR.
  
     In addition, the  results  show that 
    portfolio performance is more sensitive to the parameter $\gamma_2$, which represents 
    the divergence level of  the covariance matrix of the loss vector from  the reference matrix, 
     than to  the parameter 
    $\gamma_1$, which   represents the  
     divergence level of  the  mean vector of the loss vector  from  the reference mean vector.  
   Note that 
   $\mathbb{D}_{n}(0, 0.001)$ is a subset of 
        $\mathbb{D}_n(0.009, 0.001)$, but it   contains $\mathbb{D}_{n}(0, 0)$ as a subset. Therefore, 
        Figure \ref{Figure3}
         implies that  a model based on a larger uncertainty set does not necessarily yield better returns or 
         worse worst-case CVaR.     The parameter $\gamma_2$ has 
         a greater   impact  on  portfolio performance   than   $\gamma_1$. 
          Thus, how to model the uncertainty of the covariance matrix of the loss vector is a critical topic in moment-based robust optimization problems and their applications.         
        
     }

\begin{figure}[H]
    \centering
    \begin{subfigure}{0.45\textwidth}
        \centering
        \includegraphics[width=\linewidth]{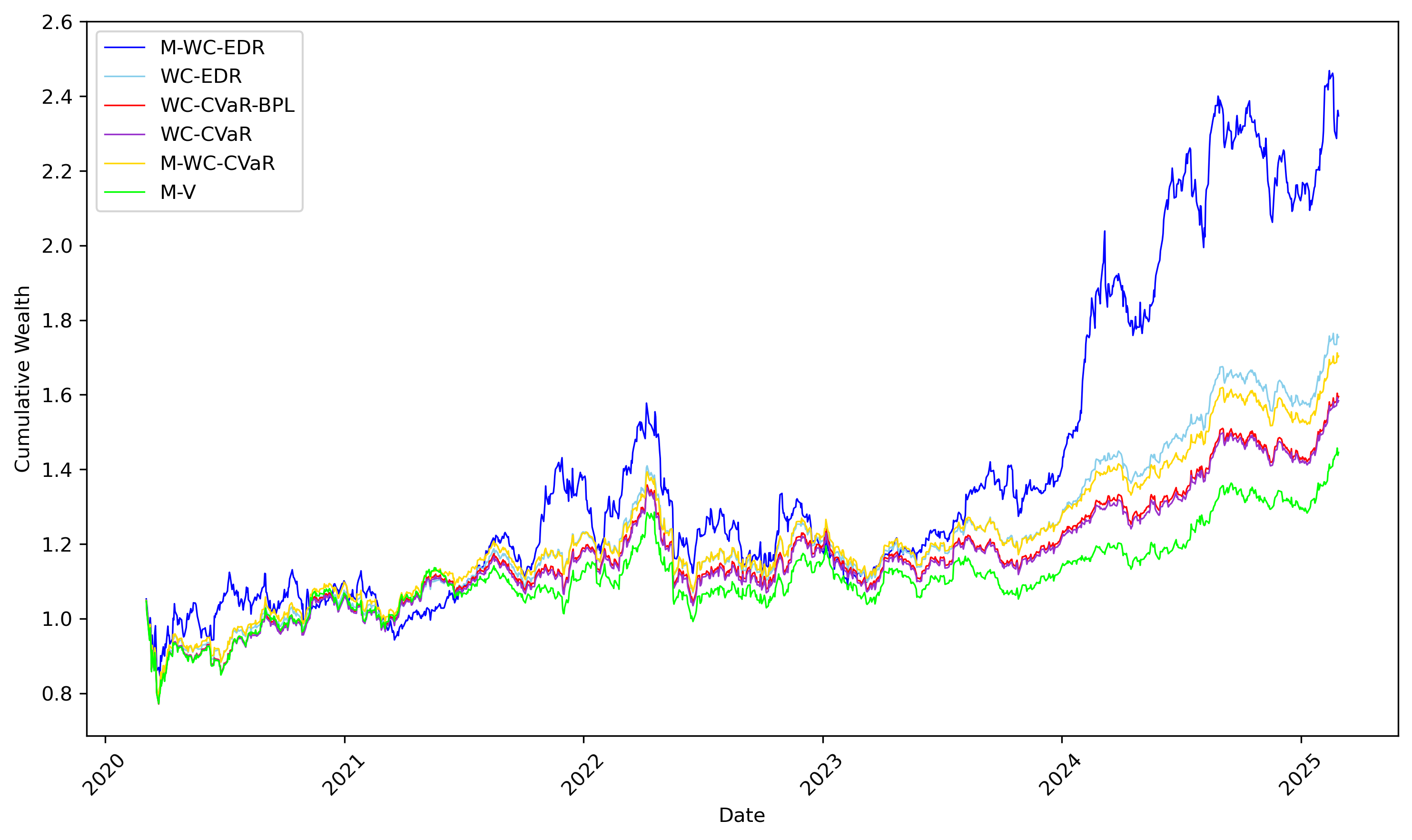}
        \caption{Cumulative wealth when $-d=0.006>r_0$}
    \end{subfigure}
    \begin{subfigure}{0.45\textwidth}
        \centering
        \includegraphics[width=\linewidth]{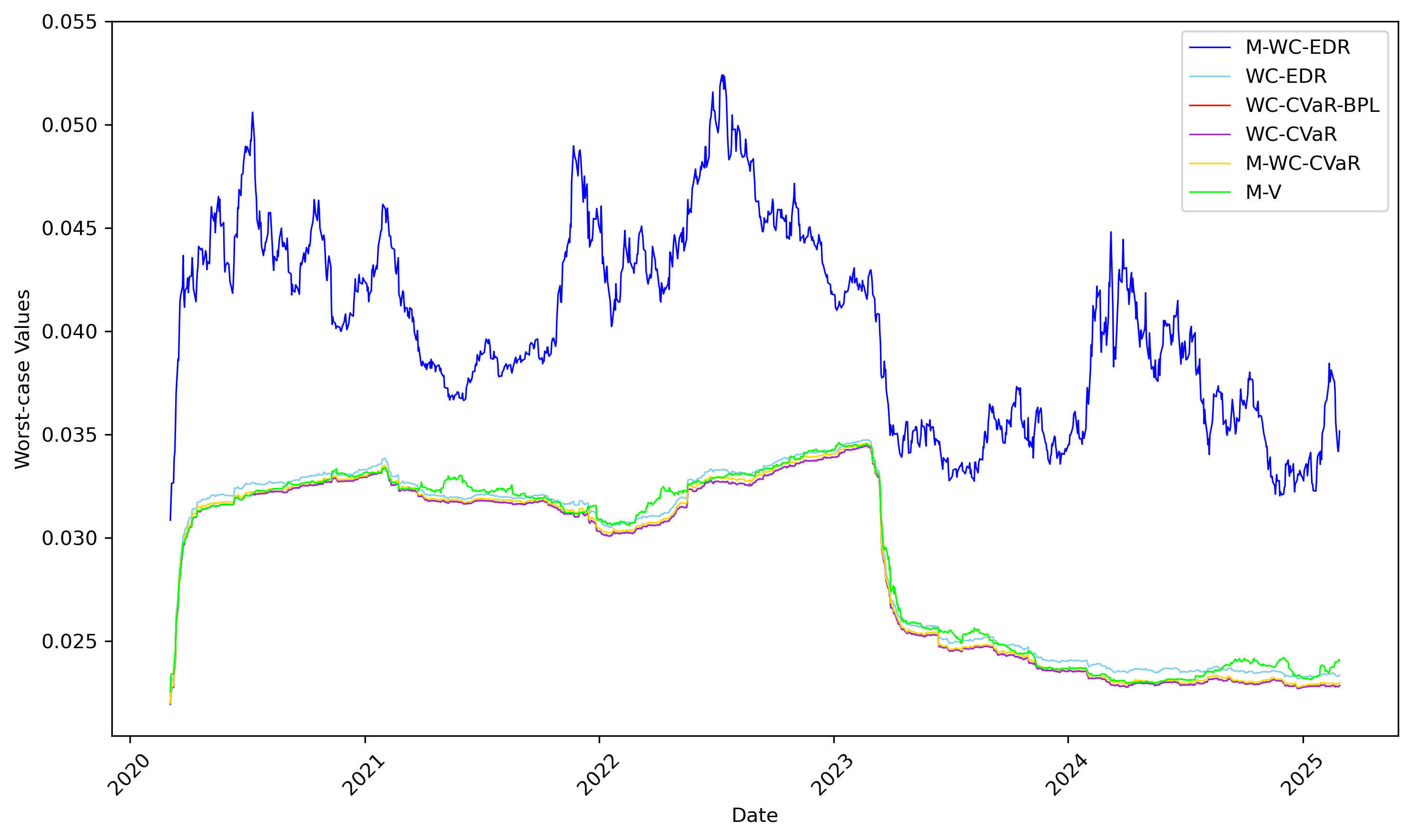}
        \caption{Worst-case CVaR when $-d=0.006>r_0$}
    \end{subfigure}
    
    \begin{subfigure}{0.45\textwidth}
        \centering
        \includegraphics[width=\linewidth]{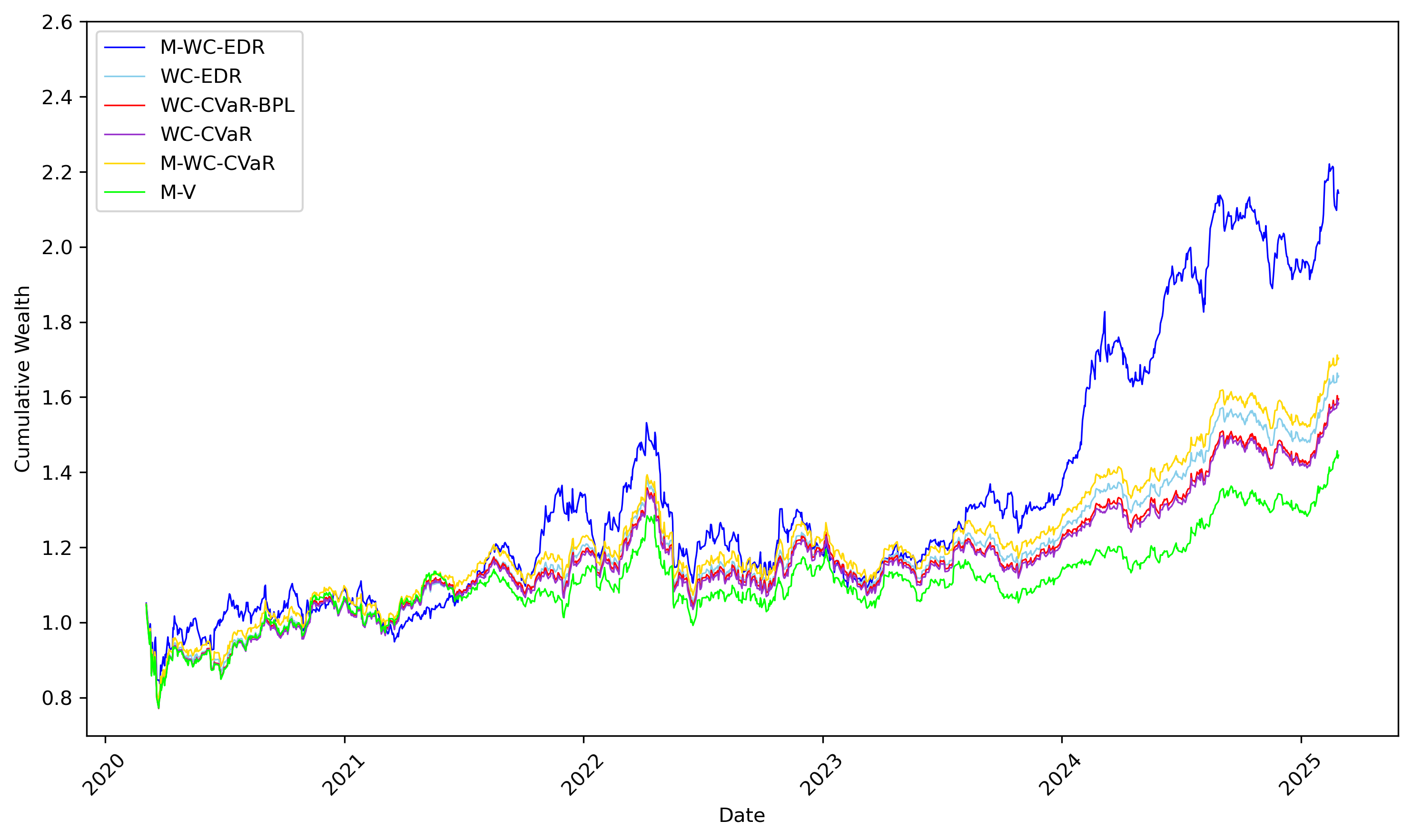}
        \caption{Cumulative wealth  when $-d=0.0003=r_0$}
    \end{subfigure}
    \begin{subfigure}{0.45\textwidth}
        \centering
        \includegraphics[width=\linewidth]{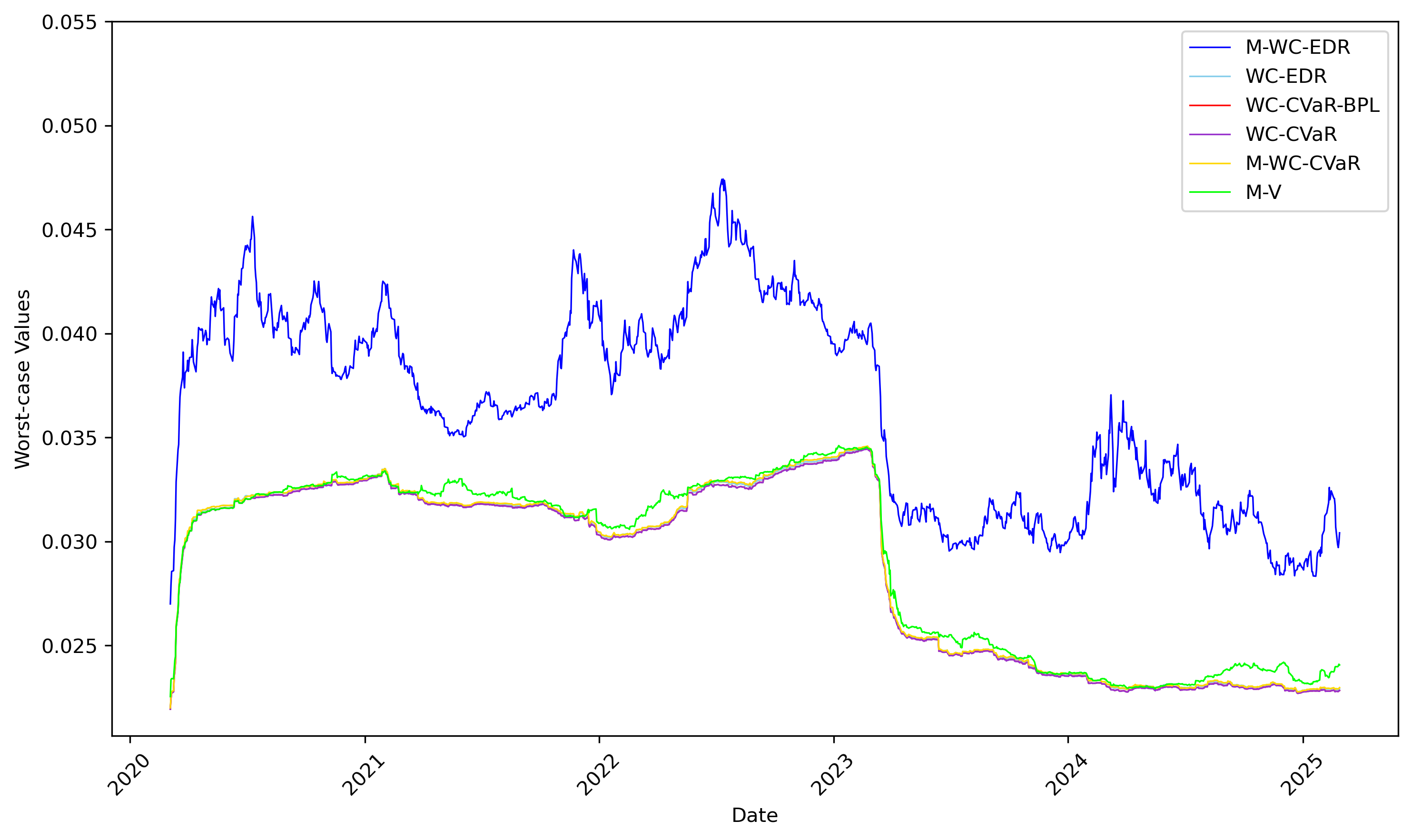}
        \caption{Worst-case CVaR when $-d=0.0003=r_0$}
    \end{subfigure}

    \begin{subfigure}{0.45\textwidth}
        \centering
        \includegraphics[width=\linewidth]{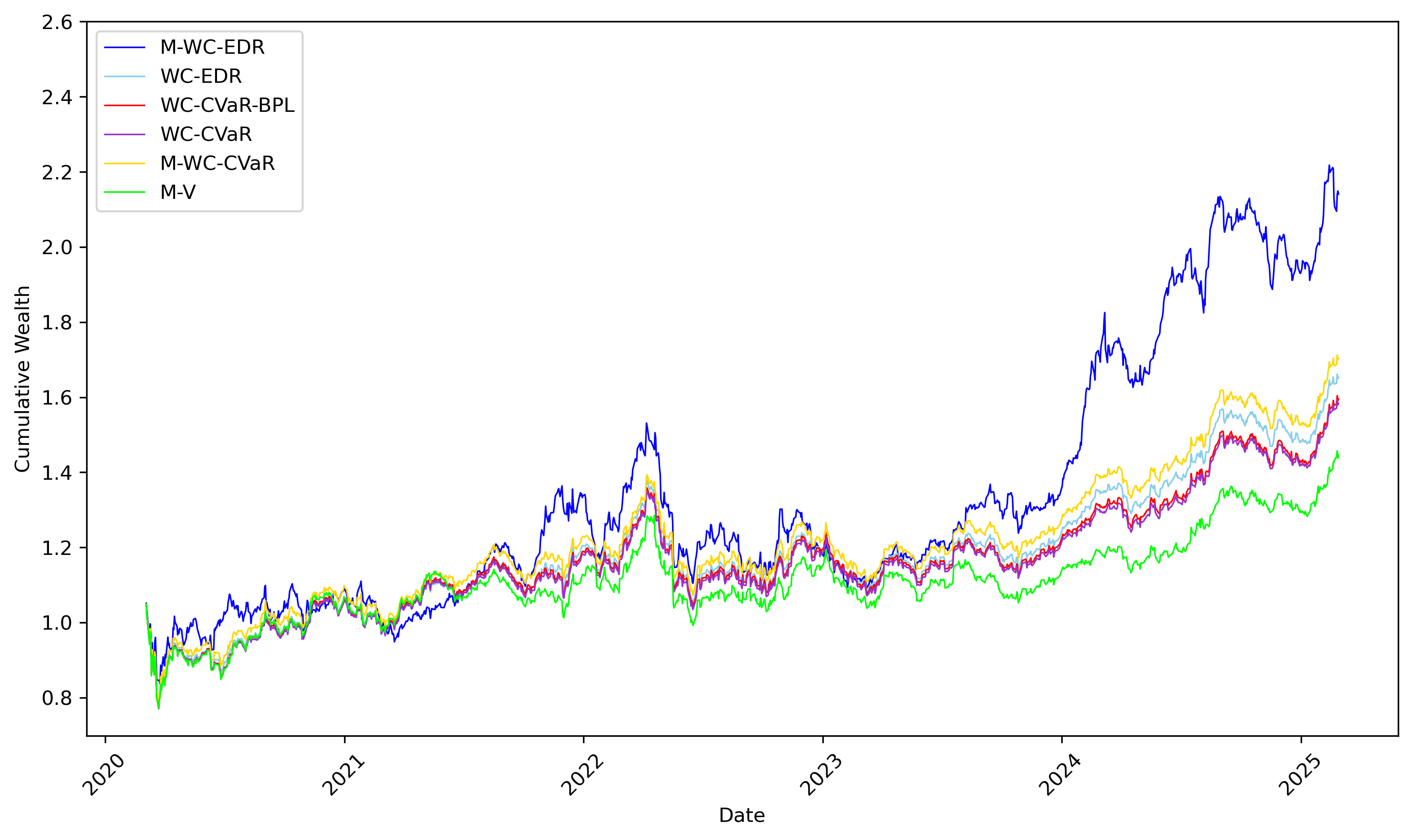}
        \caption{Cumulative wealth when $-d=0.00006<r_0$}
    \end{subfigure}
    \begin{subfigure}{0.45\textwidth}
        \centering
        \includegraphics[width=\linewidth]{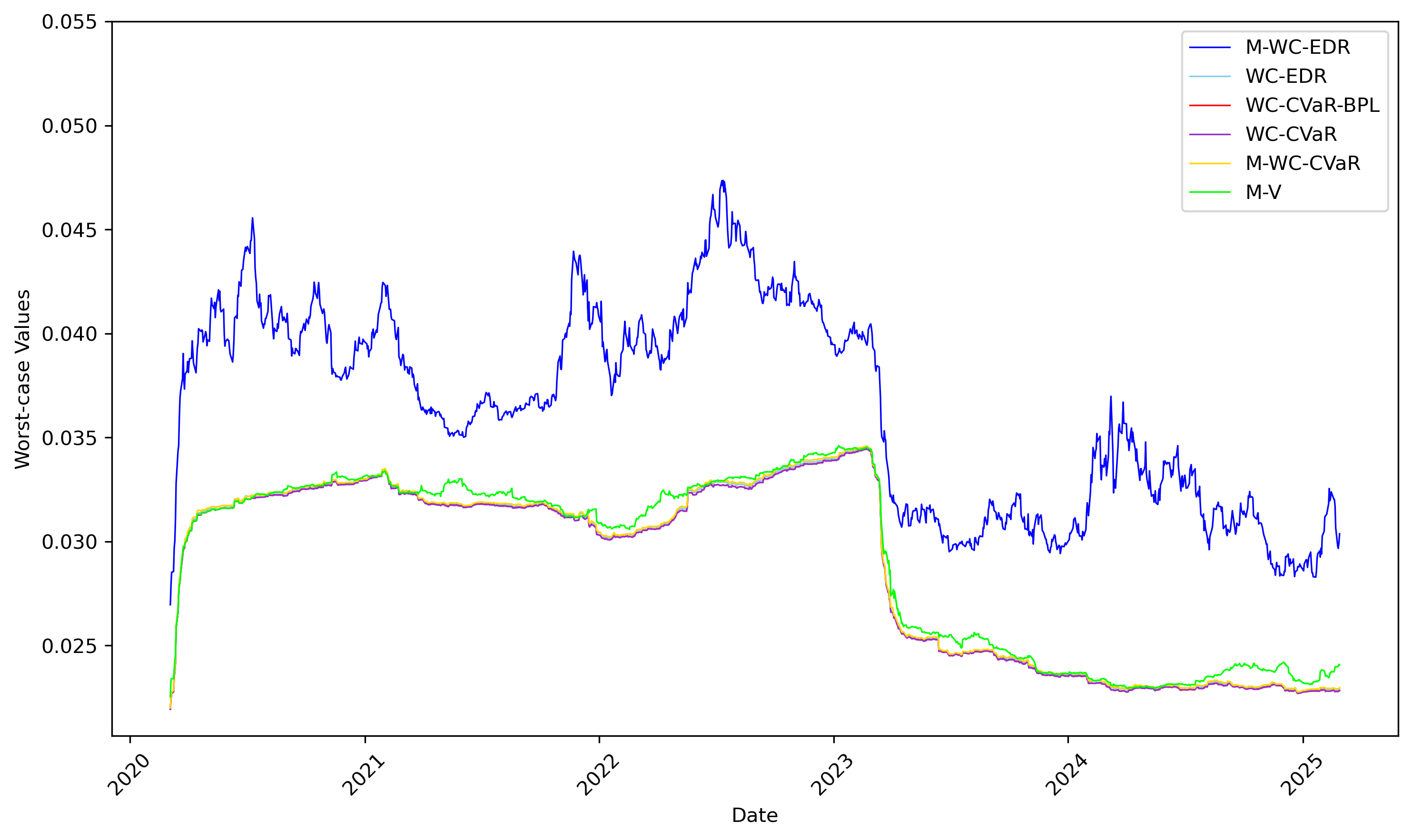}
        \caption{Worst-case CVaR when $-d=0.00006<r_0$}
    \end{subfigure}
    
    \caption{Cumulative wealth and worst-case CVaRs 
    of portfolios based on the M-WC-EDR, WC-EDR, 
    WC-CVaR-BPL, WC-CVaR, M-WC-CVaR, and M-V models, 
         with     $r_0=0.0003$,  $\alpha$=0.9,     {\color{black}$\kappa$=0.5},  and different target returns of $-d$.
        }
    \label{Figure1}
\end{figure}

\begin{figure}[H]
    \centering
    \begin{subfigure}{0.45\textwidth}
        \centering
        \includegraphics[width=\linewidth]{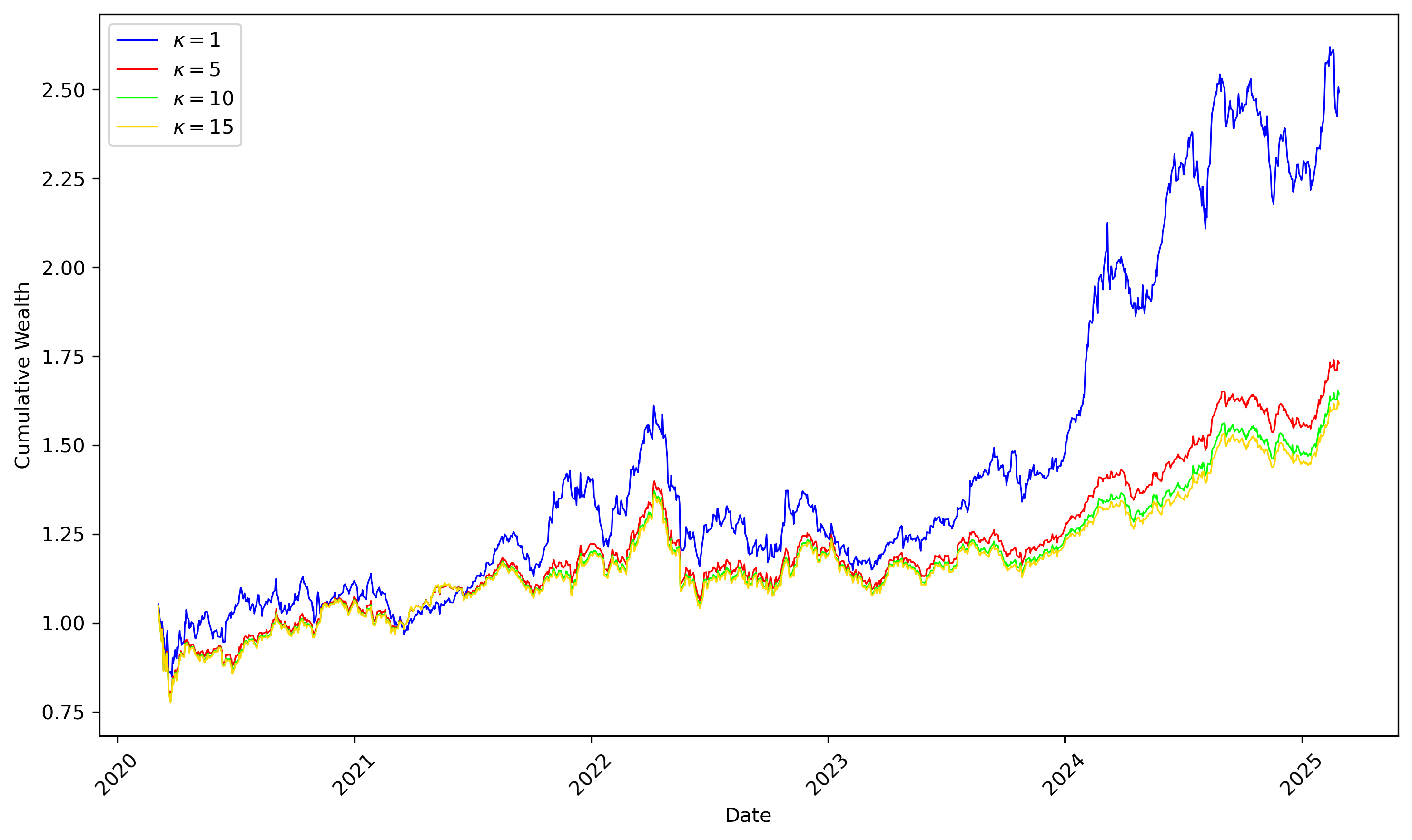}
        \caption{Cumulative wealth of the M-WC-EDR model}
    \end{subfigure}
    \begin{subfigure}{0.45\textwidth}
        \centering
        \includegraphics[width=\linewidth]{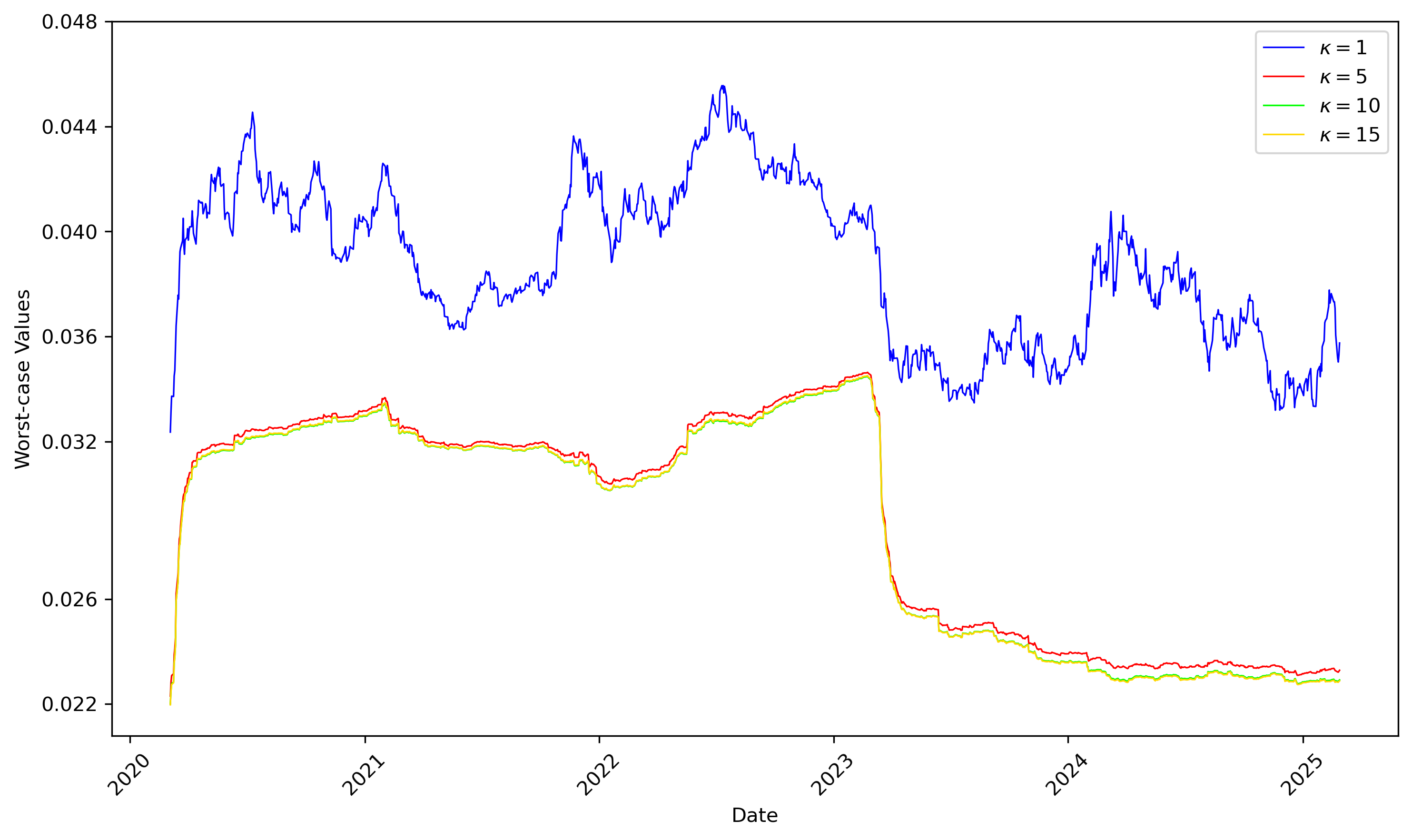}
        \caption{Worst-case CVaRs  of the M-WC-EDR  model}
    \end{subfigure}
    
    \begin{subfigure}{0.45\textwidth}
        \centering
        \includegraphics[width=\linewidth]{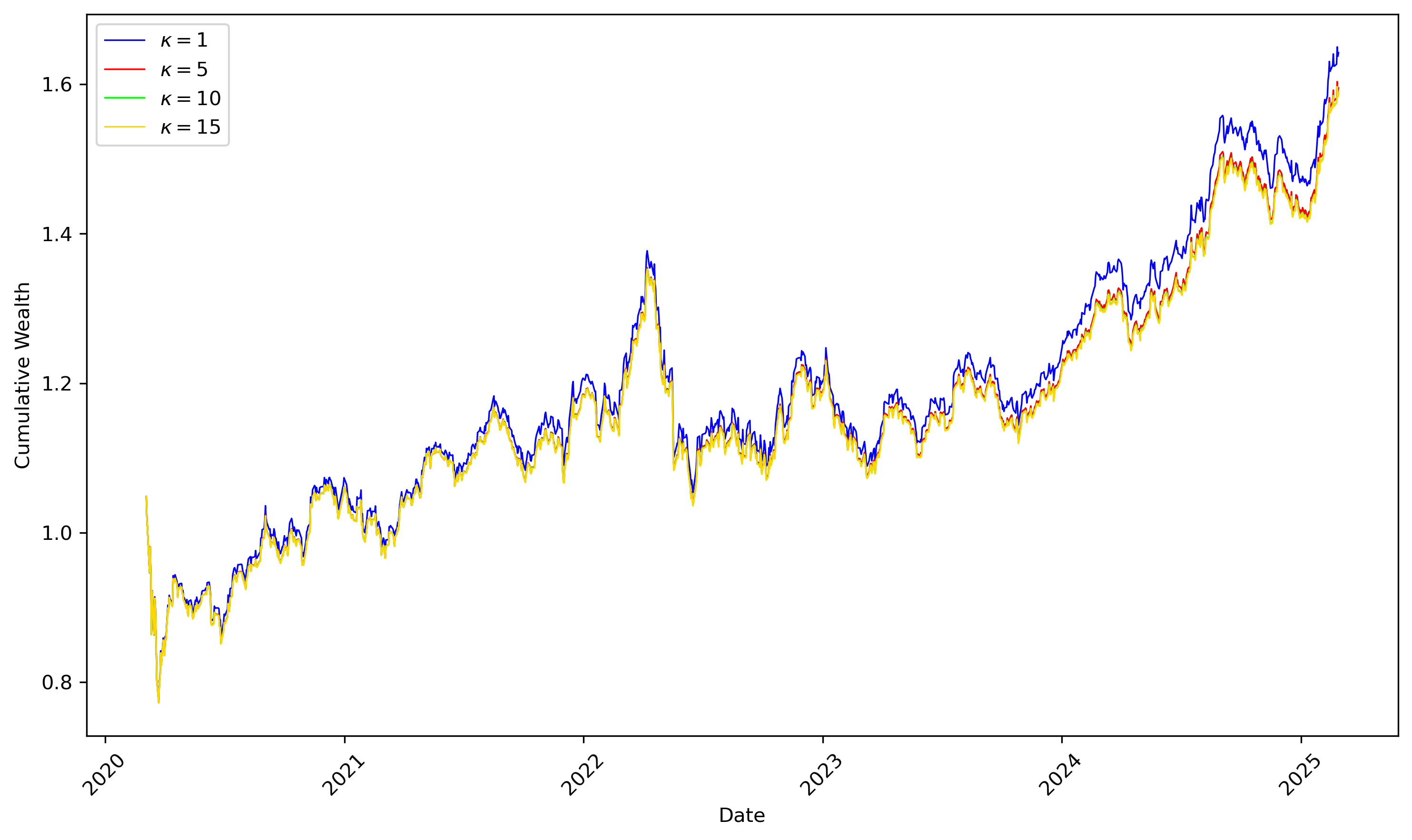}
        \caption{Cumulative wealth of the M-WC-CVaR model
            }
    \end{subfigure}
    \begin{subfigure}{0.45\textwidth}
        \centering
        \includegraphics[width=\linewidth]{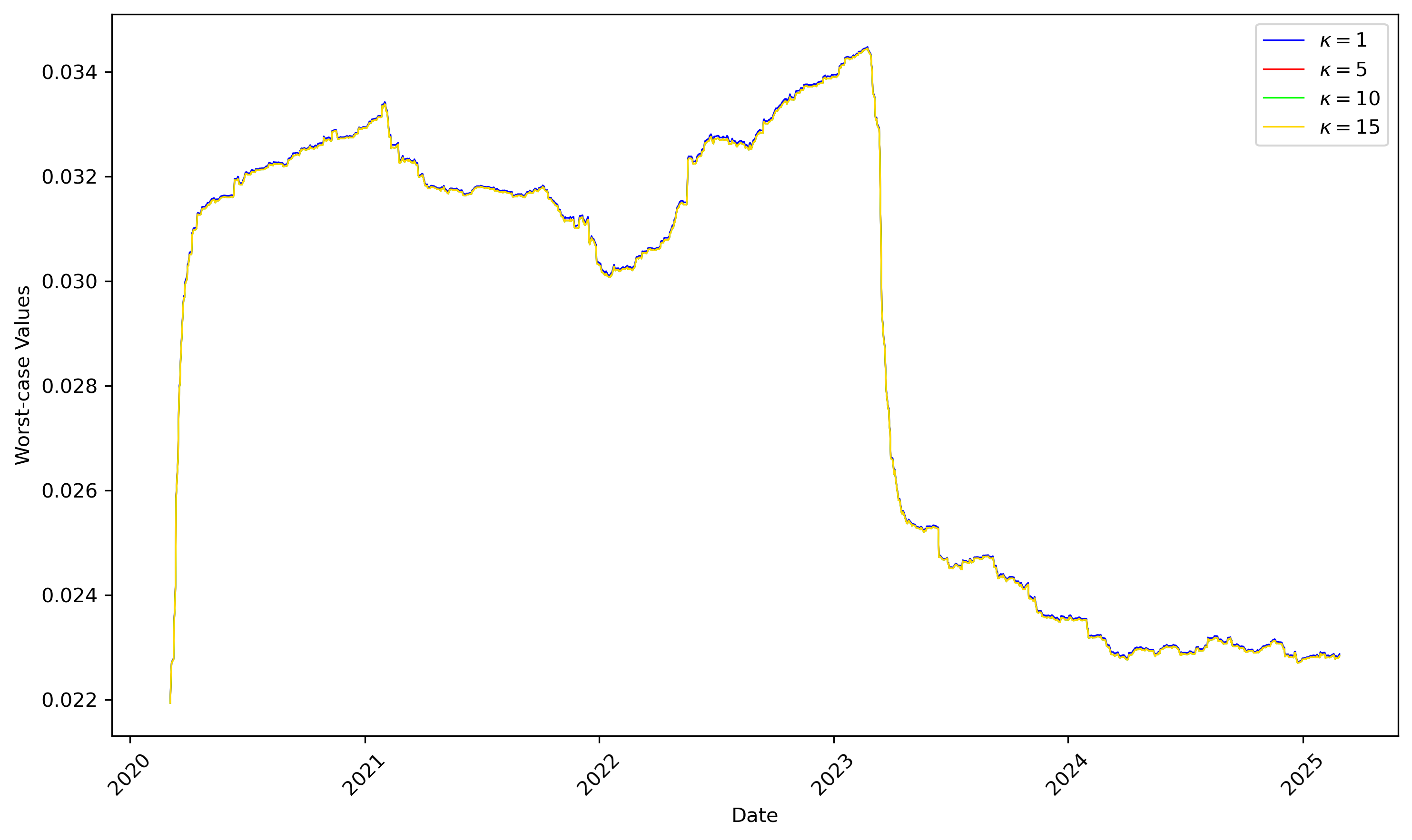}
        \caption{Worst-case CVaRs  of M-WC-CVaR model
        }
    \end{subfigure}

    \begin{subfigure}{0.45\textwidth}
        \centering
        \includegraphics[width=\linewidth]{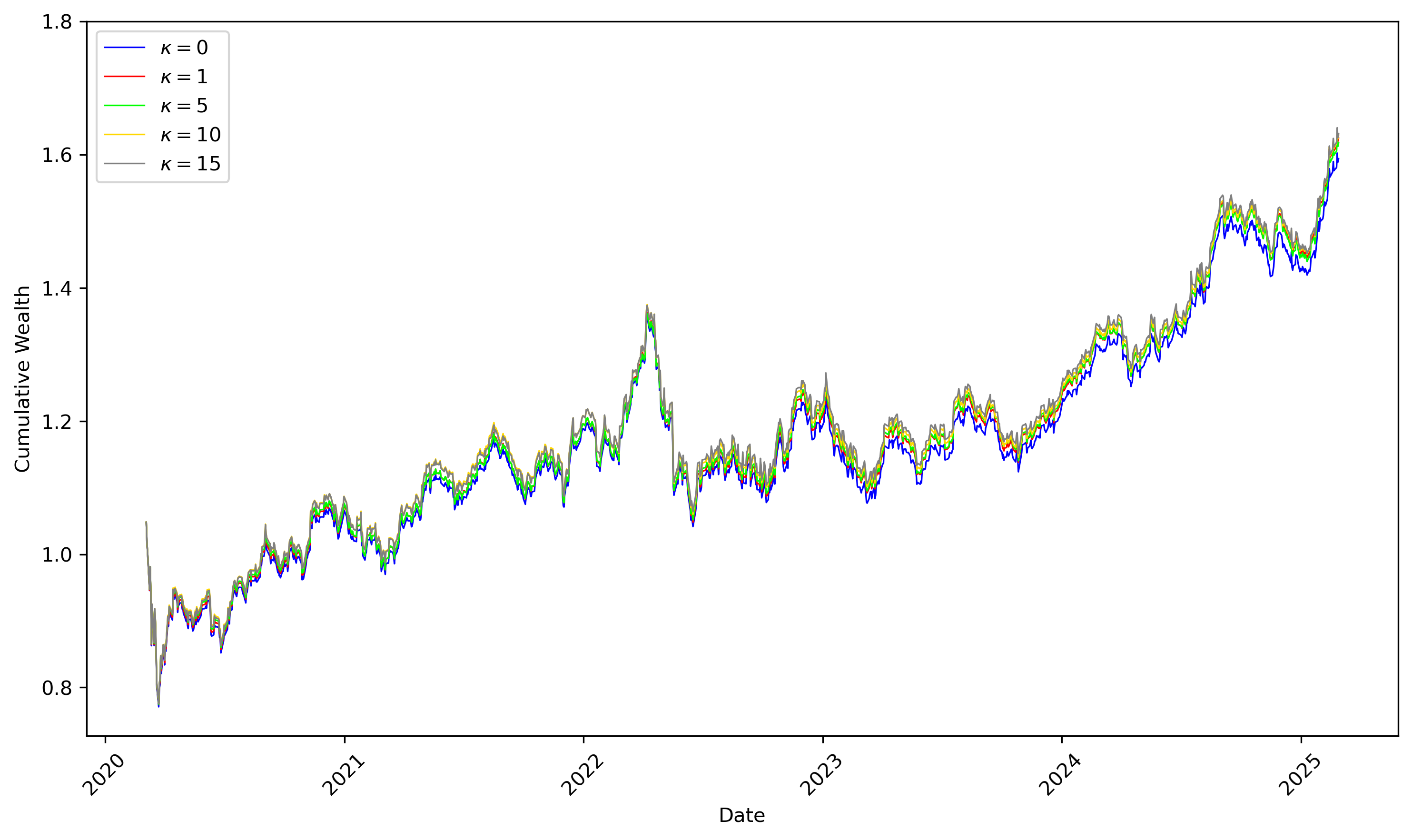}
        \caption{Cumulative wealth of  the WC-CVaR-BPL model 
        }
        
    \end{subfigure}
    \begin{subfigure}{0.45\textwidth}
        \centering
        \includegraphics[width=\linewidth]{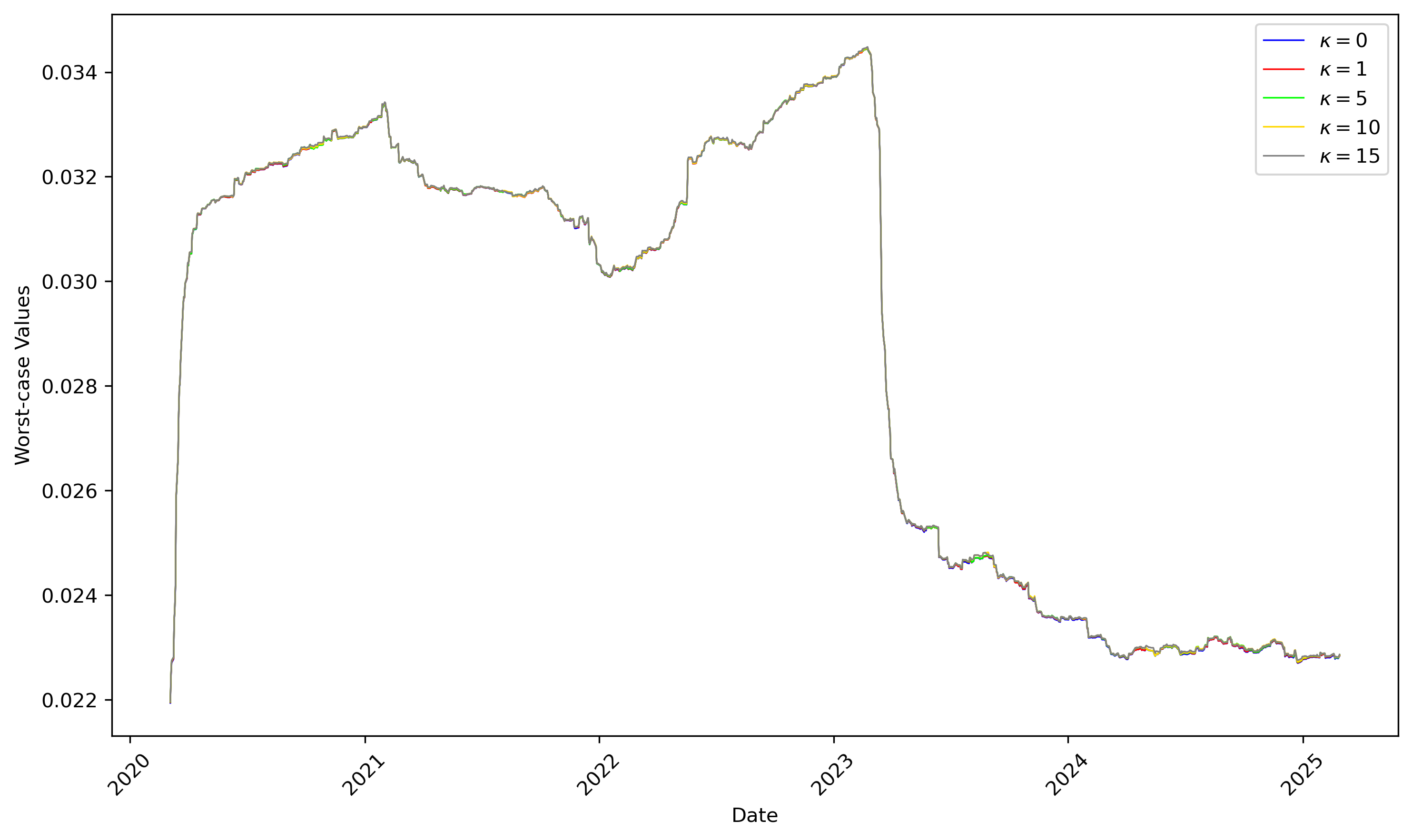}
         \caption{Worst-case CVaRs of the WC-CVaR-BPL model
         }
         \end{subfigure}
    
    \caption{Cumulative wealth   and worst-case CVaRs  
    of portfolios based on the M-WC-EDR,  M-WC-CVaR, and WC-CVaR-BPL  models,   with 
      $\alpha$=0.9, 
      {\color{black}$\kappa$=0.5},   {\color{black} the risk metric} $d=0.03$,  
      and  different weights  $\kappa$.
       }
    \label{Figure2}
\end{figure}

\begin{figure}[H]
    \centering
     \begin{subfigure}{0.45\textwidth}
        \centering
        \includegraphics[width=\linewidth]{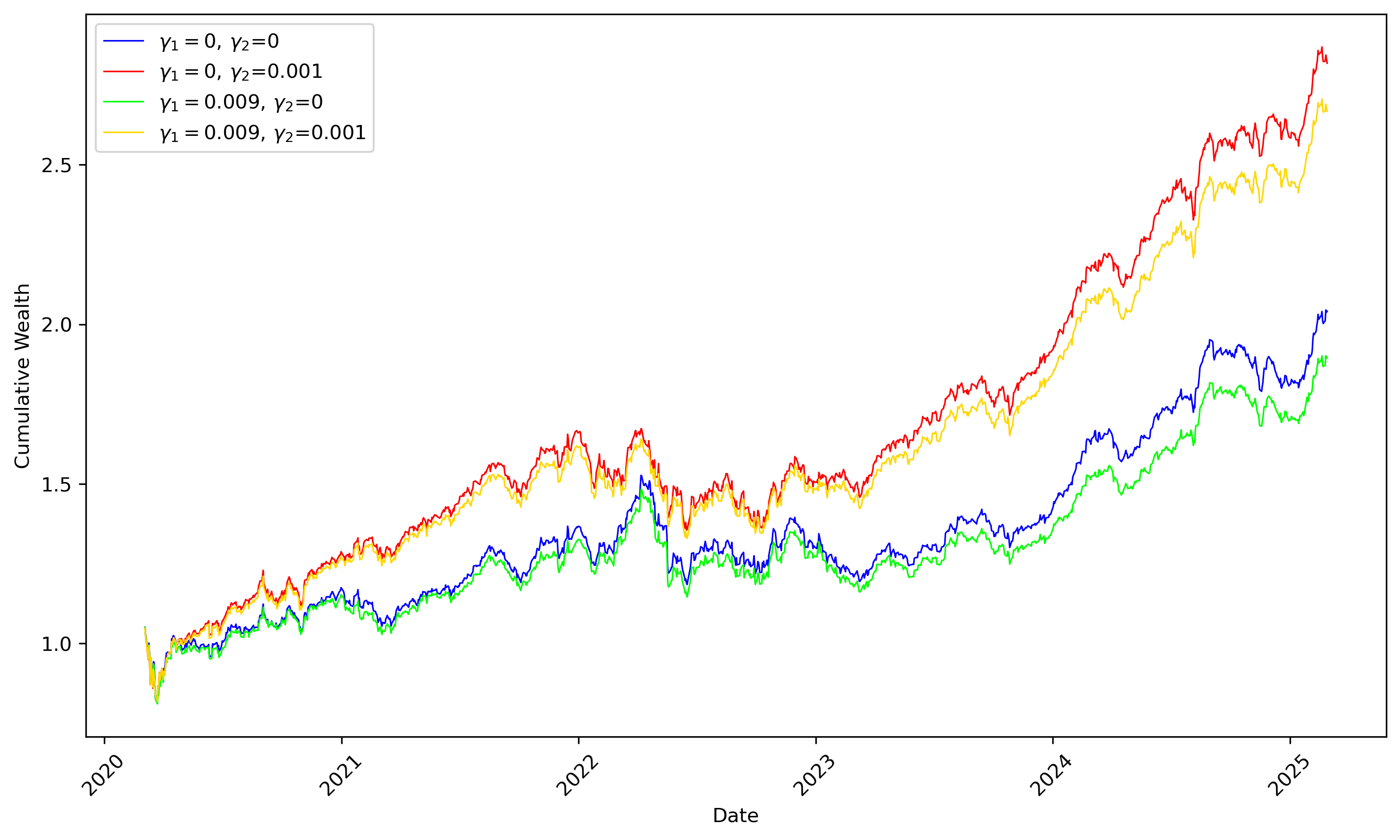}
        \caption{Cumulative wealth of    model   \eqref{E-D}
        }
    \end{subfigure}
    \begin{subfigure}{0.45\textwidth}
        \centering
        \includegraphics[width=\linewidth]{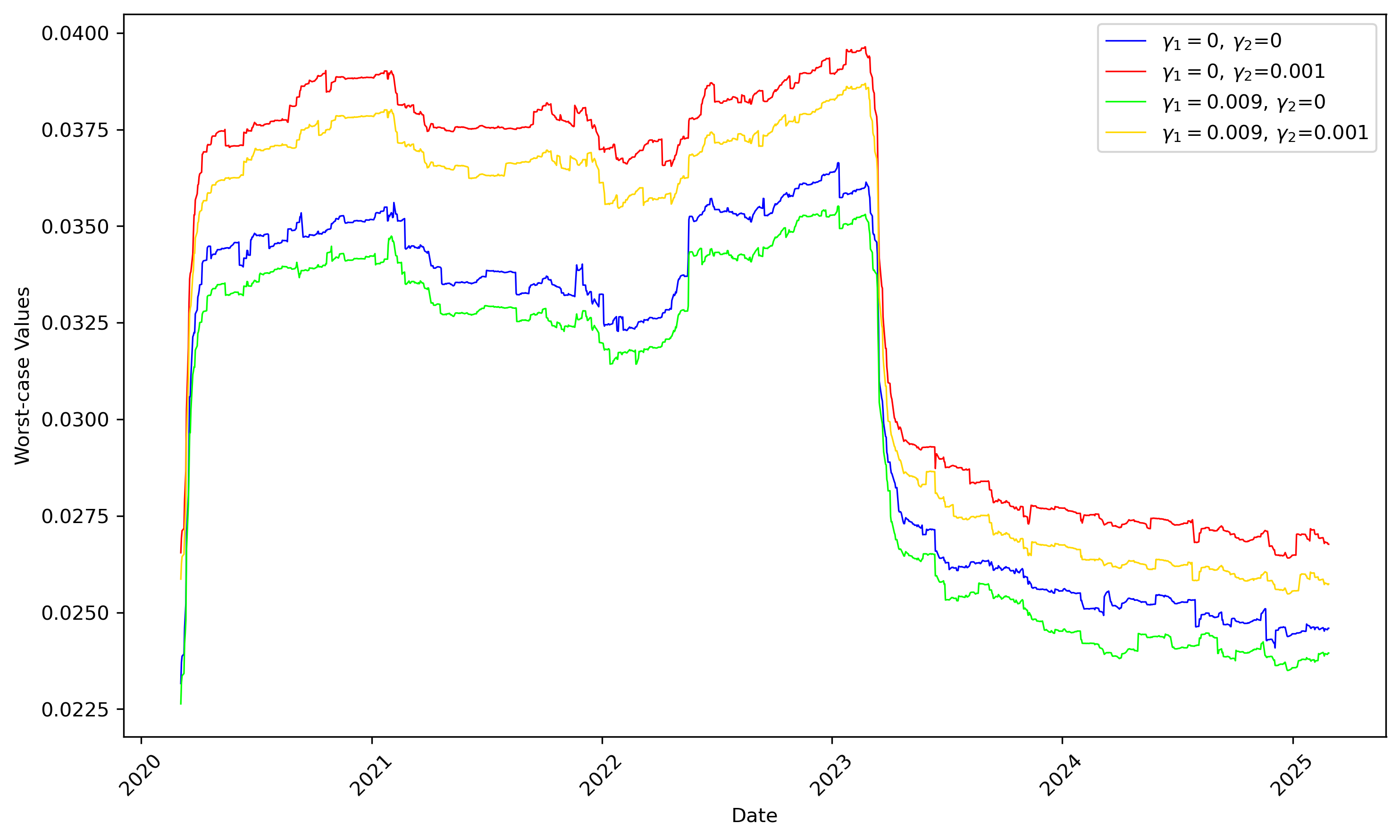}
        \caption{Worst-case CVaRs of  model   \eqref{E-D} 
           }
    \end{subfigure}
    \caption{Cumulative wealth and  worst-case CVaR of portfolios based on model   \eqref{E-D}, with target return $-d= 0.001$,  worst-case expected return  {\color{black}$r=-0.098$},   $\kappa=1$,  and different values of $\gamma_1$ and $\gamma_2$.}
    \label{Figure3}
\end{figure}

\section{Conclusion}\label{sec-con} 

In this paper, we introduce new robust portfolio selection models by incorporating a reward and penalty mechanism into portfolio management. These models also serve as frameworks that balance portfolio loss and the associated downside risk in portfolio management. To determine optimal allocations for these models, we derive an explicit closed-form expression for the worst-case CVaR under the reward-penalty mechanism. This result is of independent interest and generalizes several existing results regarding the worst-case CVaR. Moreover, the approach used to prove this result has the potential to derive the worst-case CVaR for other loss functions, which we plan to explore in future research. 
  The solutions for the optimal allocations in the proposed models are easily implementable in practice, as they either have explicit closed-form expressions under a classical mean-covariance-based uncertainty set or simplify to certain convex optimization problems. Our empirical experiments not only illustrate that portfolio performance could  be improved by using the models proposed in this paper, but also highlight the importance of incorporating downside risk into portfolio loss for effective portfolio risk management.
  
We believe that the ideas and models presented in this paper can inspire further research in
this area. In particular, the models and formulations discussed here may be extended in several
directions. For example, one may incorporate asymmetric thresholds for rewards and penalties across
different contexts, or adopt step-wise or non-linear structures for the reward-penalty mechanism. These
extensions deserve to be explored further in future research.
However, we note that these extensions would lead to an objective function that is neither concave nor convex, making the optimization 
problem challenging. Several methods have been developed in the literature to address optimization problems with non-concave 
or non-convex objectives. When the objective involves quantile-based risk measures such as VaR, CVaR, or distortion risk measures, 
a possible approach is to apply the concave or convex envelope technique. We refer to  \cite{CLM25}  and the references therein for applications of this technique to non-convex optimization problems.
A more challenging and promising direction
is to reformulate the models using divergence-based uncertainty sets instead of the moment-based sets
considered in this paper. Additionally, the worst-case CVaR results developed in this paper   could be applied
to robust optimal insurance or reinsurance problems.

   \section*{Acknowledgment} 
   The authors thank the editor and three anonymous referees for
their constructive comments and suggestions on an earlier version of
this paper.    J. Cai  acknowledges financial support from the Natural Sciences and Engineering Research Council of Canada (RGPIN-2022-03354).   T. Mao acknowledges financial support from the National Natural Science Foundation of China (Grant No. 12371476) and  Key Program for Youth Innovation of the University of Science and Technology of China.

\appendix\label{A}
 \section{Appendix} \label{appendix}

         \begin{lemma}[\citet{chen2011tight}]
\label{le:A=B}
    For any given  $n$-dimensional vectors $\boldsymbol{\omega} \in \R^n$ and   $\boldsymbol{\mu} \in \R^n$,  and  any given  $n \times n$  positive definite matrix  $\bf{\Sigma}$, it holds that $A= B$ for  
    $ A = \big \{ F_{\boldsymbol{\omega}^{\top} \boldsymbol{X}}: \ \mathbb{E}[\boldsymbol{X}]=\boldsymbol{\mu}, \ {\rm Cov}(\boldsymbol{X})=\bf{\Sigma} \big \}
    $
    and 
  $  B = \big \{F_X: \ \mathbb{E}[X]=\boldsymbol{\omega}^{\top}\boldsymbol{\mu}, \ {\rm Var}(X)=\boldsymbol{\omega}^{\top}\bf{\Sigma}\boldsymbol{\omega} \big \}  $.    
%\hfill $\Box$
  \end{lemma}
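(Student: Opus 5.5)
We need to prove Lemma A.1: the set of distributions of $\boldsymbol{\omega}^\top\boldsymbol{X}$ over $\boldsymbol{X}$ with mean $\boldsymbol{\mu}$ and covariance $\boldsymbol{\Sigma}$ equals the set of univariate distributions with mean $\boldsymbol{\omega}^\top\boldsymbol{\mu}$ and variance $\boldsymbol{\omega}^\top\boldsymbol{\Sigma}\boldsymbol{\omega}$.

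The plan is to prove the two inclusions separately. The inclusion $A\subseteq B$ is immediate from linearity of expectation and the bilinearity of covariance: if $\E[\boldsymbol{X}]=\boldsymbol{\mu}$ and ${\rm Cov}(\boldsymbol{X})=\boldsymbol{\Sigma}$, then $\E[\boldsymbol{\omega}^\top\boldsymbol{X}]=\boldsymbol{\omega}^\top\boldsymbol{\mu}$ and ${\rm Var}(\boldsymbol{\omega}^\top\boldsymbol{X})=\boldsymbol{\omega}^\top\boldsymbol{\Sigma}\boldsymbol{\omega}$.

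For $B\subseteq A$ we need a construction. First dispose of the degenerate case $\boldsymbol{\omega}=\boldsymbol{0}$. Then $B$ consists only of the point mass at $0$, which is the distribution of $\boldsymbol{0}^\top\boldsymbol{X}$ for any $\boldsymbol{X}$ with the prescribed moments, for example a Gaussian one. For $\boldsymbol{\omega}\neq\boldsymbol{0}$ we have $\sigma^2:=\boldsymbol{\omega}^\top\boldsymbol{\Sigma}\boldsymbol{\omega}>0$ since $\boldsymbol{\Sigma}\succ 0$. Given $X$ with mean $\boldsymbol{\omega}^\top\boldsymbol{\mu}$ and variance $\sigma^2$, set $Z=(X-\boldsymbol{\omega}^\top\boldsymbol{\mu})/\sigma$, so that $Z$ has mean $0$ and variance $1$. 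Take a random vector $\boldsymbol{Y}$, independent of $Z$, with mean $\boldsymbol{0}$ and covariance $\boldsymbol{\Sigma}-\boldsymbol{\Sigma}\boldsymbol{\omega}\boldsymbol{\omega}^\top\boldsymbol{\Sigma}/\sigma^2$. Then define
\[
\boldsymbol{X}=\boldsymbol{\mu}+\frac{\boldsymbol{\Sigma}\boldsymbol{\omega}}{\sigma}\,Z+\Big(\boldsymbol{Y}-\frac{\boldsymbol{\Sigma}\boldsymbol{\omega}}{\sigma^2}\,\boldsymbol{\omega}^\top\boldsymbol{Y}\Big).
\]
This is essentially a regression decomposition of $\boldsymbol{X}$ on the direction $\boldsymbol{\omega}^\top\boldsymbol{X}$.

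The key checks, in order, are as follows.
\begin{enumerate}
\item The matrix $\boldsymbol{\Sigma}-\boldsymbol{\Sigma}\boldsymbol{\omega}\boldsymbol{\omega}^\top\boldsymbol{\Sigma}/\sigma^2$ is positive semidefinite. Via $\boldsymbol{u}=\boldsymbol{\Sigma}^{1/2}\boldsymbol{v}$ it equals $\boldsymbol{\Sigma}^{1/2}(\boldsymbol{I}-\boldsymbol{P})\boldsymbol{\Sigma}^{1/2}$, where $\boldsymbol{P}$ is the orthogonal projection onto $\boldsymbol{\Sigma}^{1/2}\boldsymbol{\omega}$. Hence a mean-zero $\boldsymbol{Y}$ with this covariance exists, for instance a Gaussian one.
\item Since $\boldsymbol{\omega}^\top\boldsymbol{Y}$ has variance $\boldsymbol{\omega}^\top(\boldsymbol{\Sigma}-\boldsymbol{\Sigma}\boldsymbol{\omega}\boldsymbol{\omega}^\top\boldsymbol{\Sigma}/\sigma^2)\boldsymbol{\omega}=0$, it is almost surely $0$. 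The correction term in the definition is therefore redundant. We can simply take $\boldsymbol{X}=\boldsymbol{\mu}+\boldsymbol{\Sigma}\boldsymbol{\omega}Z/\sigma+\boldsymbol{Y}$ with $\boldsymbol{\omega}^\top\boldsymbol{Y}=0$ a.s.
\item Compute $\boldsymbol{\omega}^\top\boldsymbol{X}=\boldsymbol{\omega}^\top\boldsymbol{\mu}+\sigma Z=X$. So $\boldsymbol{\omega}^\top\boldsymbol{X}$ has exactly the distribution $F_X$, not just the right moments.
\item Finally, $\E[\boldsymbol{X}]=\boldsymbol{\mu}$. Because $Z$ and $\boldsymbol{Y}$ are independent, the cross terms vanish and
\[
{\rm Cov}(\boldsymbol{X})=\frac{\boldsymbol{\Sigma}\boldsymbol{\omega}\boldsymbol{\omega}^\top\boldsymbol{\Sigma}}{\sigma^2}+\Big(\boldsymbol{\Sigma}-\frac{\boldsymbol{\Sigma}\boldsymbol{\omega}\boldsymbol{\omega}^\top\boldsymbol{\Sigma}}{\sigma^2}\Big)=\boldsymbol{\Sigma}.
\]
\end{enumerate}

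The main point needing care is step 3: the construction must reproduce the full law of $X$, which forces $\boldsymbol{\omega}^\top\boldsymbol{Y}=0$ almost surely. Step 2 delivers this from the semidefinite residual covariance. Everything else is routine algebra, and independence can always be arranged on a product probability space.
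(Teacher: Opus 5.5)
Your proof is correct. The paper gives no argument of its own for this lemma; it imports it from \citet{chen2011tight}. What you have written is therefore a self-contained proof of a fact the paper takes as given, and it is the standard constructive argument. The inclusion $A\subseteq B$ follows from the moment computation. For $B\subseteq A$, you place the prescribed law along the direction $\boldsymbol{\Sigma}\boldsymbol{\omega}$ and fill in the residual covariance with an independent vector that is invisible to $\boldsymbol{\omega}$. A strength of your version is that it gives $\boldsymbol{\omega}^{\top}\boldsymbol{X}=X$ almost surely, so the full law is reproduced and not just two moments; this is exactly what equality of the sets $A$ and $B$ requires.

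Two cosmetic remarks. First, the positive semidefiniteness in step~1 is most cleanly seen from Cauchy--Schwarz in the $\boldsymbol{\Sigma}$-inner product, $(\boldsymbol{v}^{\top}\boldsymbol{\Sigma}\boldsymbol{\omega})^2\le(\boldsymbol{v}^{\top}\boldsymbol{\Sigma}\boldsymbol{v})\,\sigma^2$. Your phrase ``via $\boldsymbol{u}=\boldsymbol{\Sigma}^{1/2}\boldsymbol{v}$'' does not parse, although the identity with $\boldsymbol{\Sigma}^{1/2}(\boldsymbol{I}-\boldsymbol{P})\boldsymbol{\Sigma}^{1/2}$ is right. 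Second, your correction term becomes meaningful if you instead take $\boldsymbol{Y}$ with covariance $\boldsymbol{\Sigma}$. Then $(\boldsymbol{I}-\boldsymbol{\Sigma}\boldsymbol{\omega}\boldsymbol{\omega}^{\top}/\sigma^2)\boldsymbol{Y}$ has covariance $\boldsymbol{\Sigma}-\boldsymbol{\Sigma}\boldsymbol{\omega}\boldsymbol{\omega}^{\top}\boldsymbol{\Sigma}/\sigma^2$ and is annihilated by $\boldsymbol{\omega}^{\top}$ identically. Step~2 then needs no almost-sure argument, and you do not have to construct a Gaussian with singular covariance.
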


 \begin{lemma}
\label{le-minimizer} 
 For any $ \boldsymbol{\omega}\not=0$, the following results  hold: 
 \bn[(i)]
\item  The maximizer $\boldsymbol{\Sigma}^*$  for the problem of 
$\sup_{\boldsymbol{\Sigma}\in \mathcal{U}_{\hat{\boldsymbol{\Sigma}}}} \boldsymbol{\omega}^{\top}\boldsymbol{\Sigma}\bm{\omega}$ 
 is given by 
  \be
\label{maxer-Sigma}
\boldsymbol{\Sigma}^* 
=   \arg\sup_{\boldsymbol{\Sigma}\in \mathcal{U}_{\hat{\boldsymbol{\Sigma}}}} \boldsymbol{\omega}^{\top}\boldsymbol{\Sigma}\bm{\omega}   = \frac{\gamma_2\boldsymbol{\omega}\boldsymbol{\omega}^{\top}}{\|\boldsymbol{\omega}\|_{2}^{2}} +\hat{\boldsymbol{\Sigma}}.
 \ee
 
 \item 
  The maximizer  $\boldsymbol{\mu}^* $ for the problem of
$\sup_{\boldsymbol{\mu}\in\mathcal{U}_{\hat{\boldsymbol{\mu}}}} \boldsymbol{\omega}^{\top}\boldsymbol{\mu}$
 is given by
\begin{equation}\label{optimalmu}
    \boldsymbol{\mu}^*= \arg \sup_{\boldsymbol{\mu}\in\mathcal{U}_{\hat{\boldsymbol{\mu}}}} \boldsymbol{\omega}^{\top}\boldsymbol{\mu}  = \hat{\boldsymbol{\mu}}+\frac{\sqrt{\gamma_1} \, \hat{\boldsymbol{\Sigma}}\boldsymbol{\omega}}{\sqrt{\boldsymbol{\omega}^{\top}\hat{\boldsymbol{\Sigma}}\boldsymbol{\omega}}}.
\end{equation} 

\item It holds that 
\be
\label{min-cons}
           \min_{G \in \mathbb{D}_n(\gamma_1,\gamma_2)} \mathbb{E}[-\boldsymbol{\omega}^{\top}\boldsymbol{X}^G] 
            = -\hat{\boldsymbol{\mu}}^{\top} {\boldsymbol{\omega}}-\sqrt{\gamma_1}\sqrt{\boldsymbol{\omega}^{\top} \hat{\boldsymbol{\Sigma}}{\boldsymbol{\omega}}}.
       \ee  \en 
\end{lemma}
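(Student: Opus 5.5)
We prove the three parts of Lemma~\ref{le-minimizer} separately; each is a finite-dimensional linear maximization over a convex set, solved by Cauchy--Schwarz.

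\emph{Part (i).} Write $\boldsymbol{\Sigma}=\hat{\boldsymbol{\Sigma}}+\boldsymbol{\Delta}$ with $\|\boldsymbol{\Delta}\|_F\le\gamma_2$. Then
\[
\boldsymbol{\omega}^{\top}\boldsymbol{\Sigma}\boldsymbol{\omega}=\boldsymbol{\omega}^{\top}\hat{\boldsymbol{\Sigma}}\boldsymbol{\omega}+\operatorname{tr}(\boldsymbol{\Delta}\,\boldsymbol{\omega}\boldsymbol{\omega}^{\top}),
\]
so the problem reduces to maximizing the Frobenius inner product $\langle\boldsymbol{\Delta},\boldsymbol{\omega}\boldsymbol{\omega}^{\top}\rangle_F$. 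By Cauchy--Schwarz in the Frobenius inner product, and since $\|\boldsymbol{\omega}\boldsymbol{\omega}^{\top}\|_F=\|\boldsymbol{\omega}\|_2^2$, this is bounded by $\gamma_2\|\boldsymbol{\omega}\|_2^2$. Equality holds exactly when $\boldsymbol{\Delta}=\gamma_2\boldsymbol{\omega}\boldsymbol{\omega}^{\top}/\|\boldsymbol{\omega}\|_2^2$, which gives \eqref{maxer-Sigma}. It remains to check feasibility: $\boldsymbol{\Sigma}^*$ is the sum of the positive definite $\hat{\boldsymbol{\Sigma}}$ and a positive semidefinite rank-one matrix. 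Hence $\boldsymbol{\Sigma}^*\in S_+^n$ and the bound is attained. Uniqueness of the maximizer also follows from the equality case of Cauchy--Schwarz.

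\emph{Part (ii).} Substitute $\boldsymbol{u}=\hat{\boldsymbol{\Sigma}}^{-1/2}(\boldsymbol{\mu}-\hat{\boldsymbol{\mu}})$, so the constraint becomes $\|\boldsymbol{u}\|_2\le\sqrt{\gamma_1}$. The objective is
\[
\boldsymbol{\omega}^{\top}\boldsymbol{\mu}=\boldsymbol{\omega}^{\top}\hat{\boldsymbol{\mu}}+(\hat{\boldsymbol{\Sigma}}^{1/2}\boldsymbol{\omega})^{\top}\boldsymbol{u}\le\boldsymbol{\omega}^{\top}\hat{\boldsymbol{\mu}}+\sqrt{\gamma_1}\sqrt{\boldsymbol{\omega}^{\top}\hat{\boldsymbol{\Sigma}}\boldsymbol{\omega}},
\]
by Cauchy--Schwarz. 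Equality holds at $\boldsymbol{u}=\sqrt{\gamma_1}\,\hat{\boldsymbol{\Sigma}}^{1/2}\boldsymbol{\omega}/\|\hat{\boldsymbol{\Sigma}}^{1/2}\boldsymbol{\omega}\|_2$. The denominator is nonzero because $\boldsymbol{\omega}\neq0$ and $\hat{\boldsymbol{\Sigma}}\succ0$. Transforming back, $\boldsymbol{\mu}=\hat{\boldsymbol{\mu}}+\hat{\boldsymbol{\Sigma}}^{1/2}\boldsymbol{u}$, gives \eqref{optimalmu}.

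\emph{Part (iii).} Since $\mathbb{E}[-\boldsymbol{\omega}^{\top}\boldsymbol{X}^G]=-\boldsymbol{\omega}^{\top}\mathbb{E}[\boldsymbol{X}^G]$, minimizing over $\mathbb{D}_n(\gamma_1,\gamma_2)$ is the same as $-\sup\boldsymbol{\omega}^{\top}\boldsymbol{\mu}$ over all mean vectors realizable in $\mathbb{D}_n(\gamma_1,\gamma_2)$. The only point needing care is that the mean constraint decouples from the covariance constraint. For any $\boldsymbol{\mu}\in\mathcal{U}_{\hat{\boldsymbol{\mu}}}$, a multivariate normal $N(\boldsymbol{\mu},\hat{\boldsymbol{\Sigma}})$ lies in $\mathbb{D}_n(\gamma_1,\gamma_2)$, because $\|\hat{\boldsymbol{\Sigma}}-\hat{\boldsymbol{\Sigma}}\|_F=0\le\gamma_2$ and $\hat{\boldsymbol{\Sigma}}\succ0$. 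Conversely, every $G$ in the set has its mean in $\mathcal{U}_{\hat{\boldsymbol{\mu}}}$. Thus the set of attainable means is exactly $\mathcal{U}_{\hat{\boldsymbol{\mu}}}$, and part (ii) yields \eqref{min-cons}. The supremum is attained, so it is a minimum.

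The only mildly delicate steps are these feasibility and attainment checks: positive definiteness of $\boldsymbol{\Sigma}^*$ in (i) and realizability of every admissible mean in (iii). Neither is a real obstacle.
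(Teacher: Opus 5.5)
Your proof is correct. It differs from the paper's in parts (i) and (ii).

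For (i), the paper gives no argument and simply cites \citet{kang2019data}. Your Frobenius Cauchy--Schwarz bound $\langle\boldsymbol{\Delta},\boldsymbol{\omega}\boldsymbol{\omega}^{\top}\rangle_F\le\gamma_2\|\boldsymbol{\omega}\|_2^2$, together with the positive definiteness check on $\boldsymbol{\Sigma}^*$, makes that part self-contained.

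For (ii), the paper forms the Lagrangian of $\min_{\boldsymbol{\mu}\in\mathcal{U}_{\hat{\boldsymbol{\mu}}}}-\boldsymbol{\omega}^{\top}\boldsymbol{\mu}$. It solves the stationarity equation together with the active constraint and reads off the multiplier $\sqrt{\boldsymbol{\omega}^{\top}\hat{\boldsymbol{\Sigma}}\boldsymbol{\omega}}/(2\sqrt{\gamma_1})$. Global optimality there rests on KKT sufficiency for this convex program, which the paper leaves implicit, and the computation tacitly assumes $\gamma_1>0$. Your whitening substitution plus Cauchy--Schwarz instead gives the value $\boldsymbol{\omega}^{\top}\hat{\boldsymbol{\mu}}+\sqrt{\gamma_1}\sqrt{\boldsymbol{\omega}^{\top}\hat{\boldsymbol{\Sigma}}\boldsymbol{\omega}}$ as a certified global upper bound. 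It also yields uniqueness from the equality case and covers $\gamma_1=0$ without change. The Lagrangian route is the more generic tool, since it still applies to constraint sets where no such inner-product bound is available.

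Part (iii) uses the same decomposition as the paper, namely $\mathbb{D}_n(\gamma_1,\gamma_2)$ as the union of the sets $\mathcal{M}(\boldsymbol{\mu},\boldsymbol{\Sigma})$ over $\mathcal{U}_{(\hat{\boldsymbol{\mu}},\hat{\boldsymbol{\Sigma}})}$. The one difference is that you make explicit, via $N(\boldsymbol{\mu},\hat{\boldsymbol{\Sigma}})$, the realizability of every admissible mean. The paper's chain of equalities uses this tacitly.
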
 
\begin{proof} 
(i)  The proof of \eqref{maxer-Sigma} is given in \citet{kang2019data}. 

(ii) The maximizer  $\boldsymbol{\mu}^* $ for the problem of 
$\sup_{\boldsymbol{\mu}\in\mathcal{U}_{\hat{\boldsymbol{\mu}}}} \boldsymbol{\omega}^{\top}\boldsymbol{\mu}$ is the same as 
the minimizer $\boldsymbol{\mu}^* $ for the problem 
$\min_{\boldsymbol{\mu}\in\mathcal{U}_{\hat{\boldsymbol{\mu}}}}-\boldsymbol{\omega}^{\top}\boldsymbol{\mu}$.  
Note that the problem in (ii) is an optimization problem with an inequality constraint. By applying KKT optimality conditions  (see,  for example,  Chapter 5.5.3 of  \citet{boyd2004convex} for details) to this problem, we define the Lagrangian for the  minimization problem $\min_{\boldsymbol{\mu}\in\mathcal{U}_{\hat{\boldsymbol{\mu}}}}-\boldsymbol{\omega}^{\top}\boldsymbol{\mu}$
 as $L(\boldsymbol{\mu},\lambda)=-\boldsymbol{\omega}^{\top}\boldsymbol{\mu}+\lambda((\boldsymbol{\mu}-\hat{\boldsymbol{\mu}})^\top\hat{\boldsymbol{\Sigma}}^{-1}(\boldsymbol{\mu}-\hat{\boldsymbol{\mu}})-\gamma_1),$  where $\lambda\in\mathbb{R}^+=(0, \infty)$ is the Lagrangian multiplier. The first-order condition then gives
\begin{equation*}
\left\{
\begin{aligned}
 \frac{\partial {L}(\boldsymbol{\mu},\lambda)}{\partial \boldsymbol{\mu}}&=-\boldsymbol{\omega}+2\lambda\hat{\boldsymbol{\Sigma}}^{-1}(\boldsymbol{\mu}-\hat{\boldsymbol{\mu}})=0,\\
 \frac{\partial {L}(\boldsymbol{\mu},\lambda)}{\partial \lambda}&=(\boldsymbol{\mu}-\hat{\boldsymbol{\mu}})^\top \hat{\boldsymbol{\Sigma}}^{-1}(\boldsymbol{\mu}-\hat{\boldsymbol{\mu}})-\gamma_1=0.\\
\end{aligned}
\right.
\end{equation*}
Solving the above two equations yields \eqref{optimalmu} with 
 $\lambda=\frac{\sqrt{\bm{\omega}^{\top}\bm{\Sigma}\bm{\omega}}}{2\sqrt{\gamma_1}}>0$.

(iii)   We have
\by
           \min_{G \in \mathbb{D}_n(\gamma_1,\gamma_2)} \mathbb{E}[-\boldsymbol{\omega}^{\top}\boldsymbol{X}^G] 
        &=& \min_{(\boldsymbol{\mu},\boldsymbol{\Sigma})\in \mathcal{U}_{(\hat{\boldsymbol{\mu}},\hat{\boldsymbol{\Sigma}})}} \min_{G \in \mathcal{M}(\boldsymbol{\mu},\boldsymbol{\Sigma})} \mathbb{E}[-\boldsymbol{\omega}^{\top}\boldsymbol{X}^G]   \nonumber  \\
        &=&\min_{(\boldsymbol{\mu},\boldsymbol{\Sigma})\in \mathcal{U}_{(\hat{\boldsymbol{\mu}},\hat{\boldsymbol{\Sigma}})}}  
        -\boldsymbol{\omega}^{\top}\boldsymbol{\mu}  \nonumber  \\
        &=& \min_{\boldsymbol{\mu}\in\mathcal{U}_{\hat{\boldsymbol{\mu}}}}-\boldsymbol{\omega}^{\top}\boldsymbol{\mu}    
        \ = \  -\boldsymbol{\omega}^{\top} \boldsymbol{\mu}^* \ = \ -\hat{\boldsymbol{\mu}}^{\top}\hat{\boldsymbol{\omega}}-\sqrt{\gamma_1}\sqrt{\boldsymbol{\omega}^{\top} \hat{\boldsymbol{\Sigma}}{\boldsymbol{\omega}}},
       \ey
       where the last equality follows from \eqref{optimalmu}. 
Therefore, we complete the proof.
\end{proof}

\begin{proof}[\textbf{Proof of Lemma \ref{le-3.1}}]  Note that the objective function in problem  \eqref{eq:0401-1} can be re-expressed as the following form:
\be
\label{obj=obj}
 &&  \frac{s-\alpha}{1-\alpha}(1- \beta) x_2 + \frac{1-s}{1-\alpha} \left((1-\beta) d + (1-\lambda)(x_3-d) \right) \nonumber \\
& =&    \frac{1-s}{1-\alpha} (\lambda-\beta) d+ \frac{s-\alpha}{1-\alpha} (1-\beta) x_2 + \frac{1-s}{1-\alpha} (1-\lambda) x_3.  
\ee
To solve problem \eqref{eq:0401-1},  we first consider the following  weaker  version of this problem: 
\begin{align}\label{prob-upper}
\max_{(x_1,x_2,x_3,s) \in \R^3 \times [\alpha, 1]} ~~~&\,  \frac{1-s}{1-\alpha} (\lambda-\beta) d+ \frac{s-\alpha}{1-\alpha} (1-\beta) x_2 + \frac{1-s}{1-\alpha} (1-\lambda) x_3  \\
{\rm s.t.}~~~~~~~~~~~~& \alpha x_1+ (s-\alpha) x_2 + (1-s) x_3=\mu,\notag\\
& \alpha x_1^2+ (s-\alpha) x_2 ^2+ (1-s) x_3^2=\mu^2+\sigma^2.\notag 
\end{align} 
The maximum value of the objective function in problem \eqref{prob-upper} 
 serves as an upper bound for the maximum value of 
the objective function in problem \eqref{eq:0401-1},  since the  constraints in  problem \eqref{eq:0401-1}  are more restrictive 
than those in problem \eqref{prob-upper},  and the objective functions in both problems are the same. 
For any $s \in [\alpha, 1]$, define the  function $h(u)$ for $0 \le u \le 1$ as 
$$h(u) = 0 \mathds{1}_{\{u <\alpha\}} + (1-\beta) \mathds{1}_{\{ \alpha\le u<s\}}  + (1-\lambda) \mathds{1}_{\{s\le u\le 1\}}, \ \ ~u\in[0,1].$$ 
Thus,  the sum of the last two terms in the objective function of  problem \eqref{prob-upper} has the following expression: 
\begin{align} \label{eq:0411-3}
  \frac{s-\alpha}{1-\alpha} (1-\beta) x_2 + \frac{1-s}{1-\alpha}  (1-\lambda) x_3  
 =\frac1{1-\alpha}\int_0^1 {\rm VaR}_u(X) h(u)d u ,   
\end{align}
where $X$ is a three-point  random variable with the probability function $[x_1, \alpha; x_2, s-\alpha; x_3, 1-s]$. 

Note that for any feasible solution $(x_1,x_2,x_3,s)$ of  problem \eqref{prob-upper},
we have 
\begin{align}
   &\int_0^1 {\rm VaR}_u(X) h(u)d u\\   
=&  \int_0^1 ({\rm VaR}_u(X) - \mu)h(u)d u  +    \mu [(1-\beta)(s-\alpha) + (1-\lambda)(1-s)]\notag\\
=&  \int_0^1 ({\rm VaR}_u(X) - \mu) [h(u)-c(s)]d u  +    \mu [(1-\beta)(s-\alpha) + (1-\lambda)(1-s)]\notag\\
\le& \,  \Big ( \int_0^1 ({\rm VaR}_u(X) - \mu)^2 d u \int_0^1 [h(u) -c^*(s)]^2d u \Big )^{1/2}  +    \mu [(1-\beta)(s-\alpha) + (1-\lambda)(1-s)] \notag\\
=& \,  \sigma\sqrt{  (s-\alpha) (1-\beta)^2 + (1-s)(1-\lambda)^2 -(c^*(s))^2 } +    \mu [(1-\beta)(s-\alpha) + (1-\lambda)(1-s)], \label{eq:0411-2}
\end{align}
where $c^*(s)=(1-\beta)(s-\alpha) + (1-\lambda)(1-s)$, 
 the inequality follows from the Cauchy-Schwarz inequality, and the inequality becomes an equality if ${\rm VaR}_u(X)$ and $h(u)$ are linear dependent, namely,  if ${\rm VaR}_u(X) =k h(u) +\ell$ for some $k\ge 0$ and $\ell \in\R$. Note that there exist $k\ge 0$ and $\ell \in\R$ such that  $x_1=\ell$, $x_2=(1-\beta) k+\ell$,   and $x_3=(1-\lambda) k+\ell$ satisfy
\begin{align*}
\begin{cases}
\alpha x_1+ (s-\alpha) x_2 + (1-s) x_3=\mu,  \\
 \alpha x_1^2+ (s-\alpha) x_2 ^2+ (1-s) x_3^2=\mu^2+\sigma^2,
 \end{cases}
 \end{align*}
 that is,
 \begin{align*}
\begin{cases}
\alpha \ell + (s-\alpha) ((1-\beta) k+\ell) + (1-s)  ((1-\lambda) k+\ell)=\mu,  \\
 \alpha \ell^2+ (s-\alpha)  ((1-\beta) k+\ell)  ^2+ (1-s)  ((1-\lambda) k+\ell) ^2=\mu^2+\sigma^2.
 \end{cases}
 \end{align*}
 By solving the above  system of  equations, we have that  the solution $(\ell=\ell(s), k=k(s))$ to this system 
  is given by 
 $$
 \begin{cases}
  \ell = \mu -  \frac{ \displaystyle\sigma [(s-\alpha)  (1-\beta)    + (1-s)   (1-\lambda) ]}{\displaystyle \sqrt{(s-\alpha) (1-\beta) ^2+ (1-s)   (1-\lambda)  ^2-[(s-\alpha)  (1-\beta)    + (1-s)   (1-\lambda) ]^2 }},\\
k=\frac{ \displaystyle\sigma }{ \displaystyle\sqrt{(s-\alpha) (1-\beta) ^2+ (1-s)   (1-\lambda)  ^2 -[(s-\alpha)  (1-\beta)    + (1-s)   (1-\lambda) ]^2 }}.
 \end{cases}
 $$
This  implies that  the upper bound for $\int_0^1 {\rm VaR}_u(X) h(u)d u$  given in  
\eqref{eq:0411-2} can be attained. Substituting  $x_1=\ell$, $x_2=(1-\beta) k+\ell$,   and $x_3=(1-\lambda) k+\ell$ 
into
   the objective function of \eqref{prob-upper}, we have
that    problem \eqref{prob-upper}  simplifies to the  one-variable   unconstrained  maximization problem \eqref{max-f}.   
 \end{proof} 
  
\begin{proof}[\textbf{Proof of Theorem \ref{Prop-wcTVaR-institutional}}]
         \begin{itemize}
\item [(a)]  Assume $\lambda \ge \beta$ or equivalently  $0 < 1-\lambda\le 1-\beta \le 1$. In this case,  $\ell_d(x)=(1-\beta) x+ (\beta-\lambda) (x - x \wedge d) = (1-\lambda) x + (\lambda-\beta) (x \wedge d)$ is increasing and concave in $x\in\R$.  We  show that 
the supremum value in  \eqref{f-wcTVaR} can be attained  at  a two-point distribution. 
 In doing so, we first show that for any distribution  $F\in   \mathcal S (\mu, \sigma)$, there exists a two-point  random variable $X_\epsilon$  with a two-point distribution  
$F_\epsilon \in S (\mu, \sigma)$ satisfying 
\begin{align}
\label{TVaR<two-point}
  \text{CVaR}_\alpha (\ell_d(X^F)) & \le  \text{CVaR}_\alpha (\ell_d(X_\epsilon^{F_\epsilon})).
\end{align}
To do so,  we consider  the following two cases: 
          \begin{itemize}
\item [(a1)] 
 If $d\le {\rm VaR}_\alpha(X)$, define  $U$ as a uniform random variable over interval $(0, 1)$  such that $U$ and $X$ are comonotonic, and define  the following  two-point random variable:  
\begin{equation}
\label{X_e}
 X_\epsilon =  (e_1- \epsilon \,  \overline{\alpha}) 1_{\{U\le \alpha \}} +  (e_2+ \epsilon  \alpha) 1_{\{U>\alpha \}} ,
\end{equation} 
where $e_1 = \E[X|U\le \alpha] $, $e_2= \E[X|U>\alpha]\ge  {\rm VaR}_\alpha(X) \ge d$, $\overline{\alpha} = 1-\alpha$, and  $\epsilon\ge 0$.  It is not hard to verify  that  
$\E[X_\epsilon] =\mu$ for any $\epsilon\ge 0$,   and there exists an $\epsilon\ge 0$ such  that $\E[X_\epsilon^2] =\mu^2+\sigma^2$.  Then, we have 
   %\begin{align*}
 $ \text{CVaR}_\alpha (\ell_d(X^F))   = \ell_d(e_2) \le \ell_d(e_2+ \epsilon  \alpha) = \text{CVaR}_\alpha (\ell_d(X_\epsilon^{F_\epsilon})),
 $
%\end{align*}
where $F_\epsilon$ is the two-point  distribution of $X_\epsilon$, and  the first equality follows from  the
 facts that $\p(\ell_d(X) =\ell_d(e_2)) = 1-\alpha$ and  $\ell_d(x)$ is increasing in $x$. 
       
       \item [(a2)]
 If $d> {\rm VaR}_\alpha(X)$, 
 with  the two-point  random variable $X_\epsilon$ defined in \eqref{X_e},  we have 
 \begin{align*}
  \text{CVaR}_\alpha (\ell_d(X^F))  &=   \E[\ell_d(X) |U>\alpha ]\le \ell_d( \E[X|U>\alpha ]) \\
  &=  \ell_d(e_2)\le\ell_d(e_2+ \epsilon  \alpha)= \text{CVaR}_\alpha (\ell_d(X_\epsilon^{F_\epsilon})),
\end{align*}
where the first inequality follows from  Jensen's inequality 
 and the fact that $\ell_d(x)$ is  concave in $x$.
           \end{itemize}
    Combining the above two cases, we see that \eqref{TVaR<two-point} holds.   Therefore, 
    \begin{align*}
\sup_{F \in \mathcal{S}(\mu, \sigma)}  \text{CVaR}_\alpha (\ell_d(X^F)) & = \sup_{F \in  \mathcal{S}_2^*(\mu, \sigma)}  \text{CVaR}_\alpha (\ell_d(X^{F})),
\end{align*}    
    where   $\mathcal{S}_2^*(\mu, \sigma)$ is a subset of $ \mathcal{S}(\mu, \sigma)$ and contains only two-point distributions. It is easy to 
    verify that        
\begin{equation}
\label{S_2}
\mathcal{S}_2^*(\mu, \sigma) = 
\big \{ [\mu-\sigma \sqrt{(1-\alpha)/\alpha}, \alpha;\, \mu+\sigma \sqrt{\alpha/(1-\alpha)}, 1-\alpha] \big \}
\end{equation}
 contains  only one two-point  distribution with the probability function specified in \eqref{S_2}.  
 Thus,   $\sup_{F \in  \mathcal{S}_2^*(\mu, \sigma)}  \text{CVaR}_\alpha (\ell_d(X^{F})) = \text{CVaR}_\alpha (\ell_d(X^{F^*}))$, where 
 $F^*$ is a two-point distribution with the two-point probability function specified in \eqref{S_2}.  A simple calculation of $\text{CVaR}_\alpha (\ell_d(X^{F^*}))$ yields $ \text{CVaR}_\alpha (\ell_d(X^{F^*}))  = \ell_d\big (\mu+\sigma\sqrt{\frac \alpha{1-\alpha} } \,  \big )$, which means that 
\eqref{ga} holds.

\item [(b)] Assume $\lambda < \beta$ or equivalently  $0 \le 1-\beta < 1-\lambda \le  1$. In this case,  $\ell_d(x)=(1-\beta)\min\left\{x,d\right\}+(1-\lambda)(x-d)_{+}$ is increasing and convex in $x\in\R$, and we  first  show that for any distribution  $F\in   \mathcal S (\mu, \sigma)$, there exists a 
 three-point distribution
$F^* \in S (\mu, \sigma)$ satisfying 
\begin{align}
\label{TVaR<three-point}
  \text{CVaR}_\alpha (\ell_d(X^F)) & \le  \text{CVaR}_\alpha (\ell_d(X^{F^*})). 
\end{align}

In fact, 
 if $d\le {\rm VaR}_\alpha(X)$, following  the same arguments used  in (a1) above, we  see   that there exists a   two-point distribution 
$F^* \in \mathcal{S}(\mu, \sigma)  $ with a probability function 
$ [\mu-\sigma \sqrt{(1-\alpha)/\alpha}, \,  \alpha;\, \mu+\sigma \sqrt{\alpha/(1-\alpha)}, \, 1-\alpha]$ 
 such that  $\text{CVaR}_\alpha (\ell_d(X^F)) \le  \text{CVaR}_\alpha (\ell_d(X^{F^*}))$.
   Additionally,  if $d> {\rm VaR}_\alpha(X)$, 
define   a uniform random variable  $U$ over interval $(0, 1)$ such that $U$ and $X$ are comonotonic, and define the following 
 three-point random variable:  
 \begin{align*}
 X_\epsilon 
 &=  (e_1- \epsilon  \, \overline{\alpha}_d) 1_{\{U\le \alpha\}} +   e_2  1_{\{U>\alpha, \, X\le d \}} +  (e_3+\epsilon  \alpha ) 1_{\{ X>d \}}   ,
\end{align*}
where $e_1 = \E[X|U\le \alpha]$, $e_2= \E[X|U>\alpha, X\le d]$,   $e_3= \E[X|X>d]$,
$\overline{\alpha}_d = \p(X>d)$, and  $\epsilon\ge 0$.  It is not hard to verify that  $\E[X_\epsilon] =\mu$ for any $\epsilon\ge0$ and   there exists $\epsilon\ge0$ such that $\E[X_\epsilon^2] =\mu^2+\sigma^2$ and  
%\begin{align*}
$
  \text{CVaR}_\alpha (\ell_d(X^F))   \le \text{CVaR}_\alpha (\ell_d(X_\epsilon^{F_\epsilon})).
  $
%\end{align*}
      Note that  a two-point point distribution can be also viewed as a three-point  distribution by  Definition  \ref{de-k-ds}. Thus, 
   combining the above discussions,  
  we see that  \eqref{TVaR<three-point} holds. Therefore,  
    \begin{align*}
\sup_{F \in \mathcal{S}(\mu, \sigma)}  \text{CVaR}_\alpha (\ell_d(X^F)) & = \sup_{F \in  \mathcal{S}_3^*(\mu, \sigma)}  \text{CVaR}_\alpha (\ell_d(X^{F})),
\end{align*}  
  where   $\mathcal{S}_3^*(\mu, \sigma)$ is a subset of $\mathcal{S}(\mu, \sigma)$ and consists of all three-point distributions 
with  a  probability function $[x_1,\alpha;  x_2,s-\alpha;  x_3, 1-s]$ that satisfies  $x_1\le x_2\le d\le x_3$ and $s\in [\alpha,1]$. It is easy to see that 
 the problem $\sup_{F\in\mathcal{S}^*_3(\mu, \sigma)}{\rm CVaR}_{\alpha}((1-\beta) X^F+(\beta-\lambda)(X^F-d)_{+})$ is equivalent to problem \eqref{eq:0401-1}.

By calculating the derivative of  the objective function $f(s)$ in problem \eqref{max-f}, we see that the function $f(s)$ 
 is concave in $s\in [\alpha,1]$. Thus, problem \eqref{max-f} can be treated as     a convex minimization problem since 
 $\max_{ s \in [\alpha,1]} f(s) = -\min_{ s \in [\alpha,1]}  -f(s)$.    
   Denote by $(x_1^*,x_2^*,x_3^*, s^*)$ an optimal solution to   problem \eqref{prob-upper}, that is, $s^*$ is an optimal solution to the above problem, $x_1^* = \ell(s^*)$, $x_2^* =k(s^*) (1-\beta)+ \ell(s^*)$ and $x_1^* =k(s^*)(1-\lambda)+ \ell(s^*)$. 
Define $X$ as a  three-point random variable such that $\p(X=x_1^*) =\alpha,$ $\p(X=x_2^*) =s^*-\alpha$,  and $ \p(X=x_3^*) =1-s^*.$
   By the constraints of problem \eqref{eq:0401-1},  we have $\E[X]= \mu$ and ${\rm Var}(X)=\sigma^2$.  We next consider the following three cases.

\begin{itemize}
\item[(b1)]  If $d-\mu \le \sigma d_1$, noting that in this case,  $f(s)$ is decreasing in $[\alpha,1]$, we observe that the optimal $s^*=\alpha$, and $x_3^* = \mu+ \sigma\sqrt{\frac{\alpha}{1-\alpha}} \ge d$.  
This implies that the optimal solution is also a feasible solution to  problem \eqref{eq:0401-1} and thus is the optimal solution  to  problem \eqref{eq:0401-1}. Moreover, the maximum value of  problem \eqref{eq:0401-1} is
$$
\max\limits_{s\in [\alpha,1]} \, f(s)=f(s^*)=(\lambda-\beta)d+	
(1-\lambda)\Big  (\mu+\sigma \, \sqrt{\frac{\alpha}{1-\alpha}} \, \Big ).
$$

 \item[(b2)]  If $d-\mu \in   ( \sigma d_1,  \,  \sigma d_2 )$, noting that in this case,   $f'(\alpha)>0$ and $f'(1)<0$, by the concavity of $f(s)$, there exists a unique maximizer to $f(s)$ in $(\alpha,1)$ denoted by $s^*$. Solving the equation  $f'(s^*)=0$, we obtain  the optimal solution $$s^* = \frac{(d-\mu)\Delta(\mu, \sigma)}{2(\sigma^2+(d-\mu)^2)(\beta-\lambda)} -\frac{(1+2\alpha)(1-\beta)-(1-\lambda)}{2(\beta-\lambda)} \in (\alpha,1), $$

where $\Delta(\mu, \sigma)$ is defined in \eqref{Delta}.
% =\sqrt{\delta((\beta-\lambda)^2+4\alpha(1-\beta)(1-\lambda))}$ and $\delta=\sigma^2+(d-\mu)^2$.
 Further, we have
$$\begin{aligned}
x_3^*&=\mu+\frac{\sigma(s^*(1-\lambda)-(s^*-\alpha)(1-\beta))}{\sqrt{(s^*-\alpha) (1-\beta) ^2+ (1-s^*)   (1-\lambda)  ^2-[(s^*-\alpha)  (1-\beta)    + (1-s^*)   (1-\lambda) ]^2}}\\&=\mu+\frac{2(d-\mu)(s^*(1-\lambda)-(s^*-\alpha)(1-\beta))}{\beta(1-2s^*+2\alpha)+\lambda(2s^*-1)}\\&=d+\frac{(\beta-\lambda)(d-\mu)}{(1-\beta)(1-2s^*+2\alpha)+(1-\lambda)(2s^*-1)} \ \ge \ d,
\end{aligned}$$

where the second equality and the  inequality follow from the first-order derivative condition satisfied by $s^*$, i.e., 
\begin{equation}\notag
    \begin{aligned}
        &\sigma/\sqrt{(s^*-\alpha) (1-\beta) ^2+ (1-s^*)   (1-\lambda)  ^2-[(s^*-\alpha)  (1-\beta)    + (1-s^*)(1-\lambda)]^2}\\
        =&\frac{2(d-\mu)}{(1-\beta)(1-2s^*+2\alpha)+(1-\lambda)(2s^*-1)}\geq 0.
\end{aligned}
\end{equation}
Similarly,  it is easy to  verify that $$x_2^*= d+\frac{(\beta-\lambda)(\mu-d)}{(1-\beta)(1-2s^*+2\alpha)+(1-\lambda)(2s^*-1)} \le d.$$ This implies that the optimal solution is also a feasible solution to  problem \eqref{eq:0401-1} and thus is the optimal solution  to  problem \eqref{eq:0401-1}. Moreover, by substituting the optimal solution $(x_1^*,x_2^*,x_3^*, s^*)$ into the objective function of problem \eqref{eq:0401-1}, we obtain that 
 the maximum value of the objective function of   problem \eqref{eq:0401-1} is given by 
 $$
\begin{aligned}
%\eqref{eq:0401-1}&=
&\frac{1-s^*}{1-\alpha} (\lambda-\beta) d+ \frac{s^*-\alpha}{1-\alpha} (1-\beta) x_2^* + \frac{1-s^*}{1-\alpha} (1-\lambda) x_3^*\\
=&(1-\beta) d+\frac{(2-\lambda-\beta)(d-\mu)}{2(1-\alpha)} \times \frac{(1-s^*)(1-\lambda)-(s^*-\alpha)(1-\beta)}{(1-\beta)(1-2s^*+2\alpha)+(1-\lambda)(2s^*-1)}\\
=&(1-\beta)d+\frac{\Delta(\mu, \sigma)-(2-\lambda-\beta)(d-\mu)}{2(1-\alpha)}.
\end{aligned}
$$ 
\item[(b3)]   If $d-\mu \ge   \sigma  d_2$, noting that  in this case, $f(s)$ is increasing in $[\alpha,1]$, the optimal $s^*=1$, and $x_2^* = \mu +  \sigma\sqrt{\frac{\alpha}{1-\alpha}} \le  d.$ This solution is feasible for problem \eqref{eq:0401-1} and thus is optimal for it. Moreover, the maximum value of  problem \eqref{eq:0401-1} is
$
\max\limits_{s\in [\alpha,1]} \, f(s)=f(s^*)=(1-\beta)\big  (\mu+\sigma \, \sqrt{\frac{\alpha}{1-\alpha}} \big ). $
       Note that if $\beta=1$, then $d_2=\infty$ and thus the case of $d-\mu \ge   \sigma  d_2$ vanishes.
      \end{itemize}    
\end{itemize}
Combining the above two cases, we complete the proof.
\end{proof} 

 \begin{proof}[\textbf{Proof of Proposition \ref{f-convex}}]
  By the closed-form of $f_{\lambda, \beta}$ given by Theorem \ref{Prop-wcTVaR-institutional},  we  can verify  that function $f_{\lambda, \beta}(\mu,\sigma)$ is convex by 
  using  the standard manipulation. Here we present another  proof of this result below. To do so, 
define $h(d)=\sup_{F\in\mathcal{S}(0, 1)}{\rm CVaR}_{\alpha}((1-\beta) X^F+(\beta-\lambda)(X^F-d)_{+})$ for $d\in\R$.
Since $(X^F-d)_{+} $ is a convex function in $d \in \R$ and CVaR is a convex risk measure, we see that 
 $h$ is a convex function in $d \in \R$. Moreover,  we have 
\begin{align*} 
f_{\lambda, \beta}(\mu, \sigma)&=\sup_{F\in\mathcal{S}(\mu, \sigma)}{\rm CVaR}_{\alpha}((1-\beta) X^F+(\beta-\lambda)(X^F-d)_{+}) \\
&=\sup_{F\in\mathcal{S}(\mu, \sigma)}\sigma \, 
{\rm CVaR}_{\alpha} \left( (1-\beta) \frac{X^F-\mu}\sigma+(\beta-\lambda)\left(\frac{X^F-\mu}\sigma-\frac{d -\mu}\sigma\right)_{+} \, \right)  + (1-\beta)\mu\\
&=\sup_{F\in\mathcal{S}(0,1)}\sigma \, 
{\rm CVaR}_{\alpha} \left( (1-\beta) X^F+(\beta-\lambda)\left(X^F-\frac{d -\mu}\sigma\right)_{+} \, \right)  + (1-\beta)\mu\\
&= \sigma h\left( \frac{d -\mu}\sigma \right)  + (1-\beta)\mu.
\end{align*} 
By the convexity of $h$ and Example 3.20 in  Section 3.2.6 of \cite{boyd2004convex}, we have $(\mu,\sigma)\mapsto  \sigma h\big ( \frac{d -\mu}\sigma \big ) $ is convex,  and thus, $f_{\lambda, \beta}(\mu, \sigma)$ is convex.  This completes the proof. 
  \end{proof} 

\begin{proof}[\textbf{Proof of Lemma \ref{le-for-g(s)}}]      First, By Proposition \ref{f-convex}, we have that $f_{\lambda, \beta}(s, t)$ is convex in $(s, t) \in \R \times \R^+$. Furthermore,  by Theorem 
\ref{Prop-wcTVaR-institutional}, we have that $f_{\lambda, \beta}(s,t)$ is also increasing in $t\in \R^+$. Therefore, by the composition rule  in Section 3.2.4 of \cite{boyd2004convex},  we have 
  $g_{\lambda, \beta}(s) =f_{\lambda, \beta}(s,\sqrt{b_0s^2-2b_1s+b_2})$ is  convex  in $s \in \R$ since $ \sqrt{b_0s^2-2b_1s+b_2}= \sqrt{b_0}\times\sqrt{(s-b_1/b_0)^2+(b_2b_0-b_1^2)}$ is convex in $s \in \R$.  Next,   from Theorem 
\ref{Prop-wcTVaR-institutional}(b), we see that the function $g_{\lambda, \beta}$ has the following expression: 
\by
%  \label{gs-TVaR}
       g_{\lambda, \beta}(s)  
       =   
    \begin{cases}
     g_1(s)\triangleq (\lambda-\beta)d+(1-\lambda)\big(s+ \sqrt{b_0s^2-2b_1s+b_2}  \sqrt{\frac{\alpha}{1-\alpha}}\big), \quad &d-s\leq d_1 \sqrt{b_0s^2-2b_1s+b_2} ,\\
      g_2(s) \triangleq    (1-\beta)d+\frac{\Delta\big (s, \, \sqrt{b_0s^2-2b_1s+b_2} \big )-(2-\lambda-\beta)(d-s)}{2(1-\alpha)}, \quad &d_1 \sqrt{b_0s^2-2b_1s+b_2}<d-s \\
        & \ \ \ \ \ \ \ \  <d_2 \sqrt{b_0s^2-2b_1s+b_2},\\
    g_3(s) \triangleq     (1-\beta)\big(s+ \sqrt{b_0s^2-2b_1s+b_2}  \sqrt{\frac{\alpha}{1-\alpha}} \big), \quad &d-s\geq d_2 \sqrt{b_0s^2-2b_1s+b_2} ,
    \end{cases} 
\ey 
where the function $\Delta(s, t)$ is defined in \eqref{Delta}, and the parameters $d_1$ and $d_2$ are defined in \eqref{d1d2}. 
Thus, we can express the function $g_{\lambda, \beta}$ in the following form: 
\be
  \label{gs-TVaR}
       g_{\lambda, \beta}(s)  
       =   
    \begin{cases}
     g_1(s),  \quad &  s \in A_1 \triangleq  \{s \in \R:   d-s\leq d_1 \sqrt{b_0s^2-2b_1s+b_2}\},\\
      g_2(s),  \quad & s \in A_2  \triangleq  \{s \in \R: d_1 \sqrt{b_0s^2-2b_1s+b_2}<d-s <d_2 \sqrt{b_0s^2-2b_1s+b_2} \},\\
    g_3(s), \quad & s \in A_3  \triangleq   \{s \in \R: d-s\geq d_2 \sqrt{b_0s^2-2b_1s+b_2} \}.
    \end{cases} 
              \ee  
By differentiating, we confirm that $g_{\lambda, \beta}^{\prime \prime}(s) >0$ for all $s \in \R$. Thus, $g_{\lambda, \beta}(s)$ is strictly convex. 

Furthermore, by solving the inequalities in the sets $A_1$, $A_2$, and $A_3$, we confirm that  for  each possible relationship among $d$, $b_0$, $b_1$, and $b_2$, there exist $s_1 \le  s_2$ such that  $g_{\lambda, \beta}(s)$  coincides with exactly one of $g_1(s)$, $g_2(s)$, or $g_3(s)$ on $(-\infty, s_1)$, and with exactly one of $g_1(s)$, $g_2(s)$, or $g_3(s)$ on $(s_2,  \infty)$.   
  Note that  the values of $s_1 $ and $s_2$ depend on  the relationships  among $d$, $b_0$, $b_1$, and 
               $b_2$, 
   and  may  have different expressions for different relationships. We omit the     tedious and  detailed discussion  of these  relationships.   
  
  We now show that  the function $g_{\lambda, \beta}(s)$ is coercive on $\R$, i.e.,
$\lim_{|s|\rightarrow\infty}g_{\lambda, \beta}(s)=\infty.$ 
Consider functions $g_1$ and $g_3$ in \eqref{gs-TVaR}, we have that 
 $g_1(s) = 
 %=  (\lambda-\beta)d+(1-\lambda)\big(s+\sqrt{\frac{\alpha}{1-\alpha}}\sqrt{b_0s^2-2b_1s+b_2}\big) =
(\lambda-\beta)d+(1-\lambda)g_0(s)
$ and $g_3(s) =    
 %(1-\beta)\big(s+ \sqrt{b_0s^2-2b_1s+b_2}  \sqrt{\frac{\alpha}{1-\alpha}} \big) =
 (1-\beta) g_0(s)$, where 
$$g_0(s) = s+\sqrt{\frac{\alpha}{1-\alpha}} \, \sqrt{b_0s^2-2b_1s+b_2}=s+\sqrt{\frac{\alpha}{1-\alpha}} \, 
\sqrt{b_0\Big (s-\frac{b_1}{b_0} \Big )^2+ b_2 - \frac{b_1^2}{b_0}}, $$ 
which, together with $b_0 b_2 - b_1^2 >0$ and the  triangle inequality of $|x-y| \ge |x|-|y|$, implies   that  for any $s \in \R$,
\begin{align*}
    g_0(s)& \ > \ -|s|+ \sqrt{\frac{\alpha}{1-\alpha}} \, \sqrt{b_0}\, \Big | s-\frac{b_1}{b_0} \Big |    
  % \\
%    &\geq -|s|+\sqrt{\frac{\alpha}{1-\alpha}} \, \sqrt{b_0}\Big (|s|- \Big |\frac{b_1}{b_0} \Big | \Big )\\
  \  \ge \  \Big(-1+\sqrt{\frac{\alpha}{1-\alpha}}\sqrt{b_0}\Big)|s| - \sqrt{\frac{\alpha}{1-\alpha}} \, \sqrt{b_0}\,  \Big |\frac{b_1}{b_0} \Big |.
\end{align*}
Hence,   $g_0(s)$ is corecive, i.e.,  $\lim_{|s|\rightarrow\infty}g_0(s)=\infty $ if $-1+\sqrt{\frac{\alpha}{1-\alpha}}\sqrt{b_0} >0$ or equivalently   $b_0+1 > \frac{1}{\alpha}$. Therefore, 
both of $g_1(s)=
(\lambda-\beta)d+(1-\lambda)g_0(s)
$ and $g_3(s) = (1-\beta) g_0(s)$ are coercive  if    $b_0+1 > \frac{1}{\alpha}$.

Similarly, consider function $g_2$ in \eqref{gs-TVaR},   by the definition of  the function $\Delta(s, t)$ given  in \eqref{Delta} and denoting  $C$ by 
$
C=(2(1-\alpha))^{-1}   \sqrt{(\beta-\lambda)^2+4\alpha(1-\beta)(1-\lambda)},$   we have $C>0$ and  
\by
g_2(s) 
%&=&  (1-\beta)d+\frac{\Delta\big (s, \, \sqrt{b_0s^2-2b_1s+b_2} \big )-(2-\lambda-\beta)(d-s)}{2(1-\alpha)} \\
&=& (1-\beta)d+ C \sqrt{b_0s^2-2b_1s+b_2+(d-s)^2}
    -\frac{(2-\lambda-\beta)(d-s)}{2(1-\alpha)}\\
    &=& (1-\beta)d -   \frac{(2-\lambda-\beta)d}{2(1-\alpha)}+g^*(s),
        \ey 
   where
     \begin{align*}
    g^*(s) & =  C \sqrt{b_0s^2-2b_1s+b_2+(d-s)^2}
    +\frac{(2-\lambda-\beta)s}{2(1-\alpha)} \\
  %  &= C \sqrt{(b_0+1)s^2-2(b_1+d)s+b_2+d^2}
  %  +\frac{(2-\lambda-\beta)s}{2(1-\alpha)} \\
        &= C \sqrt{(b_0+1) \Big (s-\frac{b_1+d}{b_0+1} \Big )^2+\frac{b_0b_2-b_1^2+b_0d^2-2b_1d+b_2}{b_0+1}}+\frac{(2-\lambda-\beta)s}{2(1-\alpha)}.
        \end{align*}
 Note that $b_0b_2-b_1^2>0$, thus for all $d \in\R$, it holds that $b_0d^2-2b_1d +b_2>0$.  Thus, 
 %Together, these conditions 
%$b_0>0$ and $b_2>0$, 
%imply}
\begin{align*}
    g^*(s)  &  >  C \sqrt{(b_0+1)} \, \Big | s-\frac{b_1+d}{b_0+1} \Big |+\frac{(2-\lambda-\beta)s}{2(1-\alpha)} \\
  %  &\ge C \sqrt{(b_0+1)} \, \Big  ( | s| - \Big | \frac{b_1+d}{b_0+1} \Big | \Big )+\frac{(2-\lambda-\beta)s}{2(1-\alpha)} \\
   %   & > C   \, \sqrt{(b_0+1)\,  \Big (s-\frac{b_1+d}{b_0+1} \Big )^2}+\frac{(2-\lambda-\beta)s}{2(1-\alpha)}\\
    & \ge   \Big ( C \sqrt{(b_0+1)}-\frac{2-\lambda-\beta}{2(1-\alpha)} \Big ) |s| - C \sqrt{(b_0+1)} \,  \Big | \frac{b_1+d}{b_0+1} \Big |.
 %   &=\Big(\frac{\sqrt{(\beta-\lambda)^2+4\alpha(1-\beta)(1-\lambda)}}{2(1-\alpha)}\sqrt{(b_0+1)}-\frac{2-\lambda-\beta}{2(1-\alpha)}\Big)|s|.
\end{align*}
Hence, $g^*(s)$ is coercive, i.e.,  $\lim_{|s|\rightarrow\infty}g^*(s)=\infty $ if  $C \sqrt{(b_0+1)}-\frac{2-\lambda-\beta}{2(1-\alpha)}>0$ or equivalently 
 $b_0+1>\frac{(2-\lambda-\beta)^2}{(\beta-\lambda)^2+4\alpha(1-\beta)(1-\lambda)}$. Therefore, $g_2(s) =(1-\beta)d -   \frac{(2-\lambda-\beta)d}{2(1-\alpha)}+g^*(s) $ is coercive if  $b_0+1>\frac{(2-\lambda-\beta)^2}{(\beta-\lambda)^2+4\alpha(1-\beta)(1-\lambda)}$.

In conclusion, as long as both  conditions   $b_0+1 > \frac{1}{\alpha}$     and $b_0+1>\frac{(2-\lambda-\beta)^2}{(\beta-\lambda)^2+4\alpha(1-\beta)(1-\lambda)}$ hold, or equivalently the condition in  \eqref{cons-for-g} holds, 
 it follows   that $\lim_{|s|\rightarrow\infty}g_{\lambda, \beta}(s)=\infty$, i.e., $g_{\lambda, \beta}(s)$ is coercive. Therefore, the strictly convex function 
  $g_{\lambda, \beta}(s)$ has a unique minimizer on $\R$. 
   \end{proof}

\begin{proof}[\textbf{Proof of Theorem \ref{CVaR-MV-PR}}]
By  Lemma \ref{le:A=B} and the definition of the function $\ell_d$,   we have that the inner optimization problem in \eqref{portfolio CVaR} has the following equivalence:
\begin{equation}
\label{port-tvar-f}
\begin{aligned}
     &\sup_{G \in \mathcal{M}(\boldsymbol{\mu},\boldsymbol{\Sigma})}{\rm CVaR}_{\alpha} 
     \big((1-\beta)\boldsymbol{\omega}^{\top}\boldsymbol{X}^G+(\beta-\lambda)(\boldsymbol{\omega}^{\top}\boldsymbol{X}^G-d)_{+}\big)
     \\
     = &\sup_{F \in \mathcal{S}(\boldsymbol{\omega}^{\top}\boldsymbol{\mu}, \ \boldsymbol{\omega}^{\top}\boldsymbol{\Sigma} \boldsymbol{\omega})}{\rm CVaR}_{\alpha}\big(\ell_{d}(X^F)\big) 
     =  \ f_{\lambda, \beta}(\boldsymbol{\omega}^{\top}\boldsymbol{\mu}, \ \boldsymbol{\omega}^{\top}\boldsymbol{\Sigma} \boldsymbol{\omega}), 
  \end{aligned}
\end{equation}                                                                
where the function $f_{\lambda, \beta}(s, t)$ is defined in \eqref{TVaR-f} and   solved in Theorem 
\ref{Prop-wcTVaR-institutional}(b).
 Thus, problem \eqref{port-tvar-f} simplifies to the following 
finite-dimensional optimization problem: 
\begin{equation*}
    \begin{aligned}
        &\min_{\boldsymbol{\omega}\in\mathbb{R}^n}\left\{f_{\lambda, \beta}(\boldsymbol{\omega}^{\top}\boldsymbol{\mu},\sqrt{\boldsymbol{\omega}^{\top}\boldsymbol{\Sigma}\boldsymbol{\omega}}):\boldsymbol{\omega}^{\top}\boldsymbol{e}=1\right\}
        =\min_{s\in\mathbb{R}}\min_{\boldsymbol{\omega}}\left\{f_{\lambda, \beta}(s,\sqrt{\boldsymbol{\omega}^{\top}\boldsymbol{\Sigma}\boldsymbol{\omega}}):\boldsymbol{\omega}^{\top}\boldsymbol{e}=1, \boldsymbol{\omega}^{\top}\boldsymbol{\mu}=s\right\}.
    \end{aligned}
\end{equation*}

From Theorem \ref{Prop-wcTVaR-institutional}(b), we see that 
for any given $s$, the function $f_{\lambda, \beta}(s, t)$ is increasing in $t$, and 
$\sqrt{\boldsymbol{\omega}^{\top}\boldsymbol{\Sigma}\boldsymbol{\omega}}$ is increasing in 
$\boldsymbol{\omega}^{\top}\boldsymbol{\Sigma}\boldsymbol{\omega}$,  thus, the minimizer $\boldsymbol{\omega}^*_s$ for the problem 
 $\min \{f_{\lambda, \beta}(s,\sqrt{\boldsymbol{\omega}^{\top}\boldsymbol{\Sigma}\boldsymbol{\omega}}):\boldsymbol{\omega}^{\top}\boldsymbol{e}=1,\boldsymbol{\omega}^{\top}\boldsymbol{\mu}=s\} $ is reduced to 
\be
        \boldsymbol{\omega}^*_s
      &=& \argmin_{\boldsymbol{\omega}}\left\{\boldsymbol{\omega}^{\top}\boldsymbol{\Sigma}\boldsymbol{\omega}:\boldsymbol{\omega}^{\top}\boldsymbol{e}=1,\boldsymbol{\omega}^{\top}\boldsymbol{\mu}=s\right\} 
           =  \begin{pmatrix}
            \boldsymbol{\Sigma}^{-1}\boldsymbol{\mu} & \boldsymbol{\Sigma}^{-1}\boldsymbol{e}
        \end{pmatrix}\begin{pmatrix}
            b_0 & -b_1\\-b_1 & b_2
        \end{pmatrix}\begin{pmatrix}
            s \\ 1
        \end{pmatrix}, \label{w-solution-MV-model-TVaR}
    \ee
  where expression \eqref{w-solution-MV-model-TVaR} follows from Lemma \ref{sol-MV-model}.

Note that the  variance of the portfolio loss  under the allocation $ \boldsymbol{\omega}^*_s$ is  ${\boldsymbol{\omega}^*_s}^{\top}\boldsymbol{\Sigma}\boldsymbol{\omega}^*_s=b_0s^2-2b_1s+b_2$, which is a function of $s$. Therefore, the optimal solution  $ \boldsymbol{\omega}^*$ to the original problem reduces to 
$ \boldsymbol{\omega}^*_{s^*}$, where $s^*$ is the solution to  the following single-variable optimization problem: 
$$
\min_{s\in\mathbb{R}} f_{\lambda, \beta}(s,\sqrt{b_0s^2-2b_1s+b_2}),
$$
 and   given by
$
        s^*=  \argmin_{s\in\mathbb{R}} f_{\lambda, \beta}(s,\sqrt{b_0s^2-2b_1s+b_2})  =   \argmin_{s\in\mathbb{R}} g_{\lambda, \beta}(s).  
        $
%   where    
%  $   g_{\lambda, \beta}(s)  =   f_{\lambda, \beta}(s,\sqrt{b_0s^2-2b_1s+b_2})$.
        By Lemma \ref{le-for-g(s)}, we have that $g_{\lambda, \beta}(s)$ is coercive  and strictly convex in $s \in \R$. Thus, $g_{\lambda, \beta}(s)$ has a unique minimizer  $s^*$ 
       on $\R$. In fact, $s^*$ is the unique solution to the equation of $g_{\lambda, \beta}^\prime (s)=0$.       
         \end{proof}

\begin{proof}[\textbf{Proof of Theorem \ref{TVaR-MV-PR-a=0}}]
 Note  that $1-\lambda= (\kappa+1)(1-\beta)$ and that $d_1 \to -\infty$ and  $d_2 \to \infty$ as $\alpha \to 0$. Thus, by 
Theorem \ref{Prop-wcTVaR-institutional}(b), we see that as $\alpha \to 0$, the function $g_{\lambda, \beta}(s)$ defined in Theorem \ref{sol-MV-model}  simplifies to 
the expression of the function $g_2(s)$ in \eqref{gs-TVaR}  for   any $s \in \R$, namely, for any $s \in \R$,   
\by
  g_{\lambda, \beta}(s)  &=& g_2(s) =     (1-\beta)d+\frac{\Delta\big (s, \, \sqrt{b_0s^2-2b_1s+b_2} \big )-(2-\lambda-\beta)(d-s)}{2} \\
  &=&   (1-\beta)d+\frac{\sqrt{(b_0s^2-2b_1s+b_2+(d-s)^2)(\beta-\lambda)^2} -(2-\lambda-\beta)(d-s)}{2} \\
  &=& (1-\beta)d+\frac{(1-\beta)\kappa \sqrt{(b_0s^2-2b_1s+b_2+(d-s)^2}}{2}  -\frac{ ((\kappa+1)(1-\beta) 
   + 1-\beta)(d-s)}{2}.
     \ey
 By the proof of Lemma  \ref{le-for-g(s)}, we see that $ g_{\lambda, \beta}(s)=g_2(s) $ is coercive and strictly convex if  $b_0+1>\frac{(2-\lambda-\beta)^2}{(\beta-\lambda)^2}$  or equivalently  $\kappa > \kappa^*$.   Thus, $g_{\lambda, \beta}(s)$ has a unique minimizer on $\R$, which is the unique solution to 
 the equation  of $g_{\lambda, \beta}^\prime(s)=0$, which yields  the expression of  $s^*$ given in   \eqref{s*}.
     \end{proof} 

\begin{proof}[\textbf{Proof of Theorem \ref{th:ROPS-TVaR}}] 
First, for the inner maximization problem in \eqref{CVaR-D}, we have
\be
\label{model2of-TVaR}
        & & \sup_{G \in \mathbb{D}_n(\gamma_1,\gamma_2)}{\rm CVaR}_{\alpha}((1-\beta)\boldsymbol{\omega}^{\top}\boldsymbol{X}^G+(\beta-\lambda)(\boldsymbol{\omega}^{\top}\boldsymbol{X}^G-d)_{+}) \nonumber \\
        &=&\sup_{(\boldsymbol{\mu},\boldsymbol{\Sigma})\in \mathcal{U}_{(\hat{\boldsymbol{\mu}},\hat{\boldsymbol{\Sigma}})}} \sup_{G \in \mathcal{M}(\boldsymbol{\mu},\boldsymbol{\Sigma})} {\rm CVaR}_{\alpha}((1-\beta)\boldsymbol{\omega}^{\top}\boldsymbol{X}^G+(\beta-\lambda)(
        \boldsymbol{\omega}^{\top}\boldsymbol{\mu}-d)_{+}) \nonumber  \\
        &=&\sup_{(\boldsymbol{\mu},\boldsymbol{\Sigma})\in \mathcal{U}_{(\hat{\boldsymbol{\mu}},\hat{\boldsymbol{\Sigma}})}} \sup_{F \in  \mathcal S(\boldsymbol{\omega}^{\top}\boldsymbol{\mu}, \ 
       \boldsymbol{\omega}^{\top}\boldsymbol{\Sigma}\boldsymbol{\omega})}
       {\rm CVaR}_{\alpha}((1-\beta) X^F +(\beta-\lambda)(
        X^F-d)_{+})       
       \nonumber \\
       & =& \sup_{(\boldsymbol{\mu},\boldsymbol{\Sigma})\in \mathcal{U}_{(\hat{\boldsymbol{\mu}},\hat{\boldsymbol{\Sigma}})}} 
       f_{\lambda, \beta}(\boldsymbol{\omega}^{\top}\boldsymbol{\mu}, \ 
       {\color{black} \sqrt{\boldsymbol{\omega}^{\top}\boldsymbol{\Sigma}\boldsymbol{\omega}}}) 
    \nonumber \\
        &=& \sup_{\boldsymbol{\mu}\in\mathcal{U}_{\hat{\boldsymbol{\mu}}}} \sup_{\boldsymbol{\Sigma}\in \mathcal{U}_{\hat{\boldsymbol{\Sigma}}}} 
        f_{\lambda, \beta}(\boldsymbol{\omega}^{\top}\boldsymbol{\mu}, \ {\color{black} \sqrt{\boldsymbol{\omega}^{\top}\boldsymbol{\Sigma}\boldsymbol{\omega}}}), 
    \ee
   where the second equality follows from  Lemma \ref{le:A=B}. 
    
    From Theorem \ref{Prop-wcTVaR-institutional}(b), we see that for any given  $ \boldsymbol{\mu}\in\mathcal{U}_{\hat{\boldsymbol{\mu}}}$, 
    the function  $ f_{\lambda, \beta}(\boldsymbol{\omega}^{\top}\boldsymbol{\mu}, \ 
       {\color{black}\sqrt{\boldsymbol{\omega}^{\top}\boldsymbol{\Sigma}\boldsymbol{\omega}}})$ is increasing in  $\boldsymbol{\omega}^{\top}\boldsymbol{\Sigma}\boldsymbol{\omega}$. Thus,  
       \by
        \sup_{\boldsymbol{\Sigma}\in \mathcal{U}_{\hat{\boldsymbol{\Sigma}}}} 
        f_{\lambda, \beta}(\boldsymbol{\omega}^{\top}\boldsymbol{\mu}, \ 
       {\color{black}\sqrt{\boldsymbol{\omega}^{\top}\boldsymbol{\Sigma}\boldsymbol{\omega}}}) &=&
          f_{\lambda, \beta}(\boldsymbol{\omega}^{\top}\boldsymbol{\mu}, \ 
        \sup_{\boldsymbol{\Sigma}\in \mathcal{U}_{\hat{\boldsymbol{\Sigma}}}} {\color{black}\sqrt{\boldsymbol{\omega}^{\top}\boldsymbol{\Sigma} 
        \boldsymbol{\omega}}}) \\
        &=& f_{\lambda, \beta}(\boldsymbol{\omega}^{\top}\boldsymbol{\mu}, \ 
        {\color{black}\sqrt{\boldsymbol{\omega}^{\top}\boldsymbol{\Sigma}^*
        \boldsymbol{\omega}}})=f_{\lambda, \beta}(\boldsymbol{\omega}^{\top}\boldsymbol{\mu}, \ 
      {\color{black}\sqrt{\boldsymbol{\omega}^{\top}(\hat{\boldsymbol{\Sigma}}+\gamma_2 \boldsymbol{I}_n)\boldsymbol{\omega}}}),            
    \ey
where $\boldsymbol{\Sigma}^*$ is given in  \eqref{maxer-Sigma} and $\boldsymbol{I}_n$ is an $n \times n$ identity matrix. 

Therefore, 
problem \eqref{model2of-TVaR} simplifies to the following  optimization problem: 
\be
\label{one-sum-g}
   \sup_{\boldsymbol{\mu}\in\mathcal{U}_{\hat{\boldsymbol{\mu}}}}  
        f_{\lambda, \beta}(\boldsymbol{\omega}^{\top}\boldsymbol{\mu}, \ 
       {\color{black}\sqrt{\boldsymbol{\omega}^{\top} {\boldsymbol{\Sigma}}^*     \boldsymbol{\omega}}}) =  \sup_{\boldsymbol{\mu}\in\mathcal{U}_{\hat{\boldsymbol{\mu}}}} f_{\lambda, \beta}(\boldsymbol{\omega}^{\top}\boldsymbol{\mu}, \ 
      {\color{black}\sqrt{\boldsymbol{\omega}^{\top}(\hat{\boldsymbol{\Sigma}}+\gamma_2 \boldsymbol{I}_n)\boldsymbol{\omega} }} 
    ). 
       \ee

From the expressions of $f_{\lambda, \beta}(\mu, \sigma)$ given in Theorem \ref{Prop-wcTVaR-institutional},   it is easy to see that for a given $\sigma >0$, 
the function $f_{\lambda, \beta}(\mu, \sigma)$ is increasing in $\mu \in \R$. Thus,  $f_{\lambda, \beta}(\boldsymbol{\omega}^{\top}\boldsymbol{\mu}, \ 
      {\color{black}\sqrt{\boldsymbol{\omega}^{\top}(\hat{\boldsymbol{\Sigma}}+\gamma_2 \boldsymbol{I}_n)\boldsymbol{\omega}}} )$ is increasing in  $\boldsymbol{\omega}^{\top}\boldsymbol{\mu} \in \R$. 
      Therefore, 
      \be
      \label{sup-g=g}
    \sup_{\boldsymbol{\mu}\in\mathcal{U}_{\hat{\boldsymbol{\mu}}}}f_{\lambda, \beta}(\boldsymbol{\omega}^{\top}\boldsymbol{\mu}, \ 
      {\color{black}\sqrt{\boldsymbol{\omega}^{\top}(\hat{\boldsymbol{\Sigma}}+\gamma_2 \boldsymbol{I}_n)\boldsymbol{\omega}}}) 
      =f_{\lambda, \beta} \big (\sup_{\boldsymbol{\mu}\in\mathcal{U}_{\hat{\boldsymbol{\mu}}}}\boldsymbol{\omega}^{\top}\boldsymbol{\mu}, \ 
      {\color{black}\sqrt{\boldsymbol{\omega}^{\top}(\hat{\boldsymbol{\Sigma}}+\gamma_2 \boldsymbol{I}_n)\boldsymbol{\omega}}} \big ).   
       \ee
  Note that  $\sup_{\boldsymbol{\mu}\in\mathcal{U}_{\hat{\boldsymbol{\mu}}}} \boldsymbol{\omega}^{\top}\boldsymbol{\mu}=-\min_{\boldsymbol{\mu}\in\mathcal{U}_{\hat{\boldsymbol{\mu}}}}-\boldsymbol{\omega}^{\top}\boldsymbol{\mu}.$
 Therefore, it follows from \eqref{sup-g=g} and  \eqref{optimalmu} that 
 \by
    \sup_{\boldsymbol{\mu}\in\mathcal{U}_{\hat{\boldsymbol{\mu}}}} f_{\lambda, \beta}(\boldsymbol{\omega}^{\top}\boldsymbol{\mu}, \ 
      {\color{black}\sqrt{\boldsymbol{\omega}^{\top}(\hat{\boldsymbol{\Sigma}}+\gamma_2 \boldsymbol{I}_n)\boldsymbol{\omega}}}) 
      &= & f_{\lambda, \beta}\big (\boldsymbol{\omega}^{\top}\boldsymbol{\mu}^*, \ 
      {\color{black}\sqrt{\boldsymbol{\omega}^{\top}(\hat{\boldsymbol{\Sigma}}+\gamma_2 \boldsymbol{I}_n)\boldsymbol{\omega}}} \big ) \\
     &= & f_{\lambda, \beta}\big (\boldsymbol{\omega}^{\top} \hat{\boldsymbol{\mu}} + \sqrt{\gamma_1}\sqrt{\boldsymbol{\omega}^{\top}\hat{\boldsymbol{\Sigma}}{\boldsymbol{\omega}}}, \ 
      {\color{black}\sqrt{\boldsymbol{\omega}^{\top}(\hat{\boldsymbol{\Sigma}}+\gamma_2 \boldsymbol{I}_n)\boldsymbol{\omega}}} \big ), 
           \ey
  which, together with \eqref{min-cons}, implies that  problem  \eqref{CVaR-D}  simplifies to the following finite-dimensional optimization problem: 
       \begin{equation*}
\label{CVaR-D-simple}
    \begin{aligned}
        \min_{\boldsymbol{\omega}\in\mathbb{R}^n}&    \  f_{\lambda, \beta}\big (\boldsymbol{\omega}^{\top} \hat{\boldsymbol{\mu}} + \sqrt{\gamma_1}\sqrt{\boldsymbol{\omega}^{\top}\hat{\boldsymbol{\Sigma}}{\boldsymbol{\omega}}}, \ 
      {\color{black}\sqrt{\boldsymbol{\omega}^{\top}(\hat{\boldsymbol{\Sigma}}+\gamma_2 \boldsymbol{I}_n)\boldsymbol{\omega}}} \big ) \\
         s.t. &  \, -\hat{\boldsymbol{\mu}}^{\top}\hat{\boldsymbol{\omega}}-\sqrt{\gamma_1}\sqrt{\boldsymbol{\omega}^{\top} \hat{\boldsymbol{\Sigma}}{\boldsymbol{\omega}}}  \geq r,\ \  \ \  \boldsymbol{\omega}^{\top}\boldsymbol{e}=1.
    \end{aligned}
\end{equation*}
Note that the function $f_{\lambda, \beta}(x,y)$  defined in \eqref{TVaR-f}  is increasing in $(x,y)$. Thus,  for any $a_0, x, y\in\mathbb{R}$, we can write 
$f_{\lambda, \beta}(a_0+x,y)=\min_{s\geq x, \, t\geq y}f_{\lambda, \beta}(a_0+s,t).$
Therefore, by introducing auxiliary variables $s,t \in\mathbb{R}^+$,  we can reformulate this optimization problem  \eqref{CVaR-D-simple} as problem \eqref{TVaR-convex-expression}. 

The objective function   $f_{\lambda, \beta}\big (\boldsymbol{\omega}^{\top} \hat{\boldsymbol{\mu}} + \sqrt{\gamma_1} \,  t, \  s \big ) $  is  a convex function of $(\bm{\omega} ,s,t)$,  since  the function $f_{\lambda, \beta}(\mu,\sigma)$ 
 is convex in $(\mu,\sigma)\in\R\times\R^+$ by Proposition \ref{f-convex}. Moreover,  the convexity is preserved under an affine mapping (see, 
 for example,   Section 3.2.2 of \cite{boyd2004convex}).   Thus this completes the proof. 
    \end{proof}

\begin{proof}[\textbf{Proof of Proposition \ref{Pro-5.1}}]
 By letting $\alpha \to 0$ in Theorem  \ref{Prop-wcTVaR-institutional}, we note that $d_1 \to -\infty$, 
$d_2 \to \infty$, and that $\Delta(\mu,\sigma) \to (\beta-\lambda) \sqrt{\sigma^2+(d-\mu)^2}  $  as $\alpha \to 0$ for $\beta > \lambda$  in \eqref{gb2}. As a result, 
we observe that   
the function $
f_{\lambda, \beta}\big (\boldsymbol{\omega}^{\top} \hat{\boldsymbol{\mu}} + \sqrt{\gamma_1} \,  t, \  s \big )$ 
in Theorem \ref{th:ROPS-TVaR} simplifies to 
\by
&& f_{\lambda, \beta}\big (\boldsymbol{\omega}^{\top} \hat{\boldsymbol{\mu}} + \sqrt{\gamma_1} \,  t, \  s \big ) \\
&=& (1-\beta) d+\frac{ \Delta \big (\boldsymbol{\omega}^{\top} \hat{\boldsymbol{\mu}} + \sqrt{\gamma_1} \,  t, \  s \big ) 
 -(2-\lambda-\beta)(d- (\boldsymbol{\omega}^{\top} \hat{\boldsymbol{\mu}} + \sqrt{\gamma_1} \,  t))}{2} \\
&=& (1-\beta) d+\frac{(\beta-\lambda) \sqrt{s^2+(\boldsymbol{\omega}^{\top} \hat{\boldsymbol{\mu}} + \sqrt{\gamma_1} \,  t-d)^2}
 -(2-\lambda-\beta)(d- (\boldsymbol{\omega}^{\top} \hat{\boldsymbol{\mu}} + \sqrt{\gamma_1} \,  t))}{2} \\
 &=& (1-\beta) \Big ((1+\frac{\kappa}{2})(\hat{\boldsymbol{\mu}}^{\top}\boldsymbol{\omega}+\sqrt{\gamma_1}t)-\frac{\kappa}{2}d+\frac{\kappa}{2}\sqrt{s^2+(\hat{\boldsymbol{\mu}}^{\top}\boldsymbol{\omega}+\sqrt{\gamma_1}t-d)^2} \,  \Big), 
 \ey
where the last equality follows from  $1-\lambda= (\kappa+1)(1-\beta)$ and $\beta-\lambda = (1-\beta) \kappa$. Since $1-\beta$ is a constant and  $ f_{\lambda, \beta} \big (\boldsymbol{\omega}^{\top} \hat{\boldsymbol{\mu}} + \sqrt{\gamma_1} \,  t, \  s \big )$ is proportional to 
 $(1+\frac{\kappa}{2})(\hat{\boldsymbol{\mu}}^{\top}\boldsymbol{\omega}+\sqrt{\gamma_1}t)-\frac{\kappa}{2}d+\frac{\kappa}{2}\sqrt{s^2+(\hat{\boldsymbol{\mu}}^{\top}\boldsymbol{\omega}+\sqrt{\gamma_1}t-d)^2}$, Theorem \ref{th:ROPS-TVaR} implies that  
 problem \eqref{E-D} can be solved by solving problem \eqref{prop 1}. To verify that this problem is a convex optimization problem, first note that $(\boldsymbol{\omega},s,t)\mapsto\sqrt{s^2+(\hat{\boldsymbol{\mu}}^{\top}\boldsymbol{\omega}+\sqrt{\gamma_1}t-d)^2}$ is a convex function. Moreover, all functions on the left-hand side of the inequality constraints are convex. 
  This follows from the fact that $({\bf x},b)\mapsto h({\bf a}^\top {\bf x}+b)$ is convex for any ${\bf a}\in\R^n$ whenever $h$ is convex, since convexity is preserved under affine mappings (see e.g. Section 3.2.2 of \cite{boyd2004convex}), and the $\ell_2$-norm $(x_1,x_2)\mapsto \sqrt{x_1^2+x_2^2}$ is  convex.
 \end{proof}

\end{document}